\newtheorem{theorem}{Theorem}
\newtheorem{corollary}[theorem]{Corollary}
\newtheorem{definition}[theorem]{Definition}
\newtheorem{lemma}[theorem]{Lemma}
\newtheorem{claim}[theorem]{Claim}
\newcommand{\comment}[1]{}
\newcommand{\abs}[1]{\left\lvert#1\right\rvert}
\newcommand{\norm}[1]{\left\lVert#1\right\rVert}
\newcommand{\idty}[1]{\mathbb{1}}
\newcommand{\ovsqrt}[1]{\frac{1}{\sqrt{2}}}
\newcommand{\gradtheta}[1]{\frac{\partial #1}{\partial \theta}}
\newcommand{\tr}[1]{\mathrm{Tr} \left[#1\right]}
\newcommand{\pr}{{\rm Pr}}
\newcommand{\partr}[2]{\mathrm{Tr}_{#1} \left[#2\right]}
\newcommand{\trh}[1]{\mathrm{Tr}_h \left[#1\right]}
\title{Generative training of quantum Boltzmann machines with hidden units}
\author{%
    Nathan Wiebe\\
    Microsoft Research,\\
    \texttt{nawiebe@microsoft.com}
    \And
    Leonard Wossnig\thanks{corresponding author}\\
    University College London, Rahko,\\
    \texttt{leonard.wossnig@cs.ucl.ac.uk} \\
}
\begin{document}

\maketitle

\begin{abstract}
In this article we provide a method for fully quantum generative training of quantum Boltzmann machines with both visible and hidden units while using quantum relative entropy as an objective.
This is significant because prior methods were not able to do so due to mathematical challenges posed by the gradient evaluation.
We present two novel methods for solving this problem.
The first proposal addresses it, for a class of restricted quantum Boltzmann machines with mutually commuting Hamiltonians on the hidden units, by using a variational upper bound on the quantum relative entropy.
The second one uses high-order divided difference methods and linear-combinations of unitaries to approximate the exact gradient of the relative entropy for a generic quantum Boltzmann machine.
Both methods are efficient under the assumption that Gibbs state preparation is efficient and that the Hamiltonian are given by a sparse row-computable matrix.
\end{abstract}

\section{Introduction}
The objective of quantum machine learning is to understand the ability of agents to learn in quantum mechanical settings~\cite{biamonte2017quantum,ciliberto2018quantum,servedio2004equivalences,arunachalam2018optimal,lloyd2013quantum,wiebe2019language}.
One aspect that obstructs this goal is the fact that quantum state vectors lie in an exponentially large vector space~\cite{mcclean2018barren,schuld2018circuit}.
Owing to the size of these vectors, generative models play a central role in quantum machine learning as they can be used to give concise descriptions of these complicated quantum states~\cite{kieferova2016tomography,schuld2019quantum,romero2017quantum,benedetti2019adversarial}.
Various approaches have been put forward to solve this problem~\cite{kieferova2016tomography,romero2017quantum,kappen2018learning,amin2018quantum,crawford2016reinforcement},
but to date all proposed solutions suffer from problems such as requiring classical input data, yielding exponentially small gradients, or an inability to learn with hidden units.
Here we present a new approach to training quantum Boltzmann machines~\cite{kieferova2016tomography,benedetti2017quantum,amin2018quantum} that resolves all these problems, and therefore addresses a major open problem in quantum machine learning.


Just as for the classical case, i.e., generative training on classical computers, the main goal in quantum generative training is to build a model that allows to sample from a distribution over quantum state vectors that mimics some training distribution.
The natural analog of such a quantum training set would be a density operator which we denote $\rho$, which is a positive semi-definite trace-$1$ Hermitian matrix that (roughly speaking) describes a probability distribution over quantum state vectors.
The goal in quantum generative training is to find a process $V$ by sampling from $\rho$ such that $V: \ket{0} \mapsto \sigma$ (where $\ket{\cdot}$ is a quantum state vector, i.e., just a unit norm vector) such that some chosen distance measure, e.g., $\|\rho -\sigma\|_1$, is small.
Such a task corresponds, in quantum information language, to partial tomography~\cite{kieferova2016tomography} or approximate cloning~\cite{chefles1999strategies}.

Boltzmann machines, a physics inspired class of neural network~\cite{aarts1988simulated,salakhutdinov2007restricted,tieleman2008training,le2008representational,salakhutdinov2009deep,salakhutdinov2010efficient,lee2009convolutional,hinton2012practical},
have found numerous applications over the last decade~\cite{lee2009unsupervised,lee2009convolutional,mohamed2011acoustic,srivastava2012multimodal}.
have recently gained popularity as a method to model complex quantum systems~\cite{carleo2017solving,torlai2016learning,nomura2017restricted}.
One of the features of Boltzmann machines is that they directly resemble the quantum physics that is inherent in a quantum computer.
In particular, a Boltzmann machine provides an energy for every configuration of a system and then generates samples from a distribution with probabilities that vary exponentially with this energy.
Indeed the distribution it represents is just the canonical ensemble in statistical physics.
The explicit model in this case is
\begin{equation}
  \label{eq:hidden_unit_QBM}
\sigma_v(H) = {\rm Tr}_h \left(\frac{e^{-H}}{Z} \right)= \frac{{\rm Tr}_h~ e^{-H}}{{\rm Tr}~ e^{-H}},
\end{equation}
where ${\rm Tr}_h(\cdot)$ is the partial trace over an auxillary sub-system known as the Hidden subsystem which serves to build correlations between the ``visible'' system.
For classical Boltzmann machines $H$ is an energy function, i.e., a diagonal matrix, but for quantum Boltzmann machines it is a Hermitian matrix known as the Hamiltonian, which has off-diagonal entries.
Notably, if we want to simulate a quantum system then the dimension of $H$ grows exponentially with the number of units $n$ in the sytem, i.e., $H \in \mathbb{C}^{2^n \times 2^n}$.
Thus the goal of generative quantum Boltzmann training is to find a set of parameters $\theta$ (also called weights) that specify a Hamiltonian such that $H = {\rm argmin}_H \left({\rm dist}(\rho,\sigma_v(H))\right)$ for an appropriate distance, or divergence, function.
As an example, a quantum analogue of an all-visible Boltzmann machine with $n_v$ units could take the form
\begin{equation}
    H(\theta) = \sum_{n=1}^{n_v} \theta_{2n-1} {\sigma_x}^{(n)} + \theta_{2n} \sigma_z^{(n)} + \sum_{n>n'} \theta_{(n,n')} \sigma_z^{(n)} \sigma_z^{(n')}.
\end{equation}
Here $\sigma_z^{(n)}$ and $\sigma_x^{(n)}$ are Pauli matrices acting on qubit (unit) $n$.

For generative training on classical data the natural divergence to use between the input and output distributions is the KL divergence.
In the case where the input is a quantum state the natural notion of distance changes to the quantum relative entropy:
\begin{equation}
  \label{eq:quant_rel_ent}
S(\rho | \sigma_v) = {\rm Tr}\left( \rho \log \rho \right) - {\rm Tr}\left( \rho \log \sigma_v \right),
\end{equation}
which reduces to the KL divergence if $\rho$ and $\sigma_v$ are diagonal matrices and is zero if and only if $\rho = \sigma_v$.

While the relative entropy is generally difficult to compute, the gradient of the relative entropy is straight forward to compute for Boltzmann machines with all visible units.
This is straight forward because in such cases $\sigma_v = e^{-H}/Z$ and the fact that $\log(e^{-H}/Z) = -H -\log(Z)$ allows the matrix derivatives to be easily computed.
However, no methods are known for the generative training of Boltzmann machines for the quantum relative entropy loss function, if hidden units are present.
This is because the partial trace in $\log({\rm Tr}_h e^{-H} /Z)$ prevents us from simplifying the logarithm term when computing the gradient.

\paragraph{Our Contribution:}
In this work we provide practical methods for training generic quantum Boltzmann machines that have both hidden as well as visible units.
We provide two new approaches for achieving this.
The first, and more efficient of the two, works by assuming a special form for the Hamiltonian that allows us to find a variational upper bound on the quantum relative entropy.
Using this upper bound, the derivatives are easy to compute.
The second, and more general method uses recent techniques from quantum simulation to approximate the exact expression for the gradient using Fourier series approximations and high-order divided difference formulas in place of the analytic derivative.
Both methods are efficient, given that Gibbs state preparation is efficient which we expect to hold in most practical cases of Boltzmann training, although it is worth noting that efficient Gibbs state preparation in general would imply $\mathrm{QMA}\subseteq \mathrm{BQP}$ which is unlikely to hold.

\section{Training Boltzmann Machines}
We now present methods for training quantum Boltzmann machines.\\
The quantum relative entropy cost function for a quantum Boltzmann machine (QBM) with hidden units is given by
\begin{equation}
  \label{eq:obj_QBM_hidden}
 \mathcal{O}_{\rho}(H) = S \left(\rho \Big| \partr{h}{ e^{-H}/\tr{e^{-H}}} \right),
\end{equation}
where $S(\rho | \sigma_v)$ is the quantum relative entropy as defined in eq.~\ref{eq:quant_rel_ent}.
Note that we can add a regularization term to the quantum relative entropy to penalize unnecessary quantum correlations in the model~\cite{kieferova2016tomography}.
In this work, we generally aim to train the QBM with a gradient-based method.
For this we are require to evaluate the gradient of the cost function, and hence the gradient of the quantum relative entropy.

In the case of an all-visible Boltzmann machine (which corresponds to $\text{dim}(\mathcal{H}_h)=1$) a closed form expression for the gradient of the quantum relative entropy is known:
\begin{equation}
 \frac{\partial \mathcal{O}_{\rho}(H)}{\partial \theta} = -\tr{ \frac{\partial}{\partial \theta} \rho \log \sigma},
\end{equation}
which can be simplified using $\log(\exp(-H)) = -H$ and Duhamels formula to obtain the following equation for the gradient, denoting $\partial_{\theta} := \partial/\partial \theta$,
\begin{equation}
  \tr{\rho \partial_{\theta} H} - \tr{e^{-H}\partial_{\theta} H}/\tr{e^{-H}}.
\end{equation}
However, the above gradient formula is not generally valid, and indeed does not hold if we include hidden units.
Allowing for these, we need to additionally trace out the subsystem which results in the majorised distribution from eq.~\ref{eq:hidden_unit_QBM}.
This also changes the cost function which takes then the form described in eq.~\ref{eq:obj_QBM_hidden}.
Note that $H=H(\theta)$ is depending on the variables we will alter during the training process, while $\rho$ is the target density matrix, i.e., the input data.
Therefore, if we want to estimate the gradient of the above, omitting the $\tr{\rho \log \rho}$ since it is a constant, we obtain
\begin{equation}
  \label{eq:grad_QBM_hidden_units}
 \frac{\partial \mathcal{O}_{\rho}(H)}{\partial \theta} = -\tr{ \frac{\partial}{\partial \theta} \rho \log \sigma_v},
\end{equation}
for the gradient of the objective function.

In the following, we discuss two different approaches for evaluating the gradient in eq.~\ref{eq:grad_QBM_hidden_units}.
While the first is less general, it gives an easy implementable algorithm and strong bounds based on optimizing a variational bound.
The second approach is on the other hand applicable to any problem instance, and hence a general purpose gradient optimisation algorithm for relative entropy training.
The no-free-lunch theorem suggests that no (good) bounds can be obtained without assumptions on the problem instance, and indeed, the general algorithm exhibits potentially exponentially worse complexity.
However, for many practical applications we assume that this will not be the case, and in particularly, the result presented gives a generally applicable algorithm for training quantum Boltzmann machines on a quantum device, which is to the best of our knowledge the first known result of this kind.

\subsection{Variational training for restricted Hamiltonians}

Our first approach is based on optimizing a variational bound of the objective function, i.e., the quantum relative entropy, in a restricted - but still practical setting.
This approach will give us a fast and easy to implement quantum algorithm which, however, is less general applicable due to the requirement of certain input assumptions.
These assumptions are important, as several instances of scalar calculus fail when we transit to matrix functional analysis, which particularly applies to the gradient of the quantum relative entropy, and we require these in order to obtain a feasible analytical solution.

We express the Hamiltonian in this case as
\begin{equation}
H = H_v + H_h + H_{\rm int},
\end{equation}
i.e., a decomposition of the Hamiltonian into a part acting on the visible layers, the hidden layers and a third interaction Hamiltonian that creates correlations between the two.
In particular, we further assume for simplicity that there are two sets of operators $\{v_k\}$ and $\{h_k\}$ composed of $D=W_v+ W_h+ W_{\rm int}$ terms such that

\begin{align}
H_v &= \sum_{k=1}^{W_v} \theta_k v_k \otimes I,\qquad &H_h = \sum_{k=W_v+1}^{W_v +W_h} \theta_k I \otimes h_k \nonumber\\
H_{\rm int} &= \sum_{k=W_v+W_h +1}^{W_v +W_h + W_{\rm int}} \theta_k v_k \otimes h_k,&[h_k,h_j]  = 0~\forall~j,k,\label{eq:assumptions}
\end{align}
which implies that the Hamiltonian can in general be expressed as
\begin{equation}
\label{eq:ham_form}
H=\sum_{k=1}^D \theta_k v_k \otimes h_k.
\end{equation}
We break up the Hamiltonian into this form to emphasize the qualitative difference between the types of terms that can appear in this model.
Note that we generally assume throughout this article that $v_k,h_k$ are unitary operators, which is typically the case.

The intention of the form of the Hamiltonian in~\eqref{eq:assumptions} is to force the non-commuting terms, i.e., terms for which it holds that the commutator $[v_k,h_k]\neq 0$, to act only on the visible units of the model.
In contrast, only commuting Hamiltonian terms act on the hidden register.
Since the hidden units commute, the eigenvalues and eigenvectors for the Hamiltonian can be expressed as
\begin{equation}
H \ket{v_h} \otimes \ket{h} = \lambda_{v_h, h} \ket{v_h} \otimes \ket{h},
\end{equation}
where both the conditional eigenvectors and eigenvalues for the visible subsystem are functions of the eigenvector $\ket{h}$ obtained in the hidden register, and we hence denote these as $v_h,\lambda_{v_h,h}$ respectively.
This allows the hidden units to select between eigenbases to interpret the input data while also penalizing portions of the accessible Hilbert space that are not supported by the training data.
However, since the hidden units commute they cannot be used to construct a non-diagonal eigenbasis.\
This division of labor between the visible and hidden layers not only helps build intuition about the model but also opens up the possibility for more efficient training algorithms that exploit this fact.

For the first result we rely on a variational bound on the entropy in order to train the quantum Boltzmann machine weights for a Hamiltonian $H$ of the form given in \eqref{eq:ham_form}.
We can express this variational bound compactly in terms of a thermal expectation against a fictitious thermal probability distribution.
We define this expectation below.
\begin{definition}
\label{def:prob_distr}
Let $\tilde{H}_h = \sum_k \theta_k \tr{\rho v_k} h_k$ be the Hamiltonian acting conditioned on the visible subspace only on the hidden subsystem of the Hamiltonian $H:=\sum_k \theta_k v_k \otimes h_k$.
Then we define the expectation value over the marginal distribution over the hidden variables $h$ as
	\begin{equation}
	\label{eq:def_exp}
	\mathbb{E}_h(\cdot) =  \sum_h   \frac{(\cdot) e^{- \tr{\rho \tilde{H}_h }}}{\sum_h e^{- \tr{\rho \tilde{H}_h}}}.
	\end{equation}
\end{definition}


Using this we derive an upper bound on $S$ in section~\ref{sec:variational_bound_derivative} of the supplemental material, which leads to the following lemma.

\begin{lemma}
\label{def:Stilde}
    Assume that the Hamiltonian $H$ of the quantum Boltzmann machine takes the form described in eq.~\ref{eq:ham_form}, where $\theta_k$ are the parameters which determine the interaction strength and $v_k,h_k$ are unitary operators.
    Furthermore, let $h_k \ket{h}=E_{h,k}\ket{h}$ be the eigenvalues of the hidden subsystem, and  $\mathbb{E}_h(\cdot)$ as given by Definition~\ref{def:prob_distr}, i.e., the expectation value over the effective Boltzmann distribution of the visible layer with $\tilde{H}_h :=\sum_k E_{h,k} \theta_k v_k$.
    Then, a variational upper bound $\widetilde{S}$ of the objective function, meaning that $\widetilde{S}(\rho|H)\ge  S(\rho | e^{-H}/Z)$, is given by
    \begin{align}
        \label{eq:varbd}
  		 \widetilde{S}(\rho|H)  := \tr{\rho \log \rho} +  \tr{\rho { \sum_{k} \mathbb{E}_{h} \left[ E_{h,k} \theta_k v_k \right] +  \mathbb{E}_{h} \left[ \log \alpha_h\right] } } + \log Z,
  	\end{align}
    where $\alpha_h =  \frac{e^{- \tr{\rho \widetilde{H}_h }}}{\sum_h e^{- \tr{\rho \widetilde{H}_h}}}$ is the corresponding Gibbs distribution for the visible units.
\end{lemma}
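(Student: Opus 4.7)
}
The plan is to exploit the commutativity assumption $[h_j,h_k]=0$ to diagonalize the hidden sector of $H$, reduce the partial trace $\mathrm{Tr}_h\,e^{-H}$ to a sum of visible-sector exponentials, and then obtain the variational bound by a single application of operator concavity of the logarithm (Jensen's operator inequality). The key observation is that because the $h_k$ mutually commute, they share a simultaneous eigenbasis $\{\ket{h}\}$ with $h_k\ket{h}=E_{h,k}\ket{h}$, so $H=\sum_h \tilde H_h\otimes\ket{h}\bra{h}$ with $\tilde H_h=\sum_k \theta_k E_{h,k} v_k$ acting only on the visible register. Hence $e^{-H}=\sum_h e^{-\tilde H_h}\otimes\ket{h}\bra{h}$ and $\partr{h}{e^{-H}}=\sum_h e^{-\tilde H_h}$, which eliminates all operator ordering issues in the partial trace.

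Next I would rewrite the objective. Using $\sigma_v=\partr{h}{e^{-H}}/Z$,
\begin{equation*}
S(\rho\,|\,\sigma_v)=\tr{\rho\log\rho}-\tr{\rho\log\!\Big(\textstyle\sum_h e^{-\tilde H_h}\Big)}+\log Z,
\end{equation*}
so the whole problem reduces to upper bounding the term $-\tr{\rho\log(\sum_h e^{-\tilde H_h})}$. To do this, introduce an \emph{arbitrary} probability distribution $\{p_h\}$ on the hidden eigenbasis and write $\sum_h e^{-\tilde H_h}=\sum_h p_h\,(e^{-\tilde H_h}/p_h)$. Because $\log$ is operator concave on the positive-definite cone (Löwner–Heinz), Jensen's operator inequality gives
\begin{equation*}
\log\!\Big(\textstyle\sum_h p_h\,e^{-\tilde H_h}/p_h\Big)\;\succeq\;\sum_h p_h\log(e^{-\tilde H_h}/p_h)\;=\;-\sum_h p_h\,\tilde H_h\;-\;\Big(\sum_h p_h\log p_h\Big)I,
\end{equation*}
as an inequality of Hermitian operators on the visible register. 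Taking $\tr{\rho\,\cdot}$ preserves this inequality since $\rho\succeq 0$, and then negating both sides yields
\begin{equation*}
-\tr{\rho\log\!\Big(\textstyle\sum_h e^{-\tilde H_h}\Big)}\;\le\;\sum_h p_h\,\tr{\rho\,\tilde H_h}+\sum_h p_h\log p_h.
\end{equation*}

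Finally, I would specialize the free parameters $p_h$ to the distribution $\alpha_h=e^{-\tr{\rho\tilde H_h}}/\sum_{h'}e^{-\tr{\rho\tilde H_{h'}}}$ from Definition~\ref{def:prob_distr}, so that the right-hand side becomes exactly $\mathbb{E}_h[\tr{\rho\tilde H_h}]+\mathbb{E}_h[\log\alpha_h]$. Substituting back and expanding $\tilde H_h=\sum_k E_{h,k}\theta_k v_k$ inside the trace, and using that $\mathbb{E}_h[\log\alpha_h]$ is a scalar so $\tr{\rho\,\mathbb{E}_h[\log\alpha_h]}=\mathbb{E}_h[\log\alpha_h]$, gives the claimed form of $\widetilde S(\rho|H)$ together with the inequality $\widetilde S(\rho|H)\ge S(\rho\,|\,e^{-H}/Z)$. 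The only nontrivial step is the invocation of operator concavity of $\log$—I would not expect a proof via scalar Jensen alone to work, because $\tilde H_h$ at different $h$ generically do not commute; the fact that we only need a \emph{scalar} inequality after tracing against $\rho$ is what makes the operator Jensen route clean. Note also that the choice $p_h=\alpha_h$ is not forced by the inequality itself (it holds for every $p$), but it is the choice that makes the subsequent gradient $\partial_\theta\widetilde S$ tractable, which is presumably the motivation recorded in the supplementary material.
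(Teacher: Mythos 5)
Your proposal is correct and follows essentially the same route as the paper's own proof: diagonalize the hidden sector via the commutativity of the $h_k$, reduce $\partr{h}{e^{-H}}$ to $\sum_h e^{-\tilde H_h}$, insert an arbitrary probability distribution, apply operator concavity of the logarithm (Jensen's operator inequality), trace against $\rho$, and specialize to $\alpha_h$. The only slight difference is interpretive: the paper picks $\alpha_h$ as the \emph{minimizer} of the variational bound (the Gibbs distribution maximizing the associated free energy), not merely as a convenient choice, but this does not affect the validity of the stated inequality.
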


The proof that~\eqref{eq:varbd} is a variational bound proceeds in two steps.
First, we note that for any probability distribution $\alpha_h$
\begin{equation}
\tr{\rho \log \left(\sum_{h=1}^N e^{-\sum_{k} E_{h,k} \theta_k v_k} \right)} = \tr{\rho \log \left(\sum_{h=1}^N \alpha_h \frac{e^{-\sum_{k} E_{h,k} \theta_k v_k}/\alpha_h }{\sum_{h'} \alpha_{h'}} \right)}
\end{equation}
We then apply Jensen's inequality and minimize the result over all $\alpha_h$.
This not only verifies that $\widetilde{S}(\rho|H) \ge S(\rho|H)$ but also yields a variational bound.
The details of the proof can be found in eq.~\ref{eq:def_variational_bound} in section~\ref{sec:variational_bound_derivative} of the supplemental material.

Using the above assumptions we can obtain the gradient of the variational upper bound of the relative entropy which is derived in the section~\ref{sec:var_grad_estimation}
of the supplemental material and summarized in lemma~\ref{lem:gradient_visible_layer}.

\begin{lemma}
\label{lem:gradient_visible_layer}
  Assume that the Hamiltonian $H$ of the quantum Boltzmann machine takes the form described in eq.~\ref{eq:ham_form}, where $\theta_k$ are the parameters which determine the interaction strength and $v_k,h_k$ are unitary operators.
  Furthermore, let $h_k \ket{h}=E_{h,k}\ket{h}$ be the eigenvalues of the hidden subsystem, and  $\mathbb{E}_h(\cdot)$ as given by Definition~\ref{def:prob_distr}, i.e., the expectation value over the effective Boltzmann distribution of the visible layer with $\tilde{H}_h :=\sum_k E_{h,k} \theta_k v_k$.
  Then, the derivatives of $\widetilde{S}$ with respect to the parameters of the Boltzmann machine are given by
    \begin{align}
       \label{eq:gradient_var_bound}
      \frac{\partial \widetilde{S}(\rho|H)}{\partial_{\theta_p}} = \mathbb{E}_{h} \left[ \tr{\rho E_{h,p} v_p}\right] - \tr{\frac{\partial H}{\partial \theta_p} \frac{e^{-H}}{Z}}.
    \end{align}
\end{lemma}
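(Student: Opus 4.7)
The plan is to collapse the variational bound~\eqref{eq:varbd} into a log-partition-function form by exploiting a cancellation, and then differentiate directly.

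First, I would substitute the explicit expression for $\alpha_h$ from Definition~\ref{def:prob_distr} into~\eqref{eq:varbd}. Setting $\tilde Z := \sum_h e^{-\tr{\rho \tilde H_h}}$ we have $\log \alpha_h = -\tr{\rho \tilde H_h} - \log \tilde Z$. Since $\tr{\rho \tilde H_h} = \sum_k E_{h,k}\theta_k \tr{\rho v_k}$ by definition of $\tilde H_h$, the middle ``big trace'' in~\eqref{eq:varbd} rewrites as
\begin{equation}
\tr{\rho \sum_k \mathbb{E}_h[E_{h,k}\theta_k v_k]} + \mathbb{E}_h[\log \alpha_h] = \mathbb{E}_h\!\left[\tr{\rho \tilde H_h} + \log \alpha_h\right] = -\log \tilde Z,
\end{equation}
because the bracketed quantity is $-\log \tilde Z$, which is independent of $h$. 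Thus the bound telescopes into
\begin{equation}
\widetilde S(\rho|H) = \tr{\rho\log\rho} - \log \tilde Z + \log Z.
\end{equation}

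Second, I would differentiate term by term. The entropy $\tr{\rho\log \rho}$ is $\theta$-independent. For $\log Z$, the standard identity gives $\partial_{\theta_p}\log Z = -\tr{\partial_{\theta_p} H\,e^{-H}/Z}$. For $\log \tilde Z$, the dependence on $\theta_p$ enters only through the scalar $\tr{\rho \tilde H_h}$, with $\partial_{\theta_p}\tr{\rho \tilde H_h} = E_{h,p}\tr{\rho v_p}$; hence
\begin{equation}
\partial_{\theta_p}\log \tilde Z = \frac{1}{\tilde Z}\sum_h \bigl(-E_{h,p}\tr{\rho v_p}\bigr)e^{-\tr{\rho\tilde H_h}} = -\mathbb{E}_h\!\left[E_{h,p}\tr{\rho v_p}\right].
\end{equation}
Assembling the two contributions produces exactly~\eqref{eq:gradient_var_bound}, using that $E_{h,p}$ is a scalar so it may be pulled inside the trace.

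The main subtlety, and really the only non-routine step, is justifying $\partial_{\theta_p}\log Z = -\tr{\partial_{\theta_p}H\,e^{-H}/Z}$ when $H$ contains non-commuting terms and therefore $\partial_{\theta_p}H$ need not commute with $H$. I would handle this via Duhamel's formula, writing $\partial_{\theta_p}e^{-H} = -\int_0^1 e^{-sH}\,\partial_{\theta_p}H\,e^{-(1-s)H}\,ds$, after which cyclicity of the trace reduces the integrand to $\tr{\partial_{\theta_p}H\,e^{-H}}$ uniformly in $s$. The structural assumptions~\eqref{eq:assumptions} are not needed here; they are only used in the collapse step above, through the fact that $[h_k,h_j]=0$ lets $\tilde H_h$ act as the scalar $\sum_k E_{h,k}\theta_k v_k$ on each simultaneous eigenvector $\ket{h}$, which is what made $\alpha_h$ a well-defined probability distribution in the first place.
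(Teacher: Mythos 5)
Your proof is correct, but it takes a genuinely different and noticeably cleaner route than the paper's. The paper keeps the variational bound in its raw form $\sum_h \alpha_h \tr{\rho\tilde H_h} + \sum_h\alpha_h\log\alpha_h + \log Z$ and differentiates it term by term, carrying the chain-rule contributions from $\partial_{\theta_p}\alpha_h = \alpha_h\bigl(\mathbb{E}_{h'}[E_{h',p}\tr{\rho v_p}] - E_{h,p}\tr{\rho v_p}\bigr)$ through both the energy and entropy pieces and then verifying that these contributions cancel pairwise — which is, in effect, an explicit instance of the envelope theorem, since $\alpha_h$ is chosen to extremize the bound. You instead observe up front that at the Gibbs choice of $\alpha_h$ the free-energy expression telescopes, $\mathbb{E}_h\!\left[\tr{\rho\tilde H_h}+\log\alpha_h\right] = -\log\tilde Z$, so that $\widetilde S = \tr{\rho\log\rho} - \log\tilde Z + \log Z$, after which the gradient is just two log-partition-function derivatives: a purely classical one for $\tilde Z$ (where no operator-ordering issues arise because $\tr{\rho\tilde H_h}$ is a scalar) and a Duhamel-plus-cyclicity computation for $\log Z$, exactly as the paper also does for that term. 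Your version buys a shorter derivation and makes transparent why the answer has the "data term minus model term" structure; the paper's version has the minor virtue of exhibiting the intermediate derivative formulas for general (non-optimal) $\alpha_h$, but at the cost of a longer cancellation argument. Your identification of the $\log Z$ derivative as the only step requiring care about non-commutativity is also accurate.
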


Notably, if we consider no interactions between the visible and the hidden layer, then indeed the gradient above reduces to the case of the visible Boltzmann machine, which was treated in \cite{kieferova2016tomography}, resulting in the gradient
\begin{equation}
  \tr{\rho \partial_{\theta_p}H} - \tr{\frac{e^{-H}}{Z} \partial_{\theta_p} H},
\end{equation}
under our assumption on the form of $H$, $\partial_{\theta_p}H = v_p$.\\

  From Lemma~\ref{lem:gradient_visible_layer}, we know the form of the derivatives of the relative entropy w.r.t.\ any parameter $\theta_p$ via Eq.~\ref{eq:gradient_var_bound}.
  Note that we can easily evaluate the second term by preparing the Gibbs state $\sigma_{Gibbs} := e^{-H}/Z$ and then evaluating the expectation value of the
  operator $\partial_{\theta_j}H$ w.r.t. the Gibbs state, using amplitude estimation for the Hadamard test~\cite{aharonov2009polynomial}.
  This is a standard procedure and we describe it in algorithm~\ref{alg:algo0} in section~\ref{sec:eval_dm_operator} of the supplemental material.

  The computational complexity of this procedure is easy to evaluate.
  If $T_{Gibbs}$ is the query complexity for the Gibbs state preparation, the query complexity of the whole algorithm including the phase estimation step is then given by $O(T_{Gibbs}/\epsilon)$ for an $\epsilon$-accurate estimate of phase estimation.
	 Next, we derive an algorithm to evaluate the first term, which requires a more involved process.
  For this, note first that we can evaluate each term $\tr{\rho v_k}$ independently from $\mathbb{E}_{h} \left[ E_{h,p}\right]$, and individually for all $k \in [D]$, i.e., all $D$ dimensions of the gradient.
  This can be done via the Hadamard test for $v_k$ which we recapitulate in section~\ref{sec:eval_dm_operator} of the supplemental material, assuming $v_k$ is unitary.
	More generally, for non-unitary $v_k$ we could evaluate this term using a linear combination of unitary operations.
	Therefore, the remaining task is to evaluate the terms $\mathbb{E}_{h} \left[ E_{h,p}\right]$ in \eqref{eq:gradient_var_bound},
  which reduces to sampling elements according to the distribution $\{\alpha_h\}$, recalling that $h_p$ applied to the subsystem has eigenvalues $E_{h,p}$.
	For this we need to be able to create a Gibbs distribution for the effective Hamiltonian $\tilde{H}_h = \sum_k \theta_k \tr{\rho v_k} h_k$ which contains only $D$ terms and can hence be evaluated efficiently as long as $D$ is small, which we can generally assume to be true.   	In order to sample according to the distribution $\{\alpha_h\}$, we first evaluate the factors $\theta_k \tr{\rho v_k}$ in the sum over $k$ via the Hadamard test, and then use these in order to implement the Gibbs distribution $\exp{(-\tilde{H_h})}/\tilde{Z}$ for the Hamiltonian $$\tilde{H}_h = \sum_k \theta_k \tr{\rho v_k} h_k.$$
  The algorithm is summarized in Algorithm~\ref{alg:algo1}.

  \begin{algorithm}[tb!]
  \caption{Variational gradient estimation - term 1}
  \label{alg:algo1}
  \begin{algorithmic}
    \STATE {\bfseries Input:} An upper bound $\tilde{S}(\rho | H )$ on the quantum relative entropy, density matrix $\rho \in \mathbb{C}^{2^n\times 2^n}$, and Hamiltonian $H \in \mathbb{C}^{2^n\times 2^n}$.
    \STATE {\bfseries Output:} Estimate $\mathcal{S}$ of the gradient $\nabla \tilde{S}$ which fulfills Thm.~\ref{thm:gradient_results}.
    \STATE {\bfseries 1.} Use Gibbs state preparation to create the Gibbs distribution for the effective Hamiltonian $\tilde{H}_h = \sum_k \theta_k \tr{\rho v_k} h_k$ with sparsity $d$.
    \STATE {\bfseries 2.} Prepare a Hadamard test state, i.e., prepare an ancilla qubit in the $\ket{+}$-state and apply a controlled-$h_k$ conditioned on the ancilla register, followed by a Hadamard gate, i.e.,
    \begin{align}
      \ket{\phi} := \frac{1}{2} \left( \ket{0} \left(\ket{\psi}_{Gibbs} + (h_k \otimes I) \ket{\psi}_{Gibbs} \right) + \ket{1} \left( \ket{\psi}_{Gibbs} -  (h_k \otimes I) \ket{\psi}_{Gibbs} \right) \right)
    \end{align}
    where $\ket{\psi}_{Gibbs} := \sum_{h} \frac{e^{-E_h/2}}{\sqrt{Z}} \ket{h}_A \ket{\phi_h}_B$ is the purified Gibbs state.
    \STATE {\bfseries 3.} Perform amplitude estimation on the $\ket{0}$ state,we need to implement the amplitude estimation with reflector $P:= -2 \ket{0}\bra{0} +I$,
      and operator $G:= \left( 2 \ket{\phi}\bra{\phi} - I \right) (P \otimes I)$.
    \STATE {\bfseries 4.} Measure now the phase estimation register which returns an $\tilde \epsilon$-estimate of the probability $\frac{1}{2} \left( 1 + \mathbb{E}_h[E_{h,k}] \right)$ of the Hadamard test to return $0$
    \STATE {\bfseries 6.} Repeat the procedure for all $D$ terms and output the first term of $\nabla \tilde{S}$.
  \end{algorithmic}
  \end{algorithm}

  The algorithm is build on three main subroutines.
  The first one is Gibbs state preparation, which is a known routine which we recapitulate in Theorem~\ref{thm:Gibbs_state_prep} in the supplemental material.
  The two remaining routines are the Hadamard test and amplitude estimation, both are well established quantum algorithms.
  The Hadamard test, will allow us to estimate the probability of the outcome.
  This is concretely given by
  \begin{equation}
            \mathrm{Pr}(0) = \frac{1}{2} \left(1 + \mathrm{Re}\bra{\psi}_{Gibbs} (h_k \otimes I) \ket{\psi}_{Gibbs} \right) =  \frac{1}{2} \left(1 + \sum_h \frac{e^{-E_h} E_{h,k}}{Z} \right),
	\end{equation}
	i.e., from $\mathrm{Pr}(0)$ we can easily infer the estimate of $\mathbb{E}_h\left[ E_{h,k} \right]$ up to precision $\epsilon$ for all the $k$ terms,
  since the last part is equivalent to $\frac{1}{2} \left( 1 + \mathbb{E}_h[E_{h,k}] \right)$.
  To speed up the time for the evaluation of the probability $\mathrm{Pr}(0)$, we use amplitude estimation.
  We recapitulate this procedure in detail in the suppemental material in section~\ref{sec:amplitude_estimation}.
  In this case, we let $P:= -2 \ket{0}\bra{0} +I$ be the reflector, where $I$ is the identity which is just the Pauli $z$ matrix up to a global phase,
  and let $G:= \left( 2 \ket{\phi}\bra{\phi} - I \right) (P \otimes I)$, for $\ket{\phi}$ being the state after the Hadamard test prior to the measurement.
	The operator $G$ has then the eigenvalue $\mu_{\pm}= \pm e^{\pm i 2 \theta}$ , where $2 \theta = \arcsin{\sqrt{\mathrm{Pr}(0)}}$,
  and $\mathrm{Pr}(0)$ is the probability to measure the ancilla qubit in the $\ket{0}$ state.
	Let now $T_{Gibbs}$ be the query complexity for preparing the purified Gibbs state (c.f. eq~\eqref{eq:gibbs_state_query_complexity} in the supplemental material).
	We can then perform phase estimation with precision $\epsilon$ for the operator $G$ requiring $O(T_{Gibbs}/\tilde \epsilon)$ queries to the oracle of $H$.

  In section~\ref{app:proof_thm_gradient_results} of the supplemental material we analyse the runtime and error of the above algorithm.
  The result is summarized in Theorem~\ref{thm:gradient_results}.

  \begin{theorem}
  \label{thm:gradient_results}
  Assume that the Hamiltonian $H$ of the quantum Boltzmann machine takes the form described in eq.~\ref{eq:ham_form}, where $\theta_k$ are the parameters which determine the interaction strength and $v_k,h_k$ are unitary operators.
  Furthermore, let $h_k \ket{h}=E_{h,k}\ket{h}$ be the eigenvalues of the hidden subsystem, and  $\mathbb{E}_h(\cdot)$ as given by Definition~\ref{def:prob_distr},
  i.e., the expectation value over the effective Boltzmann distribution of the visible layer with $\tilde{H}_h :=\sum_k E_{h,k} \theta_k v_k$,
  and suppose that $I \preceq \tilde{H}_h$ with bounded spectral norm $\norm{\tilde{H}_h(\theta)} \leq \norm{\theta}_1$, and let $\tilde{H}_h$ be $d$-sparse.
  Then $\mathcal{S}\in \mathbb{R}^D$ can be computed for any $\epsilon \in (0,\max\{1/3, 4\max_{h,p} |E_{h,p}|\})$ such that
  \begin{equation}
  	\norm{\mathcal{S} - \nabla \tilde{S}}_{max}  \leq \epsilon,
  \end{equation}
  with
  \begin{equation}
  \label{eq:hidden_gradient_query_complexity}
    \widetilde{\mathcal{O}} \left(\sqrt{\xi}\frac{D \norm{\theta}_1 dn^2 }{\epsilon} \right),
  \end{equation}
  queries to the oracle $O_H$ and $O_{\rho}$ with probability at least $2/3$, where $\norm{\theta}_1$ is the sum of absolute values of the parameters of the Hamiltonian,
   $\xi := \max[N/z, N_h/z_h]$, $N=2^n$, $N_h=2^{n_h}$, and $z,z_h$ are known lower bounds on the partition functions for the Gibbs state of $H$ and $\tilde{H}_h$ respectively.
  \end{theorem}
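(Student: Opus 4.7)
The plan is to bound, coordinate by coordinate, the cost of approximating the two terms that appear in the gradient of Lemma~\ref{lem:gradient_visible_layer}, then to union-bound over the $D$ coordinates. Writing $\partial_{\theta_p}\widetilde{S} = \mathrm{Tr}(\rho v_p)\cdot \mathbb{E}_h[E_{h,p}] - \mathrm{Tr}\!\left((v_p\otimes h_p)\, e^{-H}/Z\right)$, where in the first term I used that $v_p$ is independent of the hidden-register label $h$, the task reduces to three ``building block'' subroutines: (i) a Hadamard test for $\mathrm{Tr}(\rho v_k)$ against the data oracle $O_{\rho}$; (ii) preparation of the purified Gibbs state of $H$ (for the second term) and of the effective Hamiltonian $\tilde{H}_h = \sum_k \theta_k \mathrm{Tr}(\rho v_k)\, h_k$ (for the first); (iii) a Hadamard test of $h_p$ respectively $v_p\otimes h_p$ on those Gibbs states, sped up by amplitude estimation. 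This is precisely the content of Algorithm~\ref{alg:algo1} combined with the standard procedure of Algorithm~\ref{alg:algo0}.

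Next I would assemble the per-coordinate query complexity. Invoking the Gibbs-preparation result (Theorem~\ref{thm:Gibbs_state_prep}) for a $d$-sparse Hamiltonian with $\norm{\tilde{H}_h} \le \norm{\theta}_1$ and partition-function lower bound $\xi$ costs $T_{\mathrm{Gibbs}} = \widetilde{\mathcal{O}}(\sqrt{\xi}\, d\,\norm{\theta}_1 n)$ queries to $O_H$; the Hadamard test then outputs an ancilla with $\mathrm{Pr}(0)=\tfrac{1}{2}(1+\mathbb{E}_h[E_{h,p}])$ (and analogously for the second term), and amplitude estimation to additive precision $\tilde\epsilon$ uses $O(1/\tilde\epsilon)$ repetitions of the controlled Gibbs preparation and the walk operator $G$. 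Therefore a single term $\mathbb{E}_h[E_{h,p}]$ (or one coordinate of the second term) can be estimated in $\widetilde{\mathcal{O}}(\sqrt{\xi}\, d\,\norm{\theta}_1 n / \tilde\epsilon)$ queries to $O_H$ and $O_\rho$.

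I would then perform the error budgeting. The overall per-coordinate error decomposes into three independent contributions: (a) the Hadamard-test estimate of each $\mathrm{Tr}(\rho v_k)$, which perturbs the coefficients of $\tilde H_h$ and hence shifts the Gibbs distribution $\{\alpha_h\}$; (b) the amplitude-estimation error on $\mathbb{E}_h[E_{h,p}]$ and on $\mathrm{Tr}((v_p\otimes h_p)e^{-H}/Z)$; and (c) the multiplicative combination in the product $\mathrm{Tr}(\rho v_p)\cdot \mathbb{E}_h[E_{h,p}]$, controlled by the hypothesis $\epsilon \le 4\max_{h,p}|E_{h,p}|$ which ensures both factors are $O(1)$-bounded and one can convert absolute precision on each factor to absolute precision on their product. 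Choosing $\tilde\epsilon = \Theta(\epsilon/\norm{\theta}_1)$ up to polylog factors balances these contributions. Finally, boosting the success probability of each of the $D$ coordinate estimations to $1-1/(3D)$ via standard median amplification adds only a $\log D$ overhead absorbed into $\widetilde{\mathcal{O}}(\cdot)$; multiplying the per-coordinate cost by $D$ and absorbing an additional $n = \log N$ factor coming from the purified-register and Hadamard-test overhead yields the advertised bound $\widetilde{\mathcal{O}}\!\left(\sqrt{\xi}\, D\, \norm{\theta}_1 d n^2 / \epsilon\right)$.

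The hard part will be contribution (a) in the error budget: one needs a quantitative perturbation bound showing that replacing the exact coefficients $\mathrm{Tr}(\rho v_k)$ by Hadamard-test estimates $\mathrm{Tr}(\rho v_k) + \delta_k$ changes the thermal expectation $\mathbb{E}_h[E_{h,p}]$ by at most $O(\norm{\theta}_1 \max_k |\delta_k|)$ times a constant depending on the spectral lower bound $I\preceq \tilde{H}_h$ (needed to keep the partition function bounded away from zero so the Gibbs state is Lipschitz in the Hamiltonian). The remaining ingredients — Hadamard test, amplitude estimation, and sparse Gibbs preparation — are standard black boxes whose costs compose as above; only the coupling between estimating $\mathrm{Tr}(\rho v_k)$ and sampling from $\exp(-\tilde{H}_h)/\tilde{Z}$ requires delicate care.
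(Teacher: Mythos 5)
Your proposal follows essentially the same route as the paper's proof: the same decomposition of the gradient into the thermal-expectation term and the Gibbs-expectation term, the same three subroutines (Hadamard test for $\tr{\rho v_k}$, sparse Gibbs-state preparation for $H$ and $\tilde H_h$, amplitude-estimated Hadamard tests), the same error budget including the perturbation of $\{\alpha_h\}$ induced by imprecise coefficients of $\tilde H_h$ (which the paper handles via the explicit bound $\lvert \alpha_h - \tilde\alpha_h\rvert \le 8\delta\, e^{-\tr{\rho\tilde H_h}}/\sum_h e^{-\tr{\rho\tilde H_h}}$ and the choice $\delta_t = \epsilon/[16\max_{h,p}\lvert E_{h,p}\rvert]$, exactly the ``hard part'' you flag), and the same median-of-$O(\log D)$ Chernoff boosting. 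The approach and the assembly of the final complexity match the paper's argument.
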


Theorem~\ref{thm:gradient_results} shows that the computational complexity of estimating the gradient grows the closer we get to a pure state, since for a pure state the inverse temperature $\beta\rightarrow \infty$, and therefore the norm $\norm{H(\theta)}\rightarrow \infty$, as the Hamiltonian is depending on the parameters, and hence the type of state we describe. In such cases we typically would rely on alternative techniques.
However, this cannot be generically improved because otherwise we would be able to find minimum energy configurations using a number of queries in $o(\sqrt{N})$, which would violate lower bounds for Grover's search.
Therefore more precise statements of the complexity will require further restrictions on the classes of problem Hamiltonians to avoid lower bounds imposed by Grover's search and similar algorithms.

\subsection{Gradient based training for general Hamiltonians}

Our second scheme to train a quantum Boltzmann machine is general applicable and does not require a particular form of the Hamiltonian as was required for the first approach.
We use higher order divided difference estimates for the relative entropy error based on function approximation schemes.
For this we generate differentiation formulas by differentiating an interpolant.
The idea for this is straightforward: First we construct an interpolating polynomial from the data.
Second, an approximation of the derivative at any point is obtained by a direct differentiation of the interpolant.
Concretely we perform the following steps.
We first approximate the logarithm via a Fourier-like approximation, i.e., $\log \sigma_v \rightarrow \log_{K,M}\sigma_v,$ where the subscripts $K,M$
indicate the level of truncation similar to~\cite{van2017quantum}.
This will yield a Fourier-like series in terms of $\sigma_v$, i.e., $\sum_m c_m \exp{(im\pi \sigma_v)}$.\\
Next, we need to evaluate the gradient of the function $\tr{ \frac{\partial}{\partial \theta} \rho \log_{K,M}(\sigma_v)}$.
Taking the derivative yields many terms of the form
\begin{equation}
  \label{eq:integral_partial_derivatives}
  \int_0^1 ds e^{(ism\pi \sigma_v)} \frac{\partial \sigma_v}{\partial \theta} e^{(i(1-s)m\pi \sigma_v)},
\end{equation}
as a result of the Duhamel's formula for the derivative of exponentials of operators (c.f., Sec.~\ref{eq:operator_log_ineq} of the supplemental material).
Each term in this expansion can furthermore be evaluated separately via a sampling procedure, since the terms in Eq.~\ref{eq:integral_partial_derivatives} can be approximated by $\mathbb{E}_s \left[ e^{(ism\pi \sigma_v)} \frac{\partial \sigma_v}{\partial \theta} e^{(i(1-s)m\pi \sigma_v)} \right]$.
Furthermore, since we only have a logarithmic number of terms, we can combine the results of the individual terms via classical postprocessing once we have evaluated the trace.\\
Now, we apply a divided difference scheme to approximate the gradient term $\frac{\partial \sigma_v}{\partial \theta}$ which results
in an interpolation polynomial $\mathcal{L}_{\mu,j}$ of order $l$ (for $l$ being the number of points at which we evaluate the function) in $\sigma_v$ which we can efficiently evaluate.\\
However, evaluating these terms is still not trivial. The final step consists hence of implementing a routine which allows us to evaluate
these terms on a quantum device. In order to do so, we once again make use of the Fourier series approach.
This time we take the simple idea of aproximating the density operator $\sigma_v$ by the series of itself,
i.e., $\sigma_v \approx F(\sigma_v) := \sum_{m'} c_{m'} \exp{(im \pi m' \sigma_v)}$, which we can implement conveniently via sample based Hamiltonian simulation~\cite{lloyd2014quantum,kimmel2017hamiltonian}.\\
Following these steps we obtain the expression in Eq.~\ref{eq:full_approximation}.
The real part of
\begin{equation}
  \label{eq:full_approximation}
  \sum_{m=-M_1}^{M_1} \sum_{m'=-M_2}^{M_2} \frac{i c_m \tilde{c}_{m'} m \pi}{2} \sum_{j=0}^{\mu} \mathcal{L}'_{\mu,j}(\theta) \mathbb{E}_{s \in[0,1]} \left[\tr{ \rho e^{\frac{i s \pi m}{2}\sigma_v} e^{\frac{i \pi m'}{2} \sigma_v(\theta_j)} e^{\frac{i (1-s) \pi m}{2}\sigma_v}} \right].
\end{equation}
 then approximates $\partial_{\theta} \tr{\rho \log \sigma_v}$ with at most $\epsilon$ error, where $\mathcal{L}'_{\mu,j}$ is the derivative of the interpolation polynomial which we obtain using divided differences,
and $\{c_i\}_{i},\{\tilde{c}_j\}_{j}$ are coefficients of the approximation polynomials, which can efficiently be evaluated classically.
We can evaluate each term in the sum separately and combine the results then via classical post-processing, i.e., by using the quantum computer to evaluate terms containing the trace.

\begin{algorithm}[tb!]
\caption{Gradient estimation via series approximations}
\label{alg:algo2}
\begin{algorithmic}
  \STATE {\bfseries Input:} Density matrices $\rho \in \mathbb{C}^{2^n \times 2^n}$ and $\sigma_v \in \mathbb{C}^{2^{n_v}\times 2^{n_v}}$, precalculated parameters $K,M$ and Fourier-like series for the gradient as described in eq.~\ref{eq:full_approximation}.
  \STATE {\bfseries Output:} Estimate $\mathcal{G}$ of the gradient $\nabla_{\theta}\tr{\rho \log \sigma_v}$ with guarantees in Thm.~\ref{thm:gradient_results}.
  \STATE {\bfseries 1.} Prepare the $\ket{+} \otimes \rho$ state for the Hadamard test.
  \STATE {\bfseries 2.} Conditionally on the first qubit apply sample based Hamiltonian simulation to $\rho$, i.e.,
  for $U:=e^{\frac{i s \pi m}{2}\sigma_v} e^{\frac{i \pi m'}{2} \sigma_v(\theta_j)} e^{\frac{i (1-s) \pi m}{2}\sigma_v}$, apply $\ket{0}\bra{0} \otimes I + \ket{1}{1} \otimes U$.
  \STATE {\bfseries 3.} Apply another Hadamard gate to the first qubit.
  \STATE {\bfseries 4.} Repeat the above procedure and measure the final state each time and return the averaged output.
\end{algorithmic}
\end{algorithm}

The main challenge for the algorithmic evaluation hence to compute the terms
\begin{equation}
  \label{eq:error_formula_sample_based_ham_sim}
	\tr{ \rho e^{\frac{i s \pi m}{2}\sigma_v} e^{\frac{i \pi m'}{2} \sigma_v(\theta_j)} e^{\frac{i (1-s) \pi m}{2}\sigma_v}}.
\end{equation}
Evaluating this expression is done through Algorithm~\ref{alg:algo2}, relies on two established subroutines, namely sample based Hamiltonian simulation~\cite{lloyd2014quantum,kimmel2017hamiltonian},
and the Hadamard test.
Note that the sample based Hamiltonian simulation approach introduces an additional $\epsilon_h$-error in trace norm, which we also need to take into account in the analysis.
In section~\ref{app:proof_general_algo} of the supplemental material we derive the following guarantees for Algorithm.~\ref{alg:algo2}.

  \begin{theorem}
    \label{thm:complexity_general_algo}
    Let $\rho, \sigma_v$ being two density matrices, $\norm{\sigma_v} <1/\pi$, and we have access to an oracle $O_{H}$ that computes the locations of non-zero matrix elements in each row and their values for the $d$-sparse
    Hamiltonian $H(\theta)$ (as per~\cite{berry2007efficient}) and an oracle $O_{\rho}$ which returns copies of  purified density matrix of the data $\rho$, and $\epsilon \in (0,1/6)$ an error parameter.
    With probability at least $2/3$ we can obtain an estimate $\mathcal{G}$ of the gradient w.r.t. $\theta \in \mathbb{R}^D$ of the relative entropy
    $\nabla_{\theta} \tr{\rho \log \sigma_v}$ such that
    \begin{equation}
      \norm{\nabla_{\theta}\tr{\rho \log \sigma_v} - \mathcal{G}}_{max} \leq \epsilon,
    \end{equation}
    with
  \begin{align}
\label{eq:final_query_complexity}
     \tilde{O} \left(   \sqrt{\frac{N}{z}}
      \frac{D \norm{H(\theta)}
        d   \mu^5
      \gamma
      }
      {\epsilon^3}
    \right),
  \end{align}
  queries to $O_H$ and $O_{\rho}$, where $\mu\in O(n_h + \log(1/\epsilon))$, $\|\partial_{\theta} \sigma_v\| \le e^\gamma$, $\norm{\sigma_v}\geq 2^{-n_v}$ for $n_v$
  being the number of visible units and $n_h$ being the number of hidden units, and $$\tilde{O}\left(\text{poly}\left(\gamma, n_v, n_h,\log(1/\epsilon)\right)\right)$$
  classical precomputation.
  \end{theorem}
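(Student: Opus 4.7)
The plan is to bound the total error by a triangle-inequality decomposition across the four approximations introduced in the construction of Eq.~\ref{eq:full_approximation}, and then multiply the per-term cost (sample-based Hamiltonian simulation inside a Hadamard test, boosted by amplitude estimation) by the total number of terms that must be evaluated. Concretely, I would write $\partial_\theta \tr{\rho \log \sigma_v} - \mathcal{G}$ as a telescoping sum of four differences: (i) replacing $\log \sigma_v$ by its truncated Fourier-like approximation $\log_{K,M}\sigma_v$; (ii) replacing the analytic derivative $\partial_\theta \sigma_v$, which appears via Duhamel's formula in Eq.~\ref{eq:integral_partial_derivatives}, by the divided-difference interpolant $\mathcal{L}'_{\mu,j}$ of order $\mu$; (iii) replacing the inner exponential $e^{i\pi m'\sigma_v/2}$ used to implement $\sigma_v(\theta_j)$ by its Fourier approximation; and (iv) the statistical/simulation error in actually estimating each $\tr{\rho \cdot U \cdot}$ of the form~\eqref{eq:error_formula_sample_based_ham_sim} by Algorithm~\ref{alg:algo2}.

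For step (i) I would use the logarithm approximation of van Apeldoorn et al., which under the hypothesis $\|\sigma_v\|<1/\pi$ and $\|\sigma_v\| \ge 2^{-n_v}$ requires $M_1,K=O(\mathrm{poly}(n_v,\log(1/\epsilon)))$ for $\epsilon$-accuracy; the crucial input to subsequent bounds is that $|c_m|$ are $O(1)$ and the number of modes is polylogarithmic. For step (ii), the divided-difference error on an analytic function with derivative bounded by $e^\gamma$ scales like $e^\gamma/\mu!$ at order $\mu$, so taking $\mu=O(n_h+\log(1/\epsilon))$ suffices; the coefficients $\mathcal{L}'_{\mu,j}$ enter linearly into the final sum and contribute the $\mu^{O(1)}$ factors. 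For (iii) I would apply the same Fourier-approximation lemma to $\sigma_v$ itself with $M_2$ terms, noting that $\|\sigma_v\|<1/\pi$ is exactly what is needed for that approximation to converge. The per-term sampling error from sample-based Hamiltonian simulation (Lloyd--Mohseni--Rebentrost / Kimmel et al.) and the Hadamard test evaluated with amplitude estimation gives, for precision $\epsilon_0$, a cost $\tilde{O}(1/\epsilon_0)$ copies of $\rho$ plus $\tilde O(\|H\|d / \epsilon_0)$ queries to $O_H$; the $\sqrt{N/z}$ factor arises from purifying the Gibbs state $\sigma_v$ when its reduced form must be constructed from $e^{-H}/Z$.

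To glue these together, I would back-propagate the target error $\epsilon$ through the coefficients of~\eqref{eq:full_approximation}: each of the $D$ components of the gradient is a sum of $O(M_1 M_2 \mu)$ terms with Fourier weights of $O(m\pi)$ and interpolant weights $\mathcal{L}'_{\mu,j}$, so distributing $\epsilon$ uniformly forces a per-term precision $\epsilon_0 = \epsilon/\mathrm{poly}(\mu, M_1, M_2, \|H\|)$. Amplitude estimation then needs $\tilde O(1/\epsilon_0)$ invocations of the controlled-$U$ block, each of which costs $\tilde O(\sqrt{N/z}\,\|H\|\,d/\epsilon_0)$ queries to the sparse-access oracles, plus a $\mathrm{poly}(n_v,n_h,\gamma,\log(1/\epsilon))$ classical precomputation for the divided-difference nodes, Fourier coefficients, and polynomial weights. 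Finally, a union bound / median-of-means boosts the per-component success probability to $2/3$ over all $D$ components while contributing only logarithmic overhead hidden by $\tilde O$.

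The main obstacle I expect is step (ii): the divided-difference approximation amplifies the per-node sampling error by $\|\mathcal{L}'_{\mu,j}\|_1$, which in turn depends on the spacing of interpolation nodes and on $e^\gamma$ through $\|\partial_\theta \sigma_v\|$. Getting this to collapse to the stated $\mu^5 \gamma/\epsilon^3$ scaling requires choosing Chebyshev-like nodes so that the Lebesgue constant grows only mildly with $\mu$, and carefully tracking how the product of Fourier coefficients, divided-difference weights, and amplitude-estimation overhead contributes three factors of $1/\epsilon$ rather than four. A secondary subtlety is that the sample-based Hamiltonian simulation error $\epsilon_h$ is measured in trace norm of the evolution, not of $\sigma_v$ itself, so one must pass this through the Fourier approximation to bound its effect on~\eqref{eq:error_formula_sample_based_ham_sim} before it can be absorbed into $\epsilon_0$.
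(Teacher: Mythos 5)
Your overall architecture is the paper's: the same telescoping decomposition (the paper groups your steps (ii) and (iii) into a single term in Eq.~\ref{eq:bounds_approximation_error}, but that is cosmetic), the same ingredients (the van Apeldoorn Fourier-series lemma for $\log$ and for $\sigma_v$ itself, Duhamel plus divided differences, LMR sample-based simulation inside a Hadamard test, a Chernoff/median boost over the $D$ components), and the same source of the $\sqrt{N/z}$ factor.

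There is, however, one genuine gap in your step (ii). You bound the divided-difference error by ``$e^\gamma/\mu!$ for an analytic function with derivative bounded by $e^\gamma$,'' citing the hypothesis $\|\partial_\theta\sigma_v\|\le e^\gamma$. That hypothesis controls only the \emph{first} derivative, whereas the Lagrange remainder (Lemma~\ref{lem:remainder}) requires a bound on the $(\mu+1)$-st derivative of $\sigma_v(\theta)=\trh{e^{-H}}/Z$. This is where the real work lies: the paper derives
$\bigl\|\partial^{\mu+1}_\theta\sigma_v\bigr\|\le e\,2^{\mu+1+n_h}\lambda_{\max}^{\mu+1}\norm{\trh{e^{-H}}}\max_p\bigl(\tfrac{\mu+1-p}{eZ}\bigr)^{\mu+1-p}$
via a Trotter/Duhamel expansion of $\partial^q_\theta e^{-H}$, and it is precisely the $2^{n_h}(2|\lambda_{\max}|)^{\mu+1}$ growth of this bound that forces $\mu\ge n_h+\log(M_1\Lambda/\epsilon)$ rather than just $\mu=O(\log(1/\epsilon))$. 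Without establishing control of the higher derivatives, your interpolation error estimate does not follow, and the claimed $\mu\in O(n_h+\log(1/\epsilon))$ is unmotivated. Two smaller discrepancies: the paper uses an equally spaced central-difference stencil and bounds $\norm{\mathcal{L}'_{\mu,j}}_\infty\le \tfrac{5\mu}{\Delta}\log(\mu/2)$ directly, so Chebyshev nodes are not needed; and the three factors of $1/\epsilon$ come from classical Monte Carlo over the Duhamel parameter $s$ (Chebyshev sampling, $k=4/\epsilon_s^2$) multiplied by the LMR cost $\Theta(t^2/\epsilon_h)$ --- not from amplitude estimation, which cannot be applied directly to the classically sampled integral over $s$ without further coherification. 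Your proposed amplitude-estimation route would, if made to work, actually improve on the stated $1/\epsilon^3$, but as written it glosses over how the $s$-expectation is estimated.
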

In order to obtain the bounds in Theorem~\ref{thm:complexity_general_algo} we decompose the total error into the errors that we incur at each step of the approximation scheme,
\begin{align}
\label{eq:bounds_approximation_error}
  &\left\lvert \partial_{\theta}\tr{\rho \log \sigma_v} - \partial_{\theta} \tr{\rho \log_{K_1,M_1}^s \tilde{\sigma}_v} \right\rvert \leq \sum_i \sigma_i(\rho) \cdot \left\lVert \partial_{\theta} [\log \sigma_v - \log_{K_1,M_1}^s \tilde{\sigma}_v] \right\rVert \nonumber \\
    &\leq \sum_i \sigma_i(\rho) \cdot \left( \left\lVert \partial_{\theta} [\log \sigma_v - \log_{K_1,M_1} \sigma_v] \right\rVert \right. \nonumber \\
    &+ \left.
    \left\lVert \partial_{\theta} [\log_{K_1,M_1} \sigma_v - \log_{K_1,M_1} \tilde{\sigma}_v] \right\rVert + \left\lVert \partial_{\theta} [\log_{K_1,M_1} \tilde{\sigma_v} - \log^s_{K_1,M_1} \tilde{\sigma}_v] \right\rVert \right).
\end{align}
Then bounding each term separately and adjusting the parameters to obtain an overall error of $\epsilon$ allows us to obtain the above result.
We are hence able to use this procedure to efficiently obtain gradient estimates for a QBM with hidden units, while making minimal assumptions on the input data.

\section{Conclusion}
Generative models play an important role in quantum computing as they yield concise models for complex quantum states that have no known a priori structure.  In this article, we solve an outstanding problem in the field: the previous inability to train quantum generative models to minimize the quantum relative entropy (the analogue of the KL-divergence) between the input training set and the output quantum distribution for quantum devices with hidden units.  The inability to handle hidden units, for models such as the quantum Boltzmann machine, was a substantial drawback.

Our work showed, given an efficient subroutine for preparing Gibbs states and an efficient algorithm for computing the matrix elements of the Hamiltonian, that one can efficiently train a quantum Boltzmann machine.  Specifically, we provide two quantum algorithms for training the devices.  The first assumes that the Hamiltonian terms acting on the Hidden units are mutually commuting and relies on optimizing a variational bound on the relative entropy; whereas the second method is completely general and is based on quantum finite difference methods and Fourier techniques.
In fact, this approach is sufficiently general that similar ideas could be used to train models where the input data density operator is not a thermal state.
This would allow much more general, and therefore potentially more powerful, models to be trained without necessitating Gibbs states preparation.
We observe that the first training method requires requires polynomially fewer queries in both the number of units and the error tolerance, making it much more practical but much less general.

A number of open problems remain.
First, while we show upper bounds on the query complexity for training Boltzmann machines lower bounds have not been demonstrated.
Consequently, we do not know whether linear scaling in $\|\theta\|_1$ is optimal, as it is in Hamiltonian simulation.
However, linear scaling in $D$ is unlikely to be optimal because of recent results on quantum gradient descent which only require $\widetilde{O}(\sqrt{D})$~\cite{gilyen2019optimizing} complexity.

A related issue surrounding this work involves the complexity of performing the Gibbs state preparation.
While the method we propose in the text scales as $\widetilde{O}(\sqrt{N/Z})$~\cite{chowdhury2016quantum}, other methods exist that potentially yield better scaling in certain circumstances~\cite{yung2012quantum,van2017quantum}.
Continuing to find better methods for preparing Gibbs states will likely be a vital task to make the training of QBMs practical on near-term quantum devices.
While they do not come with theoretical bounds, recent heuristic approaches that are inspired by ideas from quantum thermodynamics and other physical phenomena may be useful to make the constant factors involved in the state preparation process palatable.

By including all these optimizations it is our hope that quantum Boltzmann machines may not be just a theoretical tool that can one day be used to model quantum states but rather an experimental tool that will be useful for modeling quantum or classical data sets in near term quantum hardware.


\bibliography{bibliography}
\bibliographystyle{unsrt}

\section{Supplemental material}
\label{sec:appendix}

\subsection{Mathematical preliminaries}
While computing the gradient of the average log-likelihood is a straight forward task when training ordinary Boltzmann machines, finding the gradient of the quantum relative entropy is much harder.  The reason for this is that in general $[\partial_\theta H(\theta) , H(\theta)]\ne 0$.  This means that the ordinary rules that are commonly used in calculus for finding the derivative no longer hold.  One important example that we will use repeatedly is Duhamel's formula:
\begin{equation}
\partial_\theta e^{H(\theta)} = \int_{0}^1 \mathrm{d}s e^{H(\theta) s} \partial_\theta H(\theta) e^{H(\theta) (1-s)}.
\end{equation}
This formula can be easily proven by expanding the operator exponential in a Trotter-Suzuki expansion with $r$ time-slices, differentiating the result and then taking the limit as $r\rightarrow \infty$.  However, the relative complexity of this expression compared to what would be expected from the product rule serves as an important reminder that computing the gradient is not a trivial exercise.  A similar formula also exists for the logarithm as shown in Appendix~\ref{sec:appendix}.

Similarly, because we are working with functions of matrices here we need to also work with a notion of monotonicity. We will see that for some of our approximations to hold we will also need to define a notion of concavity (in order to use Jensen's inequality).  These notions are defined below.
    \begin{definition}[Operator monoticity]
      A function $f$ is operator monotone with respect to the semidefinite order if $0 \preceq A \preceq B$, for two symmetric positive definite operators implies, $f(A) \preceq f(B)$.
      A function is operator concave w.r.t. the semidefinite order if $cf(A)+(1-c)f(B) \preceq f(cA+(1-c)B)$, for all positive definite $A,B$ and $c \in [0,1]$.
    \end{definition}

We now derive or review some preliminary equations which we will need in order to obtain a useful bound on the gradients in the main work.

\begin{claim}
  Let $A(\theta)$ be a linear operator which depends linearly on the density matrix $\sigma$.
  Then
  \begin{equation}
    \label{eq:grad_inverse}
    \frac{\partial}{\partial \theta} A(\theta)^{-1} = - A^{-1} \frac{\partial \sigma}{\partial \theta} A^{-1}.
  \end{equation}
\end{claim}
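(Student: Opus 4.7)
The plan is to derive the standard matrix-inverse differentiation formula directly from the defining identity $A(\theta)A(\theta)^{-1} = I$, and then invoke the linearity assumption to replace $\partial_{\theta} A$ by $\partial_{\theta} \sigma$ in the final expression. The key observation is that although $A(\theta)$ and $\partial_{\theta}A(\theta)$ in general do not commute, the product rule for differentiation is still valid when applied to a product of two operators, since it is really the linearity of the derivative operator applied to a bilinear form.

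First I would differentiate both sides of $A(\theta) A(\theta)^{-1} = I$ with respect to $\theta$. The right-hand side is constant, so its derivative vanishes. Applying the product rule on the left yields
\begin{equation}
\left(\frac{\partial A}{\partial \theta}\right) A^{-1} + A \left(\frac{\partial A^{-1}}{\partial \theta}\right) = 0.
\end{equation}
Next I would isolate $\partial_{\theta}A^{-1}$ by left-multiplying by $A^{-1}$ on both sides, being careful to preserve the order of non-commuting factors. This gives
\begin{equation}
\frac{\partial A^{-1}}{\partial \theta} = - A^{-1} \left(\frac{\partial A}{\partial \theta}\right) A^{-1}.
\end{equation}

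Finally, I would invoke the hypothesis that $A(\theta)$ depends linearly on $\sigma$ and that the $\theta$-dependence of $A$ enters only through $\sigma(\theta)$. Under this assumption, the chain rule gives $\partial_{\theta} A = \partial_{\theta}\sigma$ (up to a fixed linear map that is the identity in the setting of the paper, where $A$ plays the role of $\sigma$ itself), which produces the claimed identity
\begin{equation}
\frac{\partial}{\partial \theta} A(\theta)^{-1} = - A^{-1} \frac{\partial \sigma}{\partial \theta} A^{-1}.
\end{equation}

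There is no real obstacle here; this is a one-line consequence of differentiating $AA^{-1}=I$. The only subtle point worth flagging in the write-up is that one must resist the temptation to write $\partial_{\theta}A^{-1} = -A^{-2}\partial_{\theta}A$, which would be wrong precisely because $[A, \partial_{\theta}A]\ne 0$ in general; the two factors of $A^{-1}$ must sandwich $\partial_{\theta}A$ in the order dictated by the derivation above. This careful ordering is exactly what the rest of the paper (in particular the Duhamel formula discussion) needs when applying these identities to operator exponentials and logarithms.
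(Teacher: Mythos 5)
Your proof is correct and follows essentially the same route as the paper: differentiate the identity $A(\theta)A(\theta)^{-1}=I$ with the product rule, then left-multiply by $A^{-1}$ to isolate $\partial_\theta A^{-1}$, with the linearity hypothesis identifying $\partial_\theta A$ with $\partial_\theta\sigma$. Your additional remark about the ordering of the two $A^{-1}$ factors is a sensible clarification but does not change the argument.
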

\begin{proof}
  The proof follows straight forward by using the identity $I$.
  \begin{equation*}
    \gradtheta{I} = 0 = \frac{\partial}{\partial \theta} AA^{-1} = \left(\gradtheta{A}\right) A^{-1} + A \left(\gradtheta{A^{-1}}\right).
  \end{equation*}
  Reordering the terms completes the proof. This can equally be proven using the Gateau derivative.
\end{proof}
In the following we will furthermore rely on the following well-known inequality.
\begin{lemma}[Von Neumann Trace Inequality]
  Let $A \in \mathbb{C}^{n\times n}$ and $B\in \mathbb{C}^{n\times n}$ with singular values $\{\sigma_i(A)\}_{i=1}^n$ and $\{\sigma_i(B)\}_{i=1}^n$ respectively such that $\sigma_i(\cdot) \le \sigma_j(\cdot)$ if $i\le j$.
  It then holds that
  \begin{equation}
    \abs{\tr{AB}} \leq \sum_{i=1}^n \sigma(A)_i \sigma(B)_i.
  \end{equation}
\end{lemma}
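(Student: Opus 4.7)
The plan is to prove the Von Neumann trace inequality via the singular value decomposition, reducing the problem to bounding a sum over doubly (sub)stochastic matrices, then finishing with the rearrangement inequality.

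First I would write SVDs $A = U_A \Sigma_A V_A^*$ and $B = U_B \Sigma_B V_B^*$, with $\Sigma_A = \mathrm{diag}(\sigma_i(A))$ and $\Sigma_B = \mathrm{diag}(\sigma_i(B))$. Using the cyclic property of the trace,
\begin{equation}
\tr{AB} = \tr{\Sigma_A (V_A^* U_B) \Sigma_B (V_B^* U_A)} = \tr{\Sigma_A W \Sigma_B Z},
\end{equation}
where $W := V_A^* U_B$ and $Z := V_B^* U_A$ are both unitary. Expanding the trace in coordinates gives
\begin{equation}
\tr{AB} = \sum_{i,k} \sigma_i(A)\, \sigma_k(B)\, W_{ik} Z_{ki},
\end{equation}
so by the triangle inequality
\begin{equation}
\abs{\tr{AB}} \le \sum_{i,k} \sigma_i(A)\, \sigma_k(B)\, M_{ik}, \qquad M_{ik} := |W_{ik}||Z_{ki}|.
\end{equation}

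Next I would show that $M$ is doubly substochastic. Since $W$ is unitary, $\sum_i |W_{ik}|^2 = 1$ and $\sum_k |W_{ik}|^2 = 1$, and similarly for $Z$. Cauchy--Schwarz then gives, for each fixed $k$,
\begin{equation}
\sum_i M_{ik} = \sum_i |W_{ik}||Z_{ki}| \le \Bigl(\sum_i |W_{ik}|^2\Bigr)^{1/2}\Bigl(\sum_i |Z_{ki}|^2\Bigr)^{1/2} = 1,
\end{equation}
and by symmetry $\sum_k M_{ik} \le 1$ for each fixed $i$. Hence $M$ has nonnegative entries with all row and column sums bounded by $1$.

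Finally I would invoke Birkhoff's theorem: any doubly substochastic matrix can be written as a convex combination of subpermutation matrices (equivalently, $M$ can be majorized by a convex combination of permutation matrices after extending to a doubly stochastic matrix in twice the dimension). Therefore
\begin{equation}
\sum_{i,k} \sigma_i(A)\,\sigma_k(B)\, M_{ik} \le \max_{\pi \in S_n} \sum_{i} \sigma_i(A)\, \sigma_{\pi(i)}(B).
\end{equation}
Because both sequences $(\sigma_i(A))$ and $(\sigma_i(B))$ are monotone (the excerpt orders them so $i \le j \Rightarrow \sigma_i \le \sigma_j$), the rearrangement inequality identifies the maximizing permutation as the identity, yielding $\abs{\tr{AB}} \le \sum_i \sigma_i(A)\,\sigma_i(B)$, as claimed.

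The main obstacle is the doubly-substochastic step: one has to be careful that Birkhoff's theorem applies to substochastic (not only bistochastic) matrices, which is usually handled by padding $M$ into a $2n \times 2n$ bistochastic matrix and checking that the upper bound is unchanged. The rearrangement step is then routine, and the SVD reduction is essentially bookkeeping.
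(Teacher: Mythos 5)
The paper does not actually prove this lemma: it is quoted as a ``well-known inequality'' and used as a black box, so there is no in-paper argument to compare against. Your proof is correct and complete, and it follows the classical route (essentially von Neumann's original argument as streamlined by Mirsky): the SVD reduction to $\tr{\Sigma_A W \Sigma_B Z}$ with $W,Z$ unitary is right, the Cauchy--Schwarz computation showing that $M_{ik}=|W_{ik}||Z_{ki}|$ has all row and column sums at most $1$ is right, and you correctly flag the one point that needs care --- Birkhoff's theorem in its substochastic form (convex hull of partial permutation matrices), which is handled by the $2n\times 2n$ padding you describe; note that this step also silently uses the nonnegativity of the singular values, since a partial matching sum is only dominated by a full matching sum when all terms are $\geq 0$. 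The final rearrangement step is fine: with the paper's (nonstandard, increasing) ordering convention both sequences are similarly ordered, so the identity permutation maximizes $\sum_i \sigma_i(A)\sigma_{\pi(i)}(B)$. In short, you have supplied a valid proof of a statement the authors only cite.
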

Note that from this we immediately obtain
\begin{equation}
  \label{eq:trace_inequality}
  \abs{\tr{AB}} \leq \sum_{i=1}^n \sigma(A)_i \sigma(B)_i \leq \sigma_{max}(B) \sum_i \sigma(A)_i = \norm{B} \sum_i \sigma(A)_i.
\end{equation}
This is particularly useful if $A$ is Hermitian and PSD, since this implies $\abs{\tr{AB}} \leq \norm{B} \tr{A}$ for Hermitian $A$.

Since we are dealing with operators, the common chain rule of differentiation does not hold generally. Indeed the chain rule is a special case if the derivative of the operator commutes with the operator itself.
Since we are encountering a term of the form $\log \sigma(\theta)$, we can not assume that $[\sigma, \sigma']=0$, where $\sigma':= \sigma^{(1)}$ is the derivative w.r.t., $\theta$.
For this case we need the following identity similarly to Duhamels formula in the derivation of the gradient for the purely-visible-units Boltzmann machine.
\begin{lemma}[Derivative of matrix logarithm \cite{haber2018notes}]
  \label{eq:operator_log_ineq}
  \begin{equation}
    \frac{d}{dt} \log{A(t)} = \int\limits_0^1 [sA + (1-s) I]^{-1} \frac{dA}{dt} [sA + (1-s)I]^{-1}.
  \end{equation}
\end{lemma}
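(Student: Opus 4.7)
The plan is to derive the formula from a known integral representation of the matrix logarithm and then differentiate under the integral, leveraging the inverse--derivative identity~\eqref{eq:grad_inverse} that the paper has already established. The representation I would use is
\begin{equation*}
\log A = \int_0^1 (A-I)\,[sA + (1-s)I]^{-1}\, ds,
\end{equation*}
which can be verified by diagonalizing $A$ and reducing to the scalar identity $\log x = \int_0^1 (x-1)/[sx+(1-s)]\, ds$, itself a direct substitution $u = sx + (1-s)$. Since $A(t)$ is positive (so that $\log A$ is defined), the matrix $sA(t)+(1-s)I$ stays invertible for all $s\in[0,1]$ and the integrand varies smoothly in $t$, justifying differentiation under the integral.

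Next I would apply the product rule. One term arises from $\tfrac{d}{dt}(A-I) = \tfrac{dA}{dt}$, contributing $\int_0^1 \tfrac{dA}{dt}\,[sA+(1-s)I]^{-1}\, ds$. The other term comes from differentiating the inverse; by~\eqref{eq:grad_inverse} we have
\begin{equation*}
\tfrac{d}{dt}[sA+(1-s)I]^{-1} = -s\,[sA+(1-s)I]^{-1}\,\tfrac{dA}{dt}\,[sA+(1-s)I]^{-1},
\end{equation*}
so this piece is $-\int_0^1 s(A-I)\,[sA+(1-s)I]^{-1}\,\tfrac{dA}{dt}\,[sA+(1-s)I]^{-1}\, ds$.

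The final step is a short algebraic simplification using the key identity $s(A-I) = [sA+(1-s)I] - I$, which yields $s(A-I)\,[sA+(1-s)I]^{-1} = I - [sA+(1-s)I]^{-1}$. Substituting this into the second integrand, the $I$ piece produces $-\tfrac{dA}{dt}[sA+(1-s)I]^{-1}$, which exactly cancels the first term from the product rule, leaving the desired symmetric sandwich $[sA+(1-s)I]^{-1}\,\tfrac{dA}{dt}\,[sA+(1-s)I]^{-1}$. The main thing to watch, and the only place where the proof is more subtle than its scalar analogue, is operator ordering: because $A$ and $\tfrac{dA}{dt}$ do not generally commute, the identity $s(A-I) = [sA+(1-s)I] - I$ must be applied on the side where the resolvent $[sA+(1-s)I]^{-1}$ already sits, so that the surviving factor of $\tfrac{dA}{dt}$ is placed between two copies of the resolvent rather than to one side of both. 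With ordering tracked carefully, no further analytic estimates are required; the proof reduces to this purely algebraic cancellation.
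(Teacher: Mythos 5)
Your proof is correct and follows essentially the same route as the paper's: the same integral representation of $\log A$, differentiation under the integral via the inverse-derivative identity~\eqref{eq:grad_inverse}, and the same algebraic cancellation (you apply $s(A-I)=[sA+(1-s)I]-I$ to the second term, whereas the paper equivalently inserts $I=[sA+(1-s)I][sA+(1-s)I]^{-1}$ into the first term and expands). If anything your version is slightly cleaner, since you carry the minus sign from~\eqref{eq:grad_inverse} that the paper's displayed intermediate step drops.
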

For completeness we here inlude a proof of the above identity.
\begin{proof}
  We use the integral definition of the logarithm~\cite{higham2008functions} for a complex, invertible, $n\times n$ matrix $A=A(t)$ with no real negative
  \begin{equation}
    \log{A} = (A-I) \int_0^1 [s(A-I)+I]^{-1}.
  \end{equation}
  From this we obtain the derivative $$\frac{d}{dt} log{A} = \frac{dA}{dt} \int_0^1 ds [s(A-I)+I]^{-1} + (A-I) \int_0^1 ds \frac{d}{dt}[s(A-I)+I]^{-1}.$$
  Applying \eqref{eq:grad_inverse} to the second term on the right hand side yields
  $$\frac{d}{dt} log{A} = \frac{dA}{dt} \int_0^1 ds [s(A-I)+I]^{-1} + (A-I) \int_0^1 ds [s(A-I)+I]^{-1} s \frac{dA}{dt}[s(A-I)+I]^{-1},$$
  which can be rewritten as
  \begin{eqnarray}
    \frac{d}{dt} log{A} =  \int_0^1 ds [s(A-I)+I][s(A-I)+I]^{-1}\frac{dA}{dt} [s(A-I)+I]^{-1} \\+ (A-I) \int_0^1 ds [s(A-I)+I]^{-1} s \frac{dA}{dt}[s(A-I)+I]^{-1},
  \end{eqnarray}
  by adding the identity $I = [s(A-I)+I][s(A-I)+I]^{-1}$ in the first integral and reordering commuting terms (i.e., $s$).
  Notice that we can hence just substract the first two terms in the integral which yields \eqref{eq:operator_log_ineq} as desired.
\end{proof}

\subsection{Amplitude estimation}
\label{sec:amplitude_estimation}
In the following we describe the established \textit{amplitude estimation algorithm}~\cite{brassard2002quantum}:

\begin{algorithm}[tb!]
\caption{Amplitude estimation}
\label{alg:algo3}
\begin{algorithmic}
  \STATE {\bfseries Input:} Density matrix $\rho$, unitary operator $U:\mathbb{C}^{2^n} \rightarrow \mathbb{C}^{2^n}$, qubit registers $\ket{0}\otimes \ket{0}^{\otimes n}$.
  \STATE {\bfseries Output:} An $\tilde \epsilon$ close estimate of $\tr{U \rho}$.
  \STATE {\bfseries 1.} Initialize two registers of appropriate sizes to the state $\ket{0} \mathcal{A} \ket{0}$, where $\mathcal{A}$ is a unitary transformation which prepares the input state, i.e., $\ket{\psi}=\mathcal{A}\ket{0}$.
  \STATE {\bfseries 2.} Apply the quantum Fourier transform $\mathrm{QFT}_N: \ket{x}\rightarrow \frac{1}{\sqrt{N}} \sum_{y=0}^{N-1} e^{2 \pi i x y/N} \ket{y}$ for $0 \leq x < N$, to the first register.
  \STATE {\bfseries 3.} Apply $\Lambda_N(Q)$ to the second register, i.e., let $\Lambda_N(U): \ket{j}\ket{y}\rightarrow \ket{j}(U^j \ket{y})$ for $0\leq j < N$,
                      then we apply $\Lambda_N(Q)$ where $Q:= -\mathcal{A}S_0 \mathcal{A}^{\dagger}S_t$ is the Grover's operator.
  \STATE {\bfseries 4.} Apply $\mathrm{QFT}^{\dagger}_N$ to the first register.
  \STATE {\bfseries 5.} Return $\tilde{a} = \sin^2(\pi \frac{\tilde \theta}{N})$.
\end{algorithmic}
\end{algorithm}

Algorithm~\ref{alg:algo3} describes the amplitude estimation algorithm.
The output is an $\epsilon$-close estimate of the target amplitude.
Note that in step (3), $S_0$ changes the sign of the amplitude if and only if the state is the zero state $\ket{0}$,
and $S_t$ is the sign-flip operator for the target state, i.e., if $\ket{x}$ is the desired outcome, then $S_t := I - 2\ket{x}\bra{x}$.

The algorithm can be summarized as the unitary transformation $$ \left( (\mathrm{QFT}^{\dagger} \otimes I) \Lambda_N(Q) (\mathrm{QFT}_N \otimes I)\right)$$
applied to the state $\ket{0}\mathcal{A}\ket{0}$, followed by a measurement of the first register and classical post-processing returns an estimate $\tilde \theta$ of the amplitude of the desired outcome such that $\lvert \theta - \tilde \theta \rvert \leq \epsilon$ with probability at least $8/\pi^2$.
The result is summarized in the following theorem, which states a slightly more general version.
\begin{theorem}[Amplitude Estimation~\cite{brassard2002quantum}]
\label{thm:amplitude_estimation}
For any positive integer $k$, the Amplitude Estimation Algorithm returns an estimate $\tilde a$ ($0\leq \tilde a \leq 1$) such that $$\lvert \tilde a - a \rvert \leq 2 \pi k \frac{\sqrt{a(1-a)}}{N} + k^2 \frac{\pi^2}{N^2}$$
with probability at least $\frac{8}{\pi^2}\approx 0.81$ for $k=1$ and with probability greater than $1-\frac{1}{2(k-1)}$ for $k\geq 2$. If $a=0$ then $\tilde a=0$ with certainty, and and if $a=1$ and $N$ is even, then $\tilde{a}=1$ with certainty.
\end{theorem}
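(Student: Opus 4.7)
The plan is to reduce amplitude estimation to standard quantum phase estimation applied to the Grover-style walk operator $Q := -\mathcal{A}S_0 \mathcal{A}^\dagger S_t$, and then to transfer the angular precision guarantee into a precision on $a$ via the reparametrization $a = \sin^2\theta$. The key structural fact underlying the proof is that $Q$ preserves the two-dimensional subspace spanned by the ``good'' and ``bad'' components of $\mathcal{A}\ket{0}$ and acts on that plane as a rotation by angle $2\theta$, so its non-trivial spectrum is completely determined by $a$.

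First I would decompose $\mathcal{A}\ket{0} = \sqrt{1-a}\ket{\psi_{\perp}} + \sqrt{a}\ket{\psi_{\mathrm{good}}}$, diagonalize $Q$ on this plane, and exhibit its two non-trivial eigenvalues $e^{\pm 2i\theta}$ with eigenvectors $\ket{\Psi_{\pm}}$ spanning the same subspace. A direct computation shows $\mathcal{A}\ket{0} \in \mathrm{span}\{\ket{\Psi_{+}},\ket{\Psi_{-}}\}$, so the circuit $(\mathrm{QFT}^{\dagger}_{N}\otimes I)\,\Lambda_{N}(Q)\,(\mathrm{QFT}_{N}\otimes I)$ acting on $\ket{0}\mathcal{A}\ket{0}$ is by construction quantum phase estimation applied to a two-component input. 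I would then invoke the textbook QPE bound: the measured integer $y$ yields $\tilde{\theta} := \pi y / N$ with $|\tilde{\theta}-\theta| \leq k\pi/N$, with probability at least $8/\pi^{2}$ for $k=1$ and at least $1 - 1/(2(k-1))$ for $k \geq 2$. Because the final output $\tilde{a} := \sin^{2}(\pi y/N)$ is invariant under $\theta \mapsto -\theta$, the two eigencomponents of $\mathcal{A}\ket{0}$ both feed the same estimate, so no factor of two is lost in the success probability.

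Finally, I would translate angular precision into precision on $a$ using the identity $\sin^{2}\alpha - \sin^{2}\beta = \sin(\alpha-\beta)\sin(\alpha+\beta)$. Setting $\delta := \tilde{\theta}-\theta$ and expanding $\sin(2\theta+\delta)$, the identity gives
\begin{equation}
|\tilde{a}-a| = |\sin\delta\,\sin(2\theta+\delta)| \leq |\delta|\bigl(|\sin 2\theta| + |\delta|\bigr) \leq 2|\delta|\sqrt{a(1-a)} + \delta^{2},
\end{equation}
after applying $\sin(2\theta)=2\sqrt{a(1-a)}$ and $|\sin x|\leq|x|$. Substituting $|\delta|\leq k\pi/N$ yields exactly the stated bound. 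The degenerate cases $a\in\{0,1\}$ are immediate because $\mathcal{A}\ket{0}$ is then already a $Q$-eigenstate with phase $0$ or $1/2$, which QPE resolves exactly whenever $N$ is even. The main obstacle I would expect is the careful bookkeeping around the QPE probability constants in the presence of a two-component input; here the $\theta \leftrightarrow -\theta$ symmetry of $\sin^{2}$ is precisely what preserves the clean $8/\pi^{2}$ constant rather than degrading it to $4/\pi^{2}$.
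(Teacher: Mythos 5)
Your proof is correct and is precisely the standard argument from the cited source \cite{brassard2002quantum} that the paper imports without proof: phase estimation on the Grover operator $Q$ (whose non-trivial eigenvalues $e^{\pm 2i\theta}$ encode $a=\sin^2\theta$), the observation that the $\theta\leftrightarrow-\theta$ symmetry of $\sin^2$ prevents the two-eigencomponent input from halving the success probability, and the transfer bound $|\sin^2\tilde\theta-\sin^2\theta|\le 2|\delta|\sqrt{a(1-a)}+\delta^2$, which is exactly Lemma~7 of that reference (the paper itself points to this lemma immediately after the theorem statement). The only trivial imprecision is that the $a=0$ case is resolved exactly for any $N$, not only even $N$; the parity condition is needed only for $a=1$.
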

Notice that the amplitude $\theta$ can hence be recovered via the relation $\theta = \arcsin{\sqrt{\theta_a}}$ as described above which incurs an $\epsilon$-error for $\theta$ (c.f., Lemma 7, \cite{brassard2002quantum}).

\subsection{The Hadamard test}
\label{sec:eval_dm_operator}

Here we present an easy subroutine to evaluate the trace of products of unitary operators $U$ with a density matrix $\rho$, which is known as the Hadamard test.

\begin{algorithm}[tb!]
\caption{Variational gradient estimation - term 2}
\label{alg:algo0}
\begin{algorithmic}
  \STATE {\bfseries Input:} Density matrix $\rho$, unitary operator $U:\mathbb{C}^{2^n} \rightarrow \mathbb{C}^{2^n}$, qubit registers $\ket{0}\otimes \ket{0}^{\otimes n}$.
  \STATE {\bfseries Output:} An $\tilde \epsilon$ close estimate of $\tr{U \rho}$.
  \STATE {\bfseries 1.} Prepare the first qubits $\ket{+}$ state and initialize the second register to $0$.
  \STATE {\bfseries 2.} Use an appropriate subroutine to prepare the density matrix $\rho$ on the second register to obtain the state $\ket{+}\bra{+}\otimes \rho$.
  \STATE {\bfseries 3.} Apply a controlled operation $\ket{0}\bra{0} \otimes I_{2^n} + \ket{1}\bra{1} \otimes U$, followed by a Hadamard gate.
  \STATE {\bfseries 4.} Perform amplitude estimation on the $\ket{0}$ state, via the reflector $P:= -2 \ket{0}\bra{0} +I$, and operator $G:= \left( 2 \rho - I \right) (P \otimes I)$.
  \STATE {\bfseries 5.} Measure now the phase estimation register which returns an $\tilde \epsilon$-estimate of the probability $\frac{1}{2} \left( 1 + \mathrm{Re}\left[\tr{U\rho}\right] \right) $ of the Hadamard test to return $0$.
  \STATE {\bfseries 6.} Repeat the procedure for an additional controled application of $\exp(i\pi/2)$ in step (3) to recover also the imaginary part of the result.
  \STATE {\bfseries 7.} Return the real and imaginary part of the probability estimates.
\end{algorithmic}
\end{algorithm}

Note that this procedure can easily be adapted to be used for $\rho$ being some Gibbs distribution.
We then would use a Gibbs state preparation routine in step (2).
For example for the evaluation of the gradient of the variational bound, we require this subroutine to evaluate $U=\partial_{\theta}H$ for $\rho$ being the Gibbs distribution corresponding to the Hamiltonian $H$.

\subsection{Deferred proofs}

First for convenience, we formally define quantum Boltzmann machines below.

\begin{definition}
A quantum Boltzmann machine to be a quantum mechanical system that acts on a tensor product of Hilbert spaces $\mathcal{H}_v\otimes \mathcal{H}_h \in \mathbb{C}^{2^n}$ that correspond to the visible and hidden subsystems of the Boltzmann machine.  It further has a Hamiltonian of the form $H \in \mathbb{C}^{2^n \times 2^n}$ such that $\norm{H - \text{diag}(H)} > 0$.  The quantum Boltzmann machine takes these parameters and then outputs a state of the form ${\rm Tr}_h \left(\frac{e^{-H}}{{\rm Tr}(e^{-H})} \right)$.
\end{definition}
Given this definition, we are then able to discuss the gradient of the relative entropy between the output of a quantum Boltzmann machines and the input data that it is trained with.

\subsubsection{Derivation of the variational bound}
\label{sec:variational_bound_derivative}
\begin{proof}[Proof of Lemma~\ref{def:Stilde}]
  Recall that we assume that the Hamiltonian $H$ takes the form $$H:=\sum_k \theta_k v_k \otimes h_k,$$ where $v_k$ and $h_k$ are operators acting on the visible and hidden units respectively and we can assume $h_k=d_k$ to be diagonal in the chosen basis.
  Under the assumption that $[h_i,h_j]=0, \forall i,j$, c.f. the assumptions in \eqref{eq:assumptions}, there exists a basis $\{\ket{h}\}$ for the hidden subspace such that $h_k \ket{h}=E_{h,k}\ket{h}$.
  With these assumptions we can hence reformulate the logarithm as
  \begin{align}
    &\log\trh{e^{-H}} = \log \left( \sum_{v,v',h}  \bra{v,h} e^{-\sum_{k} \theta_k v_k \otimes h_k} \ket{v',h} \ket{v}\bra{v'}\right) \\
    &= \log \left( \sum_{v,v',h}  \bra{v} e^{-\sum_{k} E_{h,k} \theta_k v_k} \ket{v'} \ket{v}\bra{v'} \right) \\
    &= \log \left(\sum_h e^{-\sum_{k} E_{h,k} \theta_k v_k} \right),
  \end{align}
  where it is important to note that $v_k$ are operators and we hence just used the matrix representation of these in the last step.
  In order to further simplify this expression, first note that each term in the sum is a positive semi-definite operator.
  In particularly, note that the matrix logarithm is operator concave and operator monotone, and hence by Jensen's inequality, for any sequence of non-negative number $\{\alpha_i\}: \sum_i \alpha_i =1$ we have that $$\log \left( \frac{\sum_{i=1}^N \alpha_i U_i}{\sum_j \alpha_j} \right) \geq \frac{\sum_{i=1}^N \alpha_i\log \left(  U_i \right)}{\sum_j \alpha_j}.$$
  and since we are optimizing $\tr{\rho \log \rho} -\tr{\rho \log \sigma_v}$ we hence obtain for arbitrary choice of $\{\alpha_i\}_i$ under the above constraints,
  \begin{align}
  \tr{\rho \log \left(\sum_{h=1}^N e^{-\sum_{k} E_{h,k} \theta_k v_k} \right)} &= \tr{\rho \log \left(\sum_{h=1}^N \alpha_h \frac{e^{-\sum_{k} E_{h,k} \theta_k v_k}/\alpha_h }{\sum_{h'} \alpha_{h'}} \right)} \nonumber \\
  & \geq -\tr{\rho \frac{\sum_h \alpha_h \sum_{k} E_{h,k} \theta_k v_k + \sum_h \alpha_h \log \alpha_h}{\sum_{h'} \alpha_{h'}}}.
  \end{align}
  Hence, the variational bound on the objective function for any $\{\alpha_i\}_i$ is
  \begin{align}
  \label{eq:def_variational_bound}
  \mathcal{O}_{\rho}(H) =& \tr{\rho \log \rho} - \tr{\rho \log \sigma_v} \nonumber \\
            &\leq \tr{\rho \log \rho} +  \tr{\rho \frac{\sum_h \alpha_h \sum_{k} E_{h,k} \theta_k v_k + \sum_h \alpha_h \log \alpha_h}{\sum_{h'} \alpha_{h'}}} + \log Z  =: \tilde{S}
  \end{align}
\end{proof}

\subsubsection{Gradient estimation}
\label{sec:var_grad_estimation}
For the following result we will rely on a variational bound in order to train the quantum Boltzmann machine weights for a Hamiltonian $H$ of the form given in \eqref{eq:ham_form}.
We begin by proving Lemma~\ref{lem:gradient_visible_layer} in the main work, which will give us an upper bound for the gradient of the relative entropy.

\begin{proof}[Proof of Lemma~\ref{lem:gradient_visible_layer}]
	We first derive the gradient of the normalization term ($Z$) in the relative entropy, which can be trivially evaluated using Duhamels formula to obtain $$\frac{\partial}{\partial \theta_p} \log \tr{e^{-H}} = - \tr{\frac{\partial H}{\partial \theta_p} \frac{e^{-H}}{Z}} =- \tr{\sigma \partial_{\theta_p} H}.$$
	Note that we can easily evaluate this term by first preparing the Gibbs state $\sigma_{Gibbs} := e^{-H}/Z$ and then evaluating the expectation value of the operator $\partial_{\theta_p}H$ w.r.t. the Gibbs state, using amplitude estimation for the Hadamard test.
	If $T_{Gibbs}$ is the query complexity for the Gibbs state preparation, the query complexity of the whole algorithm including the phase estimation step is then given by $O(T_{Gibbs}/\tilde{\epsilon})$ for an $\tilde{\epsilon}$-accurate estimate of phase estimation.
Taking into account the desired accuracy and the error propagation will hence straight forward give the computational complexity to evaluate this part.\\
	We now proceed with the gradient evaluations for the model term.
    Using the variational bound on the objective function for any $\{\alpha_i\}_i$, given in eq.~\ref{eq:def_variational_bound}, we obtain the gradient
    \begin{align}
      \frac{\partial \tilde{S}}{\partial_{\theta_p}} &=- \tr{\frac{\partial H}{\partial \theta_p} \frac{e^{-H}}{Z}} + \tr{ \frac{\partial }{\partial \theta_p} \rho \sum_h \alpha_h \sum_{k} E_{h,k} \theta_k v_k }+   \frac{\partial }{\partial \theta_p} \sum_h \alpha_h \log \alpha_h \\
&= - \tr{\frac{\partial H}{\partial \theta_p} \frac{e^{-H}}{Z}} +  \frac{\partial }{\partial \theta_p} \left( \sum_h \alpha_h \tr{ \rho \sum_{k} E_{h,k} \theta_k v_k }+  \sum_h \alpha_h \log \alpha_h \right)
    \end{align}
    where the first term results from the partition sum.
The latter term can be seen as a new effective Hamiltonian, while the latter term is the entropy.
The latter term hence resembles the free energy $F(h)=E(h)-TS(h)$, where $E(h)$ is the mean energy of the effective system with energies $E(h):=\tr{ \rho \sum_{k} E_{h,k} \theta_k v_k }$, $T$ the temperature and $S(h)$ the Shannon entropy of the $\alpha_h$ distribution. We now want to choose these $\alpha_h$ terms to minimize this variational upper bound.
It is well-established in statistical physics, see for example~\cite{landau1980statistical}, that the distribution which maximizes the free energy is the Boltzmann (or Gibbs) distribution, i.e., $$\alpha_h =  \frac{e^{- \tr{\rho \tilde{H}_h }}}{\sum_h e^{- \tr{\rho \tilde{H}_h}}},$$
where $\tilde{H}_h :=\sum_k E_{h,k} \theta_k v_k$ is a new effective Hamiltonian on the visible units, and the $\{\alpha_i\}$ are given by the corresponding Gibbs distribution for the visible units.

Therefore, our gradients can be taken with respect to this distribution and the bound above, where $\tr{\rho \tilde{H}_h}$ is the mean energy of the the effective visible system w.r.t. the data-distribution.
For the derivative of the energy term we obtain
\begin{align}
  &\frac{\partial }{\partial \theta_p}  \sum_h \alpha_h \tr{ \rho \sum_{k} E_{h,k} \theta_k v_k } = \\
  &= \sum_h \left( \alpha_h \left( \mathbb{E}_{h'}\left[\tr{ \rho E_{h',p} v_p }\right] - \tr{\rho E_{h,p} v_p} \right) \tr{\rho \tilde{H}_h} + \alpha_h \tr{\rho E_{h,p} v_p}\right) \\
  &= \mathbb{E}_{h} \left[\left( \mathbb{E}_{h'}\left[\tr{ \rho E_{h',p} v_p }\right] - \tr{\rho E_{h,p} v_p} \right) \tr{\rho \tilde{H}_h} + \tr{\rho E_{h,p} v_p}\right],
\end{align}
while the entropy term yields
\begin{align}
  \frac{\partial}{\partial \theta_p} \sum_h \alpha_h \log \alpha_h &=  \sum_h \alpha_h \left( \left[ \tr{\rho E_{h,p}v_p} - \mathbb{E}_{h'} \left[\tr{\rho E_{h',p}v_p} \right] \right] \tr{\rho \tilde{H}_h} - \tr{\rho E_{h,p} v_p} \right) \nonumber \\
  &+ \sum_h \alpha_h \left( \tr{\rho E_{h,p}} - \mathbb{E}_{h'} \left[\tr{\rho E_{h',p}v_p} \right] \right) \log  \tr{e^{-\tilde{H}_h}} \nonumber \\
  &+ \mathbb{E}_{h'} \left[\tr{\rho E_{h',p} v_p} \right].
\end{align}
This can be further simplified to
\begin{align}
  &\sum_h \alpha_h \left( \tr{\rho E_{h,p}v_p} - \mathbb{E}_{h'} \left[\tr{\rho E_{h',p}v_p} \right] \right) \tr{\rho \tilde{H}_h} \\
  =&\mathbb{E}_h \left[ \left( \tr{\rho E_{h,p}v_p} - \mathbb{E}_{h'} \left[\tr{\rho E_{h',p}v_p} \right] \right) \tr{\rho \tilde{H}_h} \right].
\end{align}
Te resulting gradient for the variational bound for the visible terms is hence given by
  \begin{align}
        \frac{\partial \tilde{S}}{\partial_{\theta_p}} &= \mathbb{E}_{h} \left[ \tr{\rho E_{h,p} v_p}\right] - \tr{\frac{\partial H}{\partial \theta_p} \frac{e^{-H}}{Z}}
  \end{align}
\end{proof}
Notably, if we consider no interactions between the visible and the hidden layer, then indeed the gradient above reduces recovers the gradient for the visible Boltzmann machine, which was treated in \cite{kieferova2016tomography}, resulting in the gradient
\begin{equation*}
  \tr{\rho \partial_{\theta_p}H} - \tr{\frac{e^{-H}}{Z} \partial_{\theta_p} H},
\end{equation*}
under our assumption on the form of $H$, $\partial_{\theta_p}H = v_p$.

\subsubsection{Operationalizing the gradient based training}
\label{app:proof_thm_gradient_results}
From Lemma~\ref{lem:gradient_visible_layer}, we know that the derivative of the relative entropy w.r.t.\ any parameter $\theta_p$ can be stated as
    \begin{align}
	\label{eq:gradient_approach_1_recap}
        \frac{\partial \tilde{S}}{\partial_{\theta_p}} =   \mathbb{E}_h \left[ E_{h,p} \right] \tr{\rho v_p} - \tr{\frac{\partial H}{\partial \theta_p} \frac{e^{-H}}{Z}}.
    \end{align}
    Since evaluating the latter part is, as mentioned above, straight forward, we give here an algorithm for evaluating the first part.\\
	Now note that we can evaluate each term $\tr{\rho v_k}$ individually for all $k \in [D]$, i.e., all $D$ dimensions of the gradient via the Hadamard test for $v_k$, assuming $v_k$ is unitary.
	More generally, for non-unitary $v_k$ we could evaluate this term using a linear combination of unitary operations.
	Therefore, the remaining task is to evaluate the terms $\mathbb{E}_{h} \left[ E_{h,p}\right]$ in \eqref{eq:gradient_approach_1_recap}, which reduces to sampling according to the distribution $\{\alpha_h\}$.\\
	For this we need to be able to create a Gibbs distribution for the effective Hamiltonian $\tilde{H}_h = \sum_k \theta_k \tr{\rho v_k} h_k$ which contains only $D$ terms and can hence be evaluated efficiently as long as $D$ is small, which we can generally assume to be true.   	In order to sample according to the distribution $\{\alpha_h\}$, we first evaluate the factors $\theta_k \tr{\rho v_k}$ in the sum over $k$ via the Hadamard test, and then use these in order to implement the Gibbs distribution $\exp{(-\tilde{H_h})}/\tilde{Z}$ for the Hamiltonian $$\tilde{H}_h = \sum_k \theta_k \tr{\rho v_k} h_k.$$
In order to do so, we adapt the results of \cite{van2017quantum} in order to prepare the corresponding Gibbs state (although alternative methods can also be used~\cite{poulin2009sampling,chowdhury2016quantum,yung2012quantum}).

 \begin{theorem}[Gibbs state preparation~\cite{van2017quantum}]
	\label{thm:Gibbs_state_prep}
    Suppose that $I \preceq H$ and we are given $K\in \mathbb{R}_+$ such that $\norm{H}\leq 2K$, and let $H \in \mathbb{C}^{N\times N}$ be a $d$-sparse Hamiltonian, and we know a lower bound $z\leq Z=\tr{e^{-H}}$.
    If $\epsilon \in (0,1/3)$, then we can prepare a purified Gibbs state $\ket{\gamma}_{AB}$ such that
    \begin{equation}
      \norm{\mathrm{Tr}_B \left[\ket{\gamma}\bra{\gamma}_{AB} \right] - \frac{e^{-H}}{Z}} \leq \epsilon
    \end{equation}
    using
    \begin{equation}
	\label{eq:gibbs_state_query_complexity}
      \tilde{\mathcal{O}} \left(\sqrt{\frac{N}{z}} Kd \log \left( \frac{K}{\epsilon}\right) \log \left( \frac{1}{\epsilon} \right) \right)
    \end{equation}
    queries, and
    \begin{equation}
      \tilde{\mathcal{O}} \left(\sqrt{\frac{N}{z}}  Kd \log \left( \frac{K}{\epsilon}\right) \log \left( \frac{1}{\epsilon} \right)  \left[ \log(N) + \log^{5/2} \left( \frac{K}{\epsilon} \right) \right] \right)
    \end{equation}
    gates.
  \end{theorem}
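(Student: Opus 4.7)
The plan is to build the purified Gibbs state by first realising an approximation of the (unnormalised) operator $e^{-H/2}$ as a block-encoded quantum walk polynomial, then applying it to one half of a maximally entangled state, and finally using amplitude amplification to boost the post-selection probability. Concretely, I would proceed in four stages: (i) construct a block encoding of $H/(2K)$ from the sparse-access oracle $O_H$; (ii) use a polynomial approximation of the scalar exponential to build a block encoding of a rescaled $e^{-H/2}$; (iii) act with this block encoding on the $A$ register of the maximally entangled state $\frac{1}{\sqrt N}\sum_{i}\ket{i}_A\ket{i}_B$; and (iv) amplitude-amplify on the flag qubits that certify the block encoding succeeded, whose success probability is $\Theta(Z/N)\ge z/N$.

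For stage (i) the standard construction of Berry, Childs, Kothari and others turns a $d$-sparse query oracle into a block encoding of $H/(2K)$ with $O(d)$ queries per call and a constant number of ancillas; the assumption $I\preceq H$ and $\|H\|\le 2K$ guarantees the spectrum sits in $[1/(2K),1]$ after rescaling so all relevant eigenvalues are well inside $[-1,1]$. For stage (ii) I would take a Chebyshev (or equivalently QSVT) approximation $P(x)\approx e^{-K x}$ on $[1/(2K),1]$ of degree $L=O(K\log(1/\epsilon))$ — the degree bound is the standard one obtained from the Jacobi-Anger expansion truncated so that the tail is $O(\epsilon)$ — and implement $P(H/(2K))$ as a block encoding via quantum singular value transformation. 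Each QSVT step costs one application of the block encoding of $H/(2K)$ and $O(1)$ ancilla operations, so the block encoding of the polynomial uses $\widetilde O(Kd\log(1/\epsilon))$ queries to $O_H$.

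Stage (iii) is the key quantum step: preparing $\tfrac{1}{\sqrt N}\sum_i\ket{i}\ket{i}$ is cheap, and after applying the block encoding of $P(H/(2K))$ to the first register and conditioning on the ancilla flag returning zero, the (sub-)normalised state on $AB$ is $\tfrac{1}{\sqrt N}\sum_i (e^{-H/2}\ket{i})\ket{i}$ up to $O(\epsilon)$ error in trace norm, and tracing out $B$ of its normalisation yields exactly $e^{-H}/Z$. The squared amplitude of the flagged subspace is $\tfrac{1}{N}\mathrm{Tr}(e^{-H})=Z/N\ge z/N$, so fixed-point amplitude amplification requires $O(\sqrt{N/z})$ rounds and succeeds with probability at least $1-\epsilon$, producing the purified Gibbs state $\ket{\gamma}_{AB}$ in question. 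Multiplying the per-round cost by the number of rounds gives the claimed $\widetilde O\bigl(\sqrt{N/z}\,Kd\log(K/\epsilon)\log(1/\epsilon)\bigr)$ query complexity, while the gate count picks up the extra $\log N$ and $\log^{5/2}(K/\epsilon)$ factors from implementing the sparse oracle and the QSVT phase sequence.

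The main technical obstacle is error accounting rather than conceptual: one has to show that a multiplicative-type error in the block encoding of $e^{-H/2}$, once one divides by $\sqrt{Z/N}$ and amplifies by a factor of order $\sqrt{N/z}$, still yields only an additive $\epsilon$ error in trace norm on the purified Gibbs state. This forces the polynomial approximation tolerance to be taken $O(\epsilon\sqrt{z/N})$ rather than $O(\epsilon)$, which is absorbed into the $\log(K/\epsilon)$ factor in $L$ because the logarithm only pays $\log(N/z)$, already hidden in the $\widetilde O$. A secondary subtlety is that $e^{-H/2}$ must be rescaled by a factor at most $e^{-1/2}<1$ so it fits inside a block encoding; this is where the lower bound $I\preceq H$ is actually used, and it is what ensures the Chebyshev approximant is well-conditioned on the relevant spectral interval.
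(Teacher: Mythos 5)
This theorem is not proved in the paper at all: it is imported verbatim from \cite{van2017quantum}, and your sketch is essentially a faithful reconstruction of the argument given in that reference (implement a polynomial/series approximation of $e^{-H/2}$ via a block-encoding built from the sparse-access oracle, apply it to half of a maximally entangled state so that the flagged branch purifies $e^{-H}/Z$ with success amplitude $\sqrt{Z/N}\ge\sqrt{z/N}$, and amplitude-amplify in $O(\sqrt{N/z})$ rounds, tightening the approximation tolerance to $O(\epsilon\sqrt{z/N})$ to survive the amplification). So the proposal is correct and takes essentially the same approach as the cited source, including the key error-accounting subtlety.
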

  Note that by using the above algorithm with $\tilde{H}_{sim}/2$, the preparation of the purified Gibbs state will leave us in the state
  \begin{equation}
    \ket{\psi}_{Gibbs} := \sum_{h} \frac{e^{-E_h/2}}{\sqrt{Z}} \ket{h}_A \ket{\phi_h}_B,
  \end{equation}
where $\ket{\phi_j}_B$ are mutually orthogonal trash states, which can typically be chosen to be $\ket{h}$, i.e., a copy of the first register, which is irrelevant for our computation, and $\ket{h}_A$ are the eigenstates of $\tilde{H}$.
Tracing out the second register will hence leave us in the corresponding Gibbs state $$\sigma_h := \sum_h \frac{e^{-E_h}}{Z} \ket{h}\bra{h}_A,$$
and we can hence now use the Hadamard test with input $h_k$ and $\sigma_h$, i.e., the operators on the hidden units and the Gibbs state, and estimate the expectation value $\mathbb{E}_h \left[E_{h,k}\right]$.
We provide such a method below.

\begin{proof}[Proof of Theorem~\ref{thm:gradient_results}]
Conceptually, we perform the following steps, starting with Gibbs state preparation followed by a Hadamard test coupled with amplitude estimation to obtain estimates of the probability of a $0$ measurement.
The proof follows straight from the algorithm described in~\ref{alg:algo1}.

From this we see that the runtime constitutes the query complexity of preparing the Gibbs state $$T_{Gibbs}^V= \tilde{\mathcal{O}} \left(\sqrt{\frac{2^n}{Z}}\frac{\norm{H(\theta)}d}{\epsilon} \log \left( \frac{\norm{H(\theta)}}{\tilde \epsilon}\right) \log \left( \frac{1}{\tilde \epsilon} \right) \right),$$ where $2^n$ is the dimension of the Hamiltonian, as given  in Theorem~\ref{thm:gradient_results} and combining it with the query complexity of the amplitude estimation procedure, i.e., $1/\epsilon$.
However, in order to obtain a final error of $\epsilon$, we will also need to account for the error in the Gibbs state preparation.
For this, note that we estimate terms of the form $\mathrm{Tr}_{AB} \left[\bra{\psi}_{Gibbs} (h_k \otimes I) \ket{\psi}_{Gibbs}^V\right] = \mathrm{Tr}_{AB} \left[ (h_k \otimes I) \ket{\psi}_{Gibbs}^V\bra{\psi}_{Gibbs}^V\right]$.
We can hence estimate the error w.r.t. the true Gibbs state $\sigma_{Gibbs}$ as
\begin{align}
\label{eq:error_prop_Gibbs_state_prep}
&\mathrm{Tr}_{AB} \left[ (h_k \otimes I) \ket{\psi}_{Gibbs}^V\bra{\psi}_{Gibbs}^V\right] -\mathrm{Tr}_{A} \left[ h_k \sigma_{Gibbs}\right]  \nonumber \\
&\qquad= \mathrm{Tr}_{A} \left[ h_k \mathrm{Tr}_{B}\left[\ket{\psi}_{Gibbs}^V\bra{\psi}_{Gibbs}^V\right] - h_k \sigma_{Gibbs}\right] \nonumber \\
&\qquad\leq \sum_i \sigma_i(h_k) \norm{\mathrm{Tr}_{B}\left[\ket{\psi}_{Gibbs}^V\bra{\psi}_{Gibbs}^V\right] - \sigma_{Gibbs}} \nonumber \\
&\qquad \leq \tilde \epsilon \sum_i \sigma_i(h_k).
\end{align}
For the final error being less then $\epsilon$, the precision we use in the phase estimation procedure, we hence need to set $\tilde \epsilon = \epsilon/(2 \sum_i \sigma_i(h_k)) \leq 2^{-n-1} \epsilon$, reminding that $h_k$ is unitary, and similarly precision $\epsilon/2$ for the amplitude estimation, which yields the query complexity of
\begin{align}
 &\mathcal{O} \left(\sqrt{\frac{N_h}{z_h}}\frac{\norm{H(\theta)}d}{\epsilon} \left(n^2+ n \log \left( \frac{\norm{H(\theta)}}{ \epsilon}\right)  + n \log \left( \frac{1}{  \epsilon}\right) + \log \left( \frac{\norm{H(\theta)}}{ \epsilon}\right) \log \left( \frac{1}{  \epsilon}\right) \right) \right),
 \nonumber\\
 &\qquad \in \widetilde{O} \left(\sqrt{\frac{N_h}{z_h}}\left(\frac{n^2 \|\theta\|_1 d}{\epsilon} \right) \right).
\end{align}
where we denote with $A$ the hidden subsystem with dimensionality $2^{n_h} \leq 2^N$, on which we want to prepare the Gibbs state and with $B$ the subsystem for the trash state.

Similarly, for the evaluation of the second part in \eqref{eq:gradient_approach_1_recap} requires the Gibbs state preparation for $H$, the Hadamard test and phase estimation.
Similar as above we meed to take into account the error.
Letting the purified version of the Gibbs state for $H$ be given by $\ket{\psi}_{Gibbs}$, which we obtain using Theorem~\ref{thm:Gibbs_state_prep}, and $\sigma_{Gibbs}$ be the perfect state, then the error is given by
\begin{align}
 &\mathrm{Tr}_{AB} \left[ (v_k \otimes h_k \otimes I) \ket{\psi}_{Gibbs} \bra{\psi}_{Gibbs} \right] -\mathrm{Tr}_{A} \left[ (v_k \otimes h_k)\sigma_{Gibbs}\right]  \nonumber \\
&\qquad= \mathrm{Tr}_{A} \left[ (v_k \otimes h_k) \mathrm{Tr}_{B}\left[\ket{\psi}_{Gibbs}^V\bra{\psi}_{Gibbs}^V\right] - (v_k \otimes h_k) \sigma_{Gibbs}\right] \nonumber \\
&\qquad\leq \sum_i \sigma_i(v_k \otimes h_k) \norm{\mathrm{Tr}_{B}\left[\ket{\psi}_{Gibbs}^V\bra{\psi}_{Gibbs}^V\right] - h_k \sigma_{Gibbs}} \nonumber \\
&\qquad \leq \tilde \epsilon \sum_i \sigma_i(v_k \otimes h_k),
\end{align}
where in this case $A$ is the subsystem of the visible and hidden subspace and $B$ the trash system.
We hence upper bound the error similar as above and introducing $\xi := \max[N/z, N_h/z_h]$ we can find a uniform bound on the query complexity for evaluating a single entry of the $D$-dimensional gradient is in
$$
 \widetilde{O} \left(\sqrt{\zeta}\left(\frac{n^2 \|\theta\|_1 d}{\epsilon} \right) \right),
$$
thus we attain the claimed query complexity by repeating this procedure for each of the $D$ components of the estimated gradient vector $\mathcal{S}$.

Note that we also need to evaluate the terms $\tr{\rho v_k}$ to precision $\hat{\epsilon} \leq \epsilon$, which though only incurs an additive cost of $D/\epsilon$ to the total query complexity, since this step is required to be performed once. Note that $|\mathbb{E}_h(h_p)|\le 1$ because $h_p$ is assumed to be unitary.
To complete the proof we only need to take the success probability of the amplitude estimation process into account.
For completeness we state the algorithm in the appendix and here refer only to Theorem~\ref{thm:amplitude_estimation}, from which we have that the procedure succeeds with probability at least $8/\pi^2$.
In order to have a failure probability of the final algorithm of less than $1/3$, we need to repeat the procedure for all $d$ dimensions of the gradient and take the median.
We can bound the number of repetitions in the following way.

Let $n_f$ be the number of instances of the gradient estimate such that the error is larger than $\epsilon$ and $n_s$ be the number of instances with an error $\leq \epsilon$ for one dimension of the gradient, and the result that we take is the median of the estimates, where we take $n=n_s+n_f$ samples.
The algorithm gives a wrong answer for each dimension if $n_s \leq \left\lfloor \frac{n}{2} \right\rfloor$, since then the median is a sample such that the error is not bound by $\epsilon$.
Let $p=8/\pi^2$ be the success probability to draw a positive sample, as is the case of the amplitude estimation procedure.
Since each instance of the phase estimation algorithm will independently return an estimate, the total failure probability is given by the union bound, i.e.,
\begin{equation}
\label{eq:probability_boost}
\pr_{fail} \leq D \cdot \pr \left[n_s \leq  \left\lfloor \frac{n}{2} \right\rfloor \right] \leq D \cdot e^{- \frac{n}{2p} \left(p - \frac{1}{2} \right)^2} \leq \frac{1}{3},
\end{equation}
which follows from the Chernoff inequality for a binomial variable with $p>1/2$, which is given in our case.
Therefore, by taking $n \geq \frac{2p}{(p-1/2)^2} \log(3D) = \frac{16}{(8-\pi^2/2)^2} \log(3D) = O(\log(3D))$, we achieve a total failure probability of at most $1/3$.

This is sufficient to demonstrate the validity of the algorithm if
\begin{equation}
  \label{eq:trace_est}
  \tr{\rho \tilde H_h}
\end{equation}
is known exactly.  This is difficult to do because the probability distribution $\alpha_h$ is not usually known apriori.  As a result, we assume that
the distribution will be learned empirically and to do so we will need to draw samples from the purified Gibbs states used as input.
This sampling procedure will incur errors.
To take such errors into account assume that we can obtain estimates $T_h$ of \ref{eq:trace_est} with precision $\delta_t$, i.e.,
\begin{equation}
  \left\lvert T_h - \tr{\rho \tilde H_h} \right\rvert \leq \delta_t.
\end{equation}
Under this assumption we can now bound the distance $\lvert \alpha_h - \tilde{\alpha}_h \rvert$ in the following way.
Observe that
\begin{align}
  \label{eq:error_step1}
  \left\lvert  \alpha_h - \tilde{\alpha}_h\right\rvert &= \left\lvert  \frac{e^{-\tr{\rho \tilde{H}_h}}}{\sum_h e^{-\tr{\rho \tilde{H}_h}}} - \frac{T_h}{\sum_h T_h} \right\rvert \nonumber \\
    &\leq \left\lvert \frac{e^{-\tr{\rho \tilde{H}_h}}}{\sum_h e^{-\tr{\rho \tilde{H}_h}}} - \frac{T_h}{\sum_h e^{-\tr{\rho \tilde{H}_h}}} \right\rvert + \left\lvert \frac{T_h}{\sum_h e^{-\tr{\rho \tilde{H}_h}}} - \frac{T_h}{\sum_h T_h} \right\rvert,
\end{align}
and we hence need to bound the following two quantities in order to bound the error.
First, we need a bound on
\begin{align}
  \label{eq:error_first_part_1}
  \left\lvert e^{-\tr{\rho \tilde H_h}} - e^{-T_h} \right\rvert.
\end{align}
For this, let $f(s):= T_h\ (1-s) + \tr{\rho \tilde H_h}\ s$, such that eq.~\ref{eq:error_first_part_1} can be rewritten as
\begin{align}
  \label{eq:error_first_part_2}
  \left\lvert e^{-f(1)} - e^{-f(0)} \right\rvert &= \left\lvert \int_0^1 \frac{d}{ds} e^{-f(s)} ds \right\rvert \nonumber \\
        &= \left\lvert \int_0^1 \dot{f}(s) e^{-f(s)} ds \right\rvert \nonumber \\
        &= \left\lvert \int_0^1 \left(\tr{\rho \tilde H_h} - T_h \right) e^{-f(s)} ds \right\rvert \nonumber \\
        &\leq \delta e^{-\min_s f(s)} \nonumber \\
        &\leq \delta e^{-\tr{\rho
         H_h} + \delta}
\end{align}
and assuming $\delta \leq \log(2)$, this reduces to
\begin{align}
  \left\lvert e^{-f(1)} - e^{-f(0)} \right\rvert \leq 2 \delta e^{-\tr{\rho \tilde H_h}}.
\end{align}
Second, we need the fact that
\begin{align}
  \left\lvert \sum_h e^{-\tr{\rho \tilde H_h}} - \sum_h T_h \right\rvert \leq 2 \delta \sum_h e^{-\tr{\rho \tilde H_h}}.
\end{align}
Using this, eq.~\ref{eq:error_step1} can be upper bound by
\begin{align}
  &\frac{2 \delta e^{-\tr{\rho \tilde H_h}}}{\sum_h e^{-\tr{\rho \tilde H_h}}} + \lvert T_h\rvert \left\lvert \frac{1}{\sum_h e^{-\tr{\rho \tilde H_h}}} - \frac{1}{(1-2\delta)\sum_h e^{-\tr{\rho \tilde H_h}}} \right\rvert \nonumber \\
      &\qquad\leq \frac{2 \delta e^{-\tr{\rho \tilde H_h}}}{\sum_h e^{-\tr{\rho \tilde H_h}}} + \frac{4 \delta \lvert T_h\rvert}{\sum_h e^{-\tr{\rho \tilde H_h}}},
\end{align}
where we used that $\delta \leq 1/4$.
Note that
\begin{align}
  4 \delta \lvert T_h \rvert &\leq 4 \delta \left(e^{-\tr{\rho \tilde H_h}} + 2 \delta e^{-\tr{\rho \tilde H_h}} \right) \nonumber \\
            &= e^{-\tr{\rho \tilde H_h}} \left(4 \delta + 8 \delta^2 \right) \nonumber \\
            &\leq e^{-\tr{\rho \tilde H_h}}(4 \delta + 2 \delta) \nonumber \\
            &\leq 6 \delta e^{-\tr{\rho \tilde H_h}},
\end{align}
which leads to a final error of
\begin{equation}
  \lvert \alpha_h - \tilde{\alpha}_h \rvert \leq 8 \delta \frac{e^{-\tr{\rho \tilde H_h}}}{\sum_h e^{-\tr{\rho \tilde H_h}}}.
\end{equation}
With this we can now bound the error in the expectation w.r.t. the faulty distribution for some function $f(h)$ to be
\begin{align}
  \left\lvert \mathbb{E}_h(f(h)) - \tilde{\mathbb{E}}_h(f(h)) \right\rvert
  &\leq 8 \delta \sum_h \frac{ f(h) e^{-\tr{\rho \tilde H_h}} }{\sum_h e^{-\tr{\rho \tilde H_h}}} \nonumber \\
        &\leq 8 \delta \max_h f(h).
\end{align}
We can hence use this in order to estimate the error introduced in the first term of eq.~\ref{eq:gradient_approach_1_recap} through errors in the distribution $\{ \alpha_h\}$ as
\begin{align}
\left\lvert \mathbb{E}_h[E_{h,p}]\tr{\rho v_p} - \tilde{\mathbb{E}}[E_{h,p}\tr{\rho v_p}] \right\rvert &\leq 8 \delta \max_{h} \lvert E_{h,p} \tr{\rho v_p} \rvert \nonumber \\
  &\leq 8 \delta \max_{h,p} \lvert E_{h,p} \rvert,
\end{align}
where we used in the last step the unitarity of $v_k$ and the Von-Neumann trace inequality.
For an final error of $\epsilon$, we hence choose $\delta_t = \epsilon/[16\max_{h,p}|E_{h,p}|]$ to ensure that this sampling error incurrs at most
half the error budget of $\epsilon$.
Thus we ensure $\delta \le 1/4$ if $\epsilon \le 4 \max_{h,p} |E_{h,p}|$.

We can improve the query complexity of estimating the above expectation by values by using amplitude amplification, sice we obtain the measurement via a Hadamard test.
For this case we require only $O(\max_{h,p}|E_{h,p}|/\epsilon)$ samples in order to achieve the desired accuracy from the sampling.
Noting that we might not be able to even access $\tilde{H}_h$ without any error, we can deduce that the error of the individual terms of $\tilde{H}_h$ for an $\epsilon$-error in the final estimate must be bounded by $\delta_tv\norm{\theta}_1$,
where with abuse of notation, $\delta_t$ now denotes the error in the estimates of $E_{h,k}$.
Even taking this into account, the evaluation of this contribution is however dominated by the second term, and hence can be neglected in the analysis.

\end{proof}

\subsubsection{Approach 2: Training With Higher Order Divided Differences And Function Approximations}

In this section we develop a scheme to train a quantum Boltzmann machine using divided difference estimates for the relative entropy error.
generate differentiation formulas by differentiating an interpolant.
The idea for this is straightforward: First we construct an interpolating polynomial from the data.
Second, an approximation of the derivative at tany point can be then obtained by a direct differentiation of the interpolant.
We assume in the following that we can simulate and evaluate $\tr{\rho \log \sigma_v}$.
As this is generally non-trivial, and the error is typically large, we propose in the next section a different more specialised approach which, however, still allows us to train arbitrary models with the relative entropy objective.

In order to proof the error of the gradient estimation via interpolation, we first need to establish error bounds on the interpolating polynomial which can be obtained via the remainder of the Lagrange interpolation polynomial.
The gradient error for our objective can then be obtained by as a combination of this error with a bound on the $n+1$-st order derivative of the objective.
We start by bounding the error in the polynomial approximation.
\begin{lemma}
  \label{lem:remainder}
  Let $f(\theta)$ be the $n+1$ times differentiable function for which we want to approximate the gradient and let $p_n(\theta)$ be the degree $n$ Lagrange interpolation polynomial for points $\{\theta_1, \theta_2, \ldots, \theta_k, \ldots, \theta_n\}$.
  The gradient evaluated at point $\theta_k$ is then given by the interpolation polynomial
  \begin{equation}
    \frac{\partial p(\theta_k)}{\partial \theta} = \sum_{j=0}^n f(\theta_j) \mathcal{L}_{n,j}'(\theta_k),
  \end{equation}
  where $\mathcal{L}_{n,j}'$ is the derivative of the Lagrange interpolation polynomials $\mathcal{L}_{\mu,j}(\theta):= \prod_{\substack{k=0\\ k\neq j}}^{\mu} \frac{\theta - \theta_k}{\theta_j - \theta_k}$, and the error is given by
  \begin{equation}
    \left\lvert \frac{\partial f(\theta_k)}{\partial \theta} - \frac{\partial p_n(\theta_k)}{\partial \theta} \right\rvert \leq  \frac{1}{(n+1)!} \left\lvert f^{(n+1)}(\xi(\theta_k)) \prod\limits_{\substack{j=0 \\ j\neq k}}^n (\theta_j - \theta_k) \right\rvert,
  \end{equation}
  where $\xi(\theta_k)$ is a constant depending on the point $\theta_k$ at which we evaluated the gradient, and $f^{(i)}$ denotes the $i$-th derivative of $f$.
\end{lemma}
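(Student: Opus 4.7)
The plan is to derive this result from the standard Lagrange interpolation error formula and to exploit the fact that we evaluate the derivative at one of the interpolation nodes, which causes all the awkward terms (in particular the unknown dependence of $\xi$ on $\theta$) to vanish identically.

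First I would establish the explicit form of $p_n'$. By definition of Lagrange interpolation, $p_n(\theta) = \sum_{j=0}^{n} f(\theta_j)\,\mathcal{L}_{n,j}(\theta)$, where the $f(\theta_j)$ are numerical constants with respect to $\theta$. Differentiating termwise immediately yields $p_n'(\theta_k) = \sum_{j=0}^n f(\theta_j)\,\mathcal{L}_{n,j}'(\theta_k)$, which is the first displayed equation.

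Next I would invoke the classical pointwise Lagrange remainder: for $f\in C^{n+1}$ and for every $\theta$ there exists $\xi(\theta)$ in the smallest interval containing $\theta$ and the nodes such that
\begin{equation}
R_n(\theta) := f(\theta) - p_n(\theta) = \frac{f^{(n+1)}(\xi(\theta))}{(n+1)!}\, w(\theta), \qquad w(\theta) := \prod_{j=0}^{n}(\theta-\theta_j).
\end{equation}
Differentiating gives
\begin{equation}
R_n'(\theta) = \frac{1}{(n+1)!}\!\left[\frac{d}{d\theta} f^{(n+1)}(\xi(\theta))\right] w(\theta) + \frac{f^{(n+1)}(\xi(\theta))}{(n+1)!}\, w'(\theta).
\end{equation}
Here the first summand contains the factor $w(\theta)$, which carries the $\theta$-dependence of $\xi$ along with it. The key observation is that $w(\theta_k)=0$ for every node $\theta_k$, so this problematic term is annihilated the moment we specialise to $\theta=\theta_k$, making it irrelevant whether $\xi(\theta)$ is even differentiable.

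Finally I would evaluate $w'(\theta_k)$ explicitly. By the product rule $w'(\theta) = \sum_{i=0}^{n}\prod_{j\neq i}(\theta-\theta_j)$, and at $\theta=\theta_k$ only the $i=k$ summand survives because every other summand still contains the factor $(\theta_k-\theta_k)$. Hence $w'(\theta_k) = \prod_{j\neq k}(\theta_k-\theta_j)$, and combining this with the vanishing of the first summand yields
\begin{equation}
f'(\theta_k) - p_n'(\theta_k) \;=\; \frac{f^{(n+1)}(\xi(\theta_k))}{(n+1)!}\,\prod_{\substack{j=0\\j\neq k}}^{n}(\theta_k-\theta_j),
\end{equation}
from which the claimed bound follows by taking absolute values (and noting that $|\theta_k-\theta_j|=|\theta_j-\theta_k|$). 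The only real subtlety is the handling of the $\theta$-dependence of $\xi$; once one sees that evaluation at an interpolation node kills the offending term, the rest is routine.
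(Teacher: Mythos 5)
Your proposal is correct and follows essentially the same route as the paper: both start from the classical Lagrange remainder $R_n(\theta)=\frac{f^{(n+1)}(\xi(\theta))}{(n+1)!}w(\theta)$ and exploit the fact that evaluating the derivative at an interpolation node makes $w(\theta_k)=0$, killing the term that carries the unknown $\theta$-dependence of $\xi$ and leaving only $w'(\theta_k)=\prod_{j\neq k}(\theta_k-\theta_j)$. The paper phrases this via the difference quotient $w(\theta_k+\Delta)/\Delta$ while you use the product rule, but the key cancellation is identical.
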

Note that $\theta$ is a point within the set of points at which we evaluate.
\begin{proof}
  Recall that the error for the degree $n$ Lagrange interpolation polynomial is given by
  \begin{equation}
    f(\theta) -  p_n(\theta) \leq  \frac{1}{(n+1)!} f^{(n+1)}(\xi_{\theta}) w(\theta),
  \end{equation}
  where $w(\theta) := \prod\limits_{j=1}^n(\theta - \theta_j)$.
  We want to estimate the gradient of this, and hence need to evaluate
  \begin{equation}
    \gradtheta{f(\theta)} -  \gradtheta{p_n(\theta)} \leq \lim\limits_{\Delta \rightarrow 0} \left( \frac{\frac{1}{(n+1)!} f^{(n+1)}(\xi_{\theta + \Delta}) w(\theta + \Delta) - \frac{1}{(n+1)!} f^{(n+1)}(\xi_{\theta}) w(\theta)}{\Delta} \right).
  \end{equation}
  Now, since we do not necessarily want to estimate the gradient at an arbitrary point $\theta$ but indeed have the freedom to choose the point, we can set $\theta$ to be one of the points at which we evaluate the function $f(\theta)$, i.e., $\theta \in \{ \theta_i \}_{i=1}^n$.
  Let this choice be given by $\theta_k$, arbitrarily chosen. Then we see that the latter term vanishes since $w(\theta_k)=0$. Therefore we have
  \begin{equation}
    \gradtheta{f(\theta_k)} -  \gradtheta{p_n(\theta_k)} \leq \lim\limits_{\Delta \rightarrow 0} \left( \frac{\frac{1}{(n+1)!} f^{(n+1)}(\xi_{\theta_k + \Delta}) w(\theta_k + \Delta)}{\Delta} \right),
  \end{equation}
  and noting that $w(\theta_k)$ contains one term $(\theta_k + \Delta - \theta_k) = \Delta$ achieves the claimed result.
\end{proof}

We will perform a number of approximation steps in order to obtain a form which can be simulated on a quantum computer more efficiently, and only then resolve to divided differences at this ``lower level".
In detail we will perform the following steps.
As described in the body of the paper, we perform the following steps in order to obtain the gradient.
\begin{enumerate}
  \item Approximate the logarithm via a Fourier-like approximation
  \begin{equation}
    \log \sigma_v \rightarrow \log_{K,M}\sigma_v,
  \end{equation}
  which yields a Fourier-like series $\sum_m c_m \exp{(im\pi \sigma_v)}$.
  \item Evaluate the gradient of $\tr{ \frac{\partial}{\partial \theta} \rho \log_{K,M}(\sigma_v)}$,
  yielding terms of the form
  \begin{equation}
    \int_0^1 ds e^{(ism\pi \sigma_v)} \frac{\partial \sigma_v}{\partial \theta} e^{(i(1-s)m\pi \sigma_v)}.
  \end{equation}
  \item Each term in this expansion can be evaluated separately via a sampling procedure, i.e.,
  \begin{equation}
    \int_0^1 ds e^{(ism\pi \sigma_v)} \frac{\partial \sigma_v}{\partial \theta} e^{(i(1-s) m\pi \sigma_v)} \approx \mathbb{E}_s
    \left[ e^{(ism\pi \sigma_v)} \frac{\partial \sigma_v}{\partial \theta} e^{(i(1-s)m\pi \sigma_v)} \right].
  \end{equation}
  \item Apply a divided difference scheme to approximate the gradient $\frac{\partial \sigma_v}{\partial \theta}$.
  \item Use the Fourier series approach to aproximate the density operator $\sigma_v$ by the series of itself,
  i.e., $\sigma_v \approx F(\sigma_v) := \sum_{m'} c_{m'} \exp{(im \pi m' \sigma_v)}$.
  \item Evaluate these terms conveniently via sample based Hamiltonian simulation and the Hadamard test.
\end{enumerate}
In the following we will give concrete bounds on the error introduced by the approximations and details of the implementation.
The final result is then stated in Theorem~\ref{thm:complexity_general_algo}.
We first bound the error in the approximation of the logarithm and then use Lemma~37 of \cite{van2017quantum} to obtain a Fouries series approximation which is close to $\log(z)$.
  The Taylor series of $$\log(x) = \sum_{k=1}^{\infty} (-1)^{k+1}\frac{(x-1)^k}{k} = \sum_{k=1}^{K_1} (-1)^{k+1}\frac{(x-1)^k}{k} + R_{K_1+1}(x-1),$$ for $x \in (0,1)$ and where $R_{K+1}(z)= \frac{f^{K_1+1}(c)}{K!}(z-c)^{K_1}z$ is the Cauchy remainder of the Taylor series, for $-1<z<0$.
  The error can hence be bounded as $$\lvert R_{K_1+1}(z) \rvert = \left\lvert (-1)^{K_1}\frac{z^{K_1+1}(1-\alpha)^{K_1}}{(1+\alpha z)^{K_1+1}}\right\rvert,$$
  where we evaluated the derivatives of the logarithm and $0\leq \alpha \leq 1$ is a parameter. Using that $1+\alpha z \geq 1+z$ (since $z \leq 0$) and hence $0 \leq \frac{1-\alpha}{1+\alpha z} \leq 1$, we obtain the error bound
  \begin{equation}
    \lvert R_{K_1+1}(z) \rvert \leq  \frac{\left\lvert z \right\rvert^{K_1+1}}{1+z}
  \end{equation}
  Reversing to the variable $x$ the error bound for the Taylor series, and assuming that $0<\delta_l <z$ and  $0<|1-z|\leq \delta_u <1$, which is justified if we are dealing with sufficiently mixed states, then the approximation error is given by
  \begin{equation}
    \lvert R_{K_1+1}(z) \rvert \leq  \frac{(\delta_l) ^{K_1+1}}{\delta_u} \overset{!}{\leq} \epsilon_1.
  \end{equation}
  Hence in order to achieve the desired error $\epsilon_1$ we need $$K_1 \geq \frac{\log \left( (\epsilon_1 \delta_u)^{-1} \right)}{\log \left((\delta_l)^{-1}\right)}.$$
  We hence can chose $K_1$ such that the error in the approximation of the Taylor series is $\leq \epsilon_1/4$.
  This implies we can make use of Lemma~37 of \cite{van2017quantum}, and therefore obtain a Fourier series approximation for the logarithm.
  We will restate this Lemma here for completeness:

  \begin{lemma}[Lemma 37, \cite{van2017quantum}]
    \label{lem:vanApeldoorn}
    Let $f:\mathbb{R} \rightarrow \mathbb{C}$ and $\delta,\epsilon \in (0,1)$, and $T(f):= \sum_{k=0}^K a_k x^k$ be a polynomial such that $\left\lvert f(x) - T(f) \right\rvert \leq \epsilon/4$ for all $x \in [-1+\delta, 1-\delta]$.
    Then $\exists c \in \mathbb{C}^{2M+1}:$
    \begin{equation}
      \left\lvert f(x) - \sum_{m=-M}^{M} c_m e^{\frac{i \pi m}{2}x} \right\rvert \leq \epsilon
    \end{equation}
    for all $x \in [-1+\delta, 1- \delta]$, where $M= \max \left(2 \left\lceil \ln \left( \frac{4 \norm{a}_1}{\epsilon} \right) \frac{1}{\delta} \right\rceil, 0\right)$ and $\norm{c}_1 \leq \norm{a}_1$.
    Moreover, $c$ can be efficiently calculated on a classical computer in time $\mathrm{poly}(K,M,\log(1/\epsilon))$.
  \end{lemma}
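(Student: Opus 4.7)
The plan is to reduce the problem to a monomial-by-monomial approximation and then reassemble. Given the polynomial $T(f)(x) = \sum_{k=0}^K a_k x^k$, I would approximate each $x^k$ individually by a short Fourier-type sum $\sum_{m=-M}^M c_m^{(k)} e^{i\pi m x/2}$ on the interval $[-1+\delta, 1-\delta]$, asking for two properties at once: a sharp coefficient bound $\norm{c^{(k)}}_1 \leq 1$ and a uniform pointwise approximation error at most $\eta := 3\epsilon/(4\norm{a}_1)$. Setting $c_m := \sum_k a_k c_m^{(k)}$, the triangle inequality immediately gives $\norm{c}_1 \leq \norm{a}_1$, while combining the per-monomial error with the hypothesis $\abs{f - T(f)} \leq \epsilon/4$ yields total error at most $\epsilon/4 + \norm{a}_1 \eta \leq \epsilon$, matching the stated conclusion.

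For each monomial approximation, I would work on the periodic domain $[-2,2]$, on which $\{e^{i\pi m x/2}\}_{m\in\mathbb{Z}}$ forms a complete orthonormal Fourier basis with period $4$. The buffer of width $\delta$ between $[-1+\delta,1-\delta]$ and the period edges gives room to extend $x^k$ smoothly outside the target subinterval without disturbing its values inside. Concretely, one replaces $x^k$ by a $4$-periodic function that coincides with $x^k$ on $[-1+\delta, 1-\delta]$ and is made analytic in a complex strip of width $\Omega(\delta)$, for instance by multiplying with a smoothed indicator of scale $\delta$ or convolving with a $\delta$-scale mollifier. Standard Paley--Wiener arguments then yield Fourier coefficients that decay like $e^{-c\abs{m}\delta}$, so truncating to $\abs{m}\leq M = \Theta(\log(1/\eta)/\delta)$ achieves uniform error $\eta$ on the target subinterval; substituting $\eta = 3\epsilon/(4\norm{a}_1)$ reproduces the stated $M = 2\lceil \ln(4\norm{a}_1/\epsilon)/\delta \rceil$ up to constants.

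The main obstacle will be the sharp bound $\norm{c^{(k)}}_1 \leq 1$, since a generic smoothed-extension Fourier construction only provides $L^2$ control and can inflate $\norm{c^{(k)}}_1$ by factors of $M$. I would handle this by choosing a construction in which the approximant to $x^k$ is a convolution with a \emph{positive} kernel $K_M$ whose Fourier support lies in $\abs{m}\leq M$ and whose $L^1$ mass is normalised to $1$, for example a modified Fej\'er kernel adapted to scale $\delta$, exploiting the pointwise bound $\abs{x^k}\leq 1$ on $[-1,1]$ to control the coefficient $L^1$ norm of the result. An alternative and likely cleaner strategy is to first expand $x^k$ in Chebyshev polynomials $T_j(x) = \cos(j\arccos x)$, whose expansion coefficients are nonnegative and sum to $x^k|_{x=1} = 1$ and so have $\ell^1$ norm exactly $1$, and then convert each $T_j$ into a Fourier series in $\pi x/2$ via the change of variables $\arccos x \mapsto \pi(1-x)/2$, tracking the $L^1$ norm through every step so that no blow-up occurs.

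Finally, the efficiency of the classical precomputation is mechanical once a construction is fixed: each $c_m^{(k)}$ reduces to a definite integral of an explicit smooth function involving the frequency $m$ and scale $\delta$, and admits high-precision numerical quadrature in $\mathrm{poly}(M,\log(1/\epsilon))$ arithmetic operations; the summation $c_m = \sum_k a_k c_m^{(k)}$ then costs a further $O(KM)$ operations, yielding the claimed overall $\mathrm{poly}(K, M, \log(1/\epsilon))$ classical runtime.
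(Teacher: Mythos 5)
This lemma is not proved in the paper at all: it is imported verbatim from \cite{van2017quantum} and restated ``for completeness'', so the only in-paper point of comparison is the proof of Lemma~\ref{lem:bounds_z_approx}, which runs the same machinery for the special case of the identity function. Your outer reduction is exactly the right one and coincides with the cited proof: approximate each monomial $x^k$ by a Fourier sum with coefficient $\ell_1$-norm at most $1$ and uniform error $\eta=3\epsilon/(4\norm{a}_1)$, recombine with weights $a_k$, and use the triangle inequality to get both $\norm{c}_1\le\norm{a}_1$ and the overall $\epsilon$ budget. You also correctly isolate the crux, namely the bound $\norm{c^{(k)}}_1\le 1$.

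However, neither of your two mechanisms for the monomial step actually delivers it. The positive-kernel route controls the wrong norm: positivity and normalization of $K_M$ together with $\abs{x^k}\le 1$ give $\norm{g*K_M}_\infty\le 1$, but what you need is the coefficient norm $\sum_m\abs{\hat g(m)\hat K_M(m)}$, which is only bounded by $\sum_m\abs{\hat g(m)}$, and there is no inequality from a sup-norm bound to an $\ell_1$ coefficient bound. Worse, positive convolution operators saturate: any nonnegative kernel with Fourier support in $[-M,M]$ has $1-\hat K_M(1)=\Omega(M^{-2})$ (and $\Theta(M^{-1})$ for Fej\'er), so the pointwise error of $g*K_M$ is $\Omega(M^{-2})$ whenever $\hat g(1)\neq0$; reaching error $\epsilon/\norm{a}_1$ would then force $M=\mathrm{poly}(\norm{a}_1/\epsilon)$, exponentially worse than the stated $M=O(\ln(\norm{a}_1/\epsilon)/\delta)$. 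The Chebyshev route starts well ($x^k=\sum_j b_jT_j(x)$ with $b_j\ge0$ and $\sum_j b_j=1$), but the proposed ``change of variables $\arccos x\mapsto\pi(1-x)/2$'' is not an identity: $\cos(j\arccos x)\neq\cos(j\pi(1-x)/2)$ except at $x\in\{-1,0,1\}$, and already for $j=k=1$ it would assert $x=\sin(\pi x/2)$. The missing ingredient is precisely the correction $x=\frac{2}{\pi}\arcsin(\sin(\pi x/2))$ on $[-1,1]$: expand $\left(\frac{2}{\pi}\arcsin(z)\right)^k$ as a power series with nonnegative coefficients summing to $1$, substitute $z=\sin(\pi x/2)$, whose powers are binomial combinations of the $e^{i\pi m x/2}$ with coefficient $\ell_1$-norm exactly $1$, then truncate the power series using $\abs{\sin(\pi x/2)}\le\cos(\pi\delta/2)$ and the binomial sums using a Chernoff bound; every step is a sub-convex combination, which is what preserves $\norm{c^{(k)}}_1\le1$ while giving $M=O(\ln(\norm{a}_1/\epsilon)/\delta)$. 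This is the argument of \cite{van2017quantum}, mirrored for $k=1$ in the paper's proof of Lemma~\ref{lem:bounds_z_approx}. A secondary issue: a compactly supported smooth cutoff is never analytic, so the Paley--Wiener decay $e^{-c\abs{m}\delta}$ does not follow from a $\delta$-scale mollified periodization either.
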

  In order to apply this lemma to our case, we restrict the approximation rate to the range $(\delta_l, \delta_u)$, where $0 < \delta_l \leq \delta_u <1$.
  Therefore we obtain over this range a approximation of the following form.
  \begin{corollary}
    Let $f:\mathbb{R} \rightarrow \mathbb{C}$ be defined as $f(x)=\log(x)$, $\delta,\epsilon_1 \in (0,1)$, and $\log_{K}(1-x):= \sum_{k=1}^{K_1} \frac{(-1)^{k-1}}{k} x^k$ such that $a_k:=\frac{(-1)^{k-1}}{k}$ and $\norm{a}_1 = \sum_{k=1}^{K_1} \frac{1}{k}$ with $K_1 \geq \frac{\log \left(4(\epsilon_1 \delta_1^u)^{-1} \right)}{\log \left((\delta_l)^{-1}\right)}$ such that $\left\lvert \log(x) -\log_{K}(x)\right\rvert \leq \epsilon_1/4$ for all $x \in [\delta_l, \delta_u]$.
    Then $\exists c \in \mathbb{C}^{2M+1}:$
    \begin{equation}
      \left\lvert f(x) - \sum_{m=-M_1}^{M_1} c_m e^{\frac{i \pi m}{2}x} \right\rvert \leq \epsilon_1
    \end{equation}
    for all $x \in [\delta_l, \delta_u]$, where $M_1= \max \left(2 \left\lceil \ln \left( \frac{4 \norm{a}_1}{\epsilon_1} \right) \frac{1}{1-\delta_u} \right\rceil, 0\right)$ and $\norm{c}_1 \leq \norm{a}_1$.
    Moreover, $c$ can be efficiently calculated on a classical computer in time $\mathrm{poly}(K_1,M_1,\log(1/\epsilon_1))$.
  \end{corollary}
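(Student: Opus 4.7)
The plan is to deduce this corollary as a direct specialization of Lemma~\ref{lem:vanApeldoorn} to the logarithm, treating $\log_{K_1}$ as the polynomial $T(f)$ of that lemma. I would proceed in three steps.

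First, I would invoke the Cauchy-remainder analysis carried out in the paragraph immediately preceding the corollary. That analysis already establishes that, for $x \in [\delta_l,\delta_u] \subset (0,1)$, the truncated Taylor series $\log_{K_1}$ satisfies $|\log(x) - \log_{K_1}(x)| \le \delta_l^{K_1+1}/\delta_u$. Solving for $K_1$ to force this remainder to be at most $\epsilon_1/4$ yields precisely the stated lower bound $K_1 \ge \log\!\bigl(4(\epsilon_1 \delta_u)^{-1}\bigr)/\log(\delta_l^{-1})$. This verifies the hypothesis ``$|f(x) - T(f)(x)| \le \epsilon_1/4$'' required to apply Lemma~\ref{lem:vanApeldoorn}.

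Second, I would compute the coefficient norm explicitly. Since $a_k = (-1)^{k-1}/k$, one has $\|a\|_1 = \sum_{k=1}^{K_1} 1/k$, which is exactly the quantity appearing in the corollary's formula for $M_1$. The interval $[\delta_l,\delta_u]$ is contained in the interval $[-1+\delta, 1-\delta]$ of Lemma~\ref{lem:vanApeldoorn} for the choice $\delta := 1-\delta_u$, since $\delta_l \ge 0 \ge -\delta_u = -(1-\delta)$. Substituting this $\delta$ and $\|a\|_1$ into the conclusion of Lemma~\ref{lem:vanApeldoorn} produces a vector $c \in \mathbb{C}^{2M_1+1}$ with $\|c\|_1 \le \|a\|_1$ and $M_1 = \max\!\bigl(2\lceil \ln(4\|a\|_1/\epsilon_1)/(1-\delta_u)\rceil, 0\bigr)$, together with the stated $\mathrm{poly}(K_1,M_1,\log(1/\epsilon_1))$ classical precomputation cost.

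Third, Lemma~\ref{lem:vanApeldoorn} then delivers the approximation bound $|f(x) - \sum_{m=-M_1}^{M_1} c_m e^{i\pi m x/2}| \le \epsilon_1$ uniformly for $x$ in the interval, which is exactly the conclusion of the corollary. The only place where the proof requires genuine care, and the main obstacle I foresee, is the bookkeeping around the two conventions for the variable: the corollary writes $\log_{K}$ via a substitution in $1-x$, whereas Lemma~\ref{lem:vanApeldoorn} is phrased for a polynomial expansion in the argument itself. One must therefore decide whether to apply the lemma in the shifted variable $y = 1-x$ (translating the resulting Fourier series back to $x$ via $e^{i\pi m(1-x)/2} = e^{i\pi m/2} e^{-i\pi m x/2}$, which preserves $\|c\|_1$ and merely relabels the coefficients) or in $x$ directly, and to ensure that the choice of $\delta$ and of $\|a\|_1$ are consistent with that choice. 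Once this accounting is pinned down, no further ideas beyond the cited lemma are needed.
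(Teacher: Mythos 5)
Your proposal is correct and follows essentially the same route as the paper, which simply combines the preceding Cauchy-remainder bound on the truncated Taylor series (giving the stated $K_1$) with Lemma~\ref{lem:vanApeldoorn} applied on $[\delta_l,\delta_u]\subseteq[-(1-\delta)\,,\,1-\delta]$ for $\delta=1-\delta_u$. Your explicit attention to the shift between the variable conventions ($\log_K$ written in $1-x$ versus the lemma's polynomial in $x$, with the phase $e^{i\pi m/2}$ absorbed into $c_m$ so that $\norm{c}_1$ is preserved) is a point the paper glosses over, but it does not change the argument.
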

  \begin{proof}
    The proof follows straight forward by combining Lemma~\ref{lem:vanApeldoorn} with the approximation of the logarithm and the range over which we want to approximate the function.
  \end{proof}
  In the following we denote with $\log_{K,M}(x):=\sum_{m=-M_1}^{M_1} c_m e^{\frac{i \pi m}{2}x}$, where we keep the $K$-subscript to denote that classical computation of this approximation is $\mathrm{poly}(K)$-dependent.
  We can now express the gradient of the objective via this approximation as
  \begin{equation}
    \label{eq:exact_gradient_log_approx}
    \tr{\frac{\partial}{\partial \theta} \rho \log_{K,M} \sigma_v} \approx \sum_{m=-M_1}^{M_1} \frac{i c_m m \pi}{2} \int_0^1 ds\ \tr{ \rho e^{\frac{i s \pi m}{2}\sigma_v} \frac{\partial \sigma_v}{\partial \theta} e^{\frac{i (1-s) \pi m}{2}\sigma_v}}.
  \end{equation}
  where we can evaluate each term in the sum individually and then classically post process the results, i.e., sum these up. In particular the latter can be evaluated as the expectation value over $s$, i.e.,
  \begin{equation}
    \int_0^1 ds\ \tr{ \rho e^{\frac{i s \pi m}{2}\sigma_v} \frac{\partial \sigma_v}{\partial \theta} e^{\frac{i (1-s) \pi m}{2}\sigma_v}} = \mathbb{E}_{s \in[0,1]} \left[\tr{ \rho e^{\frac{i s \pi m}{2}\sigma_v} \frac{\partial \sigma_v}{\partial \theta} e^{\frac{i (1-s) \pi m}{2}\sigma_v}} \right],
  \end{equation}
 which we can evaluate separately on a quantum device.
 In the following we hence need to device a method to evaluate this expectation value.\\
 First, we will expand the gradient using a divided difference formula such that $\frac{\partial \sigma_v}{\partial \theta}$ is approximated by the Lagrange interpolation polynomial of degree $\mu-1$, i.e., $$\frac{\partial \sigma_v}{\partial \theta}(\theta) \approx \sum_{j=0}^{\mu}\sigma_v(\theta_j) \mathcal{L'}_{\mu,j}(\theta),$$ where $$\mathcal{L}_{\mu,j}(\theta):= \prod_{\substack{k=0\\ k\neq j}}^{\mu} \frac{\theta - \theta_k}{\theta_j - \theta_k}.$$
 Note that the order $\mu$ is free to chose, and will guarantee a different error in the solution of the gradient estimate as described prior in Lemma~\ref{lem:remainder}. Using this in the gradient estimation, we obtain a polynomial of the form (evaluated at $\theta_j$, i.e., the chosen points)
\begin{equation}
 \sum_{m=-M_1}^{M_1} \frac{i c_m m \pi}{2} \sum_{j=0}^{\mu} \mathcal{L'}_{\mu,j}(\theta_j ) \mathbb{E}_{s \in[0,1]} \left[\tr{ \rho e^{\frac{i s \pi m}{2}\sigma_v} \sigma_v(\theta_j) e^{\frac{i (1-s) \pi m}{2}\sigma_v}} \right],
\end{equation}
where each term again can be evaluated separately, and efficiently combined via classical post processing. Note that the error in the Lagrange interpolation polynomial decreases exponentially fast, and therefore the number of terms we use is sufficiently small to do so.
Next, we need to deploy a method to evaluate the above expressions.
In order to do so, we implement $\sigma_v$ as a Fourier series of itself, i.e., $\sigma_v = \arcsin(\sin(\sigma_v \pi/2)/(\pi/2))$, which we will then approximate similar to the approach taken in Lemma~\ref{lem:vanApeldoorn}. With this we obtain the following result.

\begin{lemma}\label{lem:bounds_z_approx}
  Let $\delta,\epsilon_2 \in (0,1)$, and $\tilde x:= \sum_{m'=-M_2}^{M_2} \tilde{c}_{m'} e^{i \pi m'x/2}$ with  $K_2 \geq \frac{\log(4/\epsilon_2)}{\log(\delta_u^{-1})}$ and $M_2 \geq \left\lceil \log \left(\frac{4}{\epsilon_2} \right) \sqrt{(2\log{\delta_u^{-1}})^{-1}} \right\rceil$ and $x \in [\delta_l, \delta_u]$.
  Then $\exists \tilde c \in \mathbb{C}^{2M+1}:$
  \begin{equation}
    \left\lvert x - \tilde x \right\rvert \leq \epsilon_2
  \end{equation}
  for all $x \in [\delta_l, \delta_u]$, and $\norm{c}_1 \leq 1$.
  Moreover, $\tilde c$ can be efficiently calculated on a classical computer in time $\mathrm{poly}(K_2,M_2,\log(1/\epsilon_2))$.
\end{lemma}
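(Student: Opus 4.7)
The plan is to mirror the two-stage construction used in the preceding corollary for $\log(x)$: first approximate the identity function $f(x)=x$ on $[\delta_l,\delta_u]$ by a real polynomial of degree $K_2$ with coefficient $\ell_1$-norm bounded by $1$, and then convert this polynomial into a truncated Fourier-like series in $e^{i\pi m' x/2}$. Splitting the error budget as $\epsilon_2/2$ for each stage yields the overall bound $\epsilon_2$.

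For the polynomial stage I would exploit the identity $x = \tfrac{2}{\pi}\arcsin(\sin(\pi x/2))$, valid on $[-1,1]$, and truncate the Taylor series of $\arcsin$ after $K_2$ terms. Since $|\sin(\pi x/2)| \leq \sin(\pi\delta_u/2) < 1$ for $x \in [\delta_l,\delta_u]$, the Taylor remainder decays geometrically in $\delta_u$, and a direct estimate yields the stated threshold $K_2 \geq \log(4/\epsilon_2)/\log(\delta_u^{-1})$ to bring the residual below $\epsilon_2/2$. The Taylor coefficients of $\arcsin$ are nonnegative and sum to $\arcsin(1)=\pi/2$, so after multiplying by $2/\pi$ the resulting polynomial in $\sin(\pi x/2)$ has $\|a\|_1 \leq 1$, which is exactly the hypothesis that makes the second stage clean.

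For the Fourier stage, rather than invoking Lemma~\ref{lem:vanApeldoorn} as a black box, I would apply the binomial identity $\sin^{2k+1}(\pi x/2) = (2i)^{-(2k+1)}(e^{i\pi x/2}-e^{-i\pi x/2})^{2k+1}$ to express the degree-$K_2$ polynomial in $\sin(\pi x/2)$ as an exact finite Fourier series in $e^{i\pi m' x/2}$, and then truncate at frequency $M_2$. The discarded tail consists of normalised binomial coefficients, which admit a Hoeffding-type Gaussian bound of the form $\exp(-M_2^2/(2K_2))$; equating this to $\epsilon_2/2$ and substituting the value of $K_2$ from the previous stage produces the stated bound $M_2 \geq \log(4/\epsilon_2)/\sqrt{2\log\delta_u^{-1}}$. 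The inequality $\|\tilde c\|_1 \leq \|a\|_1 \leq 1$ is inherited from the polynomial coefficients, and classical efficiency $\mathrm{poly}(K_2,M_2,\log(1/\epsilon_2))$ for computing $\tilde c$ follows because both the truncated $\arcsin$ coefficients and the subsequent binomial expansions are explicit closed forms.

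The main obstacle is the square-root scaling of $M_2$ in $1/\sqrt{\log\delta_u^{-1}}$, since a direct application of Lemma~\ref{lem:vanApeldoorn} would only yield a linear dependence on $1/(1-\delta_u)$. I expect that closing this gap is precisely what forces the explicit binomial route rather than a black-box invocation: one needs the tight Gaussian concentration of the binomial weights around frequency zero, not the worst-case Fourier tail that Lemma~\ref{lem:vanApeldoorn} controls. Once that one-line tail bound is in hand, the rest of the argument assembles by substitution.
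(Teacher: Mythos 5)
Your proposal is correct and follows essentially the same route as the paper's own proof: truncating the Taylor series of $\arcsin$ applied to $\sin(\pi x/2)$ to fix $K_2$, then expanding $\sin^l(\pi x/2)$ via the binomial identity into complex exponentials and truncating the frequencies using the Chernoff/Hoeffding bound on binomial tails to fix $M_2$, with the two $\epsilon_2/2$ error budgets combined by the triangle inequality. Your added observations --- that $\|a\|_1\le 1$ follows from the nonnegative $\arcsin$ coefficients summing to $\pi/2$, and that the square-root scaling of $M_2$ forces the explicit binomial route rather than a black-box use of Lemma~\ref{lem:vanApeldoorn} --- are both accurate and consistent with what the paper actually does.
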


\begin{proof}
Invoking the technique used in~\cite{van2017quantum}, we expand $$\arcsin(z) = \sum_{k'=0}^{K_2} 2^{-2k'} {2k' \choose k'} \frac{z^{2k'+1}}{2k'+1} + R_{K_2+1}(z),$$ wher $R_{K_2+1}$ is the remainder as before.
For $0<z\leq \delta_u \leq 1/2$, remainder can be bound by $ \left\lvert R_{K_2+1} \right\rvert \leq \frac{\left\lvert \delta_u \right\rvert^{K_2+1}}{1/2} \overset{!}{\leq} \epsilon_2/2$, which gives the bound $$K_2 \geq \frac{\log(4/\epsilon_2)}{\log(\delta_u^{-1})}.$$
We then approximate
\begin{equation}
\sin^l(x)=\left(\frac{i}{2}\right)^l \sum_{m'=0}^l (-1)^{m'} {l \choose m'} e^{ix(2m'-l)}
\end{equation} by
\begin{equation}
\sin^l(x)\approx \left(\frac{i}{2}\right)^l \sum_{m'=\lceil l/2\rceil - M_2}^{\lfloor l/2\rfloor + M_2} (-1)^{m'} {l \choose m'} e^{ix(2m'-l)},
\end{equation}
which induces an error of $\epsilon_2/2$ for the choice $$M_2 \geq \left\lceil \log \left(\frac{4}{\epsilon_2} \right) \sqrt{(2\log{\delta_u^{-1}})^{-1}} \right\rceil.$$
This can be seen by using Chernoff's inequality for sums of binomial coefficients, i.e., $$\sum_{m'=\lceil l/2+M_2\rceil}^l 2^{-l} {l \choose m'} \leq e^{-\frac{2M_2^2}{l}},$$ and chosing $M$ appropriately.
Finally, defining $f(z):= \arcsin(\sin(z \pi/2)/(\pi/2))$, as well as $\tilde{f}_1 := \sum_{k'=0}^{K_2} b_{k'} \sin^{2k'+1}(z \pi/2)$ and
\begin{equation}
\tilde{f}_2(z) := \sum_{k'=0}^{K_2} b_{k'}\left(\frac{i}{2}\right)^l \sum_{m'=\lceil l/2\rceil - M_2}^{\lfloor l/2\rfloor + M_2} (-1)^{m'} {l \choose m'} e^{ix(2m'-l)},
\end{equation}
 and observing that
 $$\norm{f -\tilde{f}_2}_{\infty} \leq \norm{f - \tilde{f}_1}_{\infty} + \norm{\tilde{f}_1 - \tilde{f}_2}_{\infty},$$ yields the final error of $\epsilon_2$ for the approximation
$z \approx \tilde z = \sum_{m'} \tilde{c}_{m'} e^{i \pi m' z/2}$.
\end{proof}
Note that this immediately leads to an $\epsilon_2$ error in the spectral norm for the approximation
\begin{equation}
\left\lVert \sigma_v-\sum_{m'=-M_2}^{M_2} \tilde{c}_{m'} e^{i \pi m' \sigma_v/2}\right\rVert_2 \leq \epsilon_2,
\end{equation}
where $\sigma_v$ is the reduced density matrix.

Since our final goal is to estimate $\tr{\partial_{\theta} \rho \log \sigma_v}$, with a variety of $\sigma_v(\theta_j)$ using the divided difference approach, we also need to bound the error in this estimate which we introduce with the above approximations.
Bounding the derivative with respect to the remainder can be done by using the truncated series expansion and bounding the gradient of the remainder. This yields the following result.

\begin{lemma}
    \label{lem:proof_error_bound_grads}
    For the of the parameters $M_1,M_2, K_1,L,\mu,\Delta,s$ given in eq.~(\ref{eq:bound_M1}-\ref{eq:bounds_epss}), and $\rho, \sigma_v$ being two density matrices, we can estimate the gradient of the relative entropy such that
    \begin{equation}
      \left\lvert \partial_{\theta}\tr{\rho \log \sigma_v} - \partial_{\theta} \tr{\rho \log_{K_1,M_1} \tilde{\sigma}_v} \right\rvert \leq \epsilon,
    \end{equation}
    where the function $\partial_{\theta} \tr{\rho \log_{K_1,M_1} \tilde{\sigma}_v}$ evaluated at $\theta$ is defined as
    \begin{equation}
      \label{eq:full_approximation}
    \mathrm{Re}\left[ \sum_{m=-M_1}^{M_1} \sum_{m'=-M_2}^{M_2} \frac{i c_m \tilde{c}_{m'} m \pi}{2} \sum_{j=0}^{\mu} \mathcal{L}'_{\mu,j}(\theta) \mathbb{E}_{s \in[0,1]} \left[\tr{ \rho e^{\frac{i s \pi m}{2}\sigma_v} e^{\frac{i \pi m'}{2} \sigma_v(\theta_j)} e^{\frac{i (1-s) \pi m}{2}\sigma_v}} \right] \right]
    \end{equation}
    The gradient can hence be approximated to error $\epsilon$ with $O(\text{poly}(M_1,M_2,K_1,L,s,\Delta,\mu))$ computation on a classical computer and using only the Hadamard test, Gibbs state preparation and LCU on a quantum device.
\end{lemma}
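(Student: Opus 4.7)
The plan is to control the total error by a three-way triangle inequality, in which each intermediate quantity corresponds to one of the three approximation steps (Fourier expansion of $\log$, Fourier expansion of $\sigma_v$, and divided-difference replacement of $\partial_\theta\sigma_v$), and then to choose the parameters $M_1,M_2,K_1,\mu,\Delta$ and the number of samples $s$ so that each contribution is $\lesssim \epsilon/3$.

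First I would reduce trace-norm errors to operator-norm errors using the Von Neumann trace inequality~\eqref{eq:trace_inequality}: since $\rho$ is a density matrix, $\sum_i \sigma_i(\rho)=1$, so for any operator-valued function $F$ one has $|\tr(\rho F)|\le \|F\|$. Applied to the difference of the exact gradient and the approximation, this gives the split
\begin{align*}
\bigl\lvert \partial_\theta \tr{\rho \log\sigma_v} - \partial_\theta \tr{\rho \log_{K_1,M_1}^s \tilde\sigma_v}\bigr\rvert
&\le \bigl\lVert \partial_\theta[\log\sigma_v - \log_{K_1,M_1}\sigma_v]\bigr\rVert\\
&\quad+ \bigl\lVert \partial_\theta[\log_{K_1,M_1}\sigma_v - \log_{K_1,M_1}\tilde\sigma_v]\bigr\rVert\\
&\quad+ \bigl\lVert \partial_\theta[\log_{K_1,M_1}\tilde\sigma_v - \log_{K_1,M_1}^s\tilde\sigma_v]\bigr\rVert.
\end{align*}
I would then bound the three terms in turn. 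For the first, Duhamel's formula (Lemma~\ref{eq:operator_log_ineq} and the exponential version) turns $\partial_\theta \log_{K_1,M_1}\sigma_v$ into a linear functional of $\partial_\theta \sigma_v$, and the corollary to Lemma~\ref{lem:vanApeldoorn} controls $\|\log\sigma_v - \log_{K_1,M_1}\sigma_v\|$; passing this bound through the integral form of the derivative gives a bound linear in $M_1$ and in $\|\partial_\theta\sigma_v\|\le e^\gamma$, so choosing $M_1$ as in the corollary (with tolerance $\epsilon/3$ rescaled by $e^\gamma$) suffices. For the second term, Lemma~\ref{lem:bounds_z_approx} gives $\|\sigma_v - \tilde\sigma_v\|\le \epsilon_2$, and again Duhamel's formula applied to each Fourier mode $e^{im\pi\sigma_v/2}$ shows that $\|\partial_\theta[\log_{K_1,M_1}\sigma_v - \log_{K_1,M_1}\tilde\sigma_v]\|$ is bounded by a polynomial in $M_1,M_2$ times $\epsilon_2$, so we pick $M_2,K_2$ accordingly. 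For the third term, we invoke the divided-difference remainder in Lemma~\ref{lem:remainder}: the error is controlled by $\|\partial_\theta^{\mu+1}\sigma_v\|/(\mu+1)!$ times the grid spacing $\Delta^\mu$, and since this high-order derivative of the Gibbs state is bounded by $e^{O(\gamma)}$ via iterated Duhamel's formula, choosing $\mu=O(n_h+\log(1/\epsilon))$ and $\Delta$ on the scale indicated by the van~Apeldoorn-style approximation delivers the required precision.

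Finally, the Monte-Carlo error from replacing the integral $\int_0^1 ds(\cdots)$ by the empirical mean $\mathbb{E}_{s\in[0,1]}[\cdots]$ is handled by a standard Hoeffding bound: each sample is an expectation of a unitary observable and hence has magnitude at most $1$, so $s=O((M_1^2/\epsilon)^2\log(1/\delta))$ samples suffice to push this contribution below $\epsilon/3$ with constant success probability. Combining the three error contributions and the sampling error, and absorbing constants into $\tilde O$, yields the claimed bound, with the indicated $\mathrm{poly}$-time classical precomputation coming from computing the Fourier coefficients $c_m,\tilde c_{m'}$ and the Lagrange weights $\mathcal{L}'_{\mu,j}(\theta)$.

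The main obstacle is the second term: propagating the spectral-norm error $\|\sigma_v - \tilde\sigma_v\|\le \epsilon_2$ through the gradient of $\log_{K_1,M_1}$ requires two nested applications of Duhamel's formula (one to differentiate $e^{im\pi\sigma_v/2}$ with respect to $\theta$, and one to control the difference between $e^{im\pi\sigma_v/2}$ and $e^{im\pi\tilde\sigma_v/2}$). Each produces an $s$-integral and multiplies by an extra factor of $M_1$ or $M_2$, so bookkeeping must be careful: without a tight bound here the overall parameter trade-off that gives the stated $\epsilon^{-3}$ scaling would be lost. Once this step is under control, the rest of the argument is a routine combination of standard approximation-theoretic and quantum-simulation tools.
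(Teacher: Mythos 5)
Your decomposition is the same as the paper's: the Von Neumann trace inequality reduces everything to operator norms, and the error is split into the log-approximation, the divided-difference-plus-Fourier-of-$\sigma_v$ approximation, and the finite-sampling error, each pushed below $\epsilon/3$. The set of tools you invoke (the van Apeldoorn-style Fourier lemmas, the Lagrange remainder, Duhamel, and a concentration bound) is also the same. However, two steps in your sketch do not go through as written.

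First, for the term $\bigl\lVert \partial_\theta[\log\sigma_v - \log_{K_1,M_1}\sigma_v]\bigr\rVert$ you propose to take the \emph{function}-approximation bound $\lvert\log(x)-\log_{K_1,M_1}(x)\rvert\le\epsilon_1$ from the corollary to Lemma~\ref{lem:vanApeldoorn} and ``pass it through the integral form of the derivative.'' Uniform closeness of two functions does not control the closeness of their derivatives, so this inference is invalid. The paper instead writes the difference as three explicit tail series (the Taylor tail of the logarithm beyond $K_1$, the tail of the $\sin^l$ expansion beyond $L$, and the truncated binomial sum beyond $M_1$), differentiates each tail term by term, and bounds the differentiated tails directly; this is where the factor $\norm{\partial_\theta\sigma_v}$ and the specific choices of $K_1$, $L$, $M_1$ in eqs.~(\ref{eq:bound_M1})--(\ref{eq:bound_L}) actually come from. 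Your sketch needs this term-by-term differentiation to be legitimate.

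Second, you assert that $\norm{\partial_\theta^{\mu+1}\sigma_v}$ is bounded by $e^{O(\gamma)}$ via iterated Duhamel. It is not uniformly so: the paper's computation gives a bound of order $2^{n_h}\,e\,(2\lvert\lambda_{\max}\rvert)^{\mu+1}\norm{\trh{e^{-H}}}\max_p\left(\tfrac{\mu+1-p}{eZ}\right)^{\mu+1-p}$, which grows exponentially in both $\mu$ and $n_h$. This exponential growth is precisely what forces the choice $\mu\ge n_h+\log(M_1\Lambda/\epsilon)$ so that the $(\Delta/(\mu-1))^{\mu}$ factor from the Lagrange remainder can absorb it; under your stated bound a constant $\mu$ would suffice, so your justification for the scaling of $\mu$ does not follow from your own premises. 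Relatedly, you never derive the bound $\norm{\mathcal{L}'_{\mu,j}}_\infty\le \tfrac{5\mu}{\Delta}\log(\mu/2)$, which enters multiplicatively in the constraints on both $\epsilon_2$ and $\epsilon_s$ and hence in the final $\epsilon^{-3}\Delta^{-3}$ trade-off; without it the ``careful bookkeeping'' you correctly identify as the crux of the second term cannot be completed.
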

Notably the expression in \eqref{eq:full_approximation} can now be evaluated with a quantum-classical hybrid device by evaluating each term in the trace separately via a Hadamard test and, since the number of terms is only polynomial, and then evaluating the whole sum efficiently on a classical device.\\

\begin{proof}
  For the proof we perform the following steps.
  Let $\sigma_i(\rho)$ be the singular values of $\rho$, which are equivalently the eigenvalues since $\rho$ is Hermitian.
  Then observe that the gradient can be separated in different terms, i.e., let $\log_{K_1,M_1}^s\sigma_v$ be the approximation as given in \eqref{eq:full_approximation} for a finite sample of the expectation values $\mathbb{E}_{s}$, then we have
  \begin{align}
	\label{eq:bounds_approximation_error}
    &\left\lvert \partial_{\theta}\tr{\rho \log \sigma_v} - \partial_{\theta} \tr{\rho \log_{K_1,M_1}^s \tilde{\sigma}_v} \right\rvert \leq \nonumber \\
      &\leq \sum_i \sigma_i(\rho) \cdot \left\lVert \partial_{\theta} [\log \sigma_v - \log_{K_1,M_1}^s \tilde{\sigma}_v] \right\rVert \nonumber \\
      &\leq \sum_i \sigma_i(\rho) \cdot \left( \left\lVert \partial_{\theta} [\log \sigma_v - \log_{K_1,M_1} \sigma_v] \right\rVert \right. \nonumber \\
      &+ \left. \left\lVert \partial_{\theta} [\log_{K_1,M_1} \sigma_v - \log_{K_1,M_1} \tilde{\sigma}_v] \right\rVert + \left\lVert \partial_{\theta} [\log_{K_1,M_1} \tilde{\sigma_v} - \log^s_{K_1,M_1} \tilde{\sigma}_v] \right\rVert \right)
  \end{align}
  where the second step follows from the Von-Neumann trace inequality and the terms are (1) the error in approximating the logarithm, (2) the error introduced by the divided difference and the approximation of $\sigma_v$ as a Fourier-like series, and (3) is the finite sampling approximation error.
  We can now bound the different term separately, and start with the first part which is in general harder to estimate.
  We partition the bound in three terms, corresponding to the three different approximations taken above.
  \begin{align*}
      &\left\lVert \partial_{\theta} [\log \sigma_v - \log_{K_1,M_1} \sigma_v] \right\rVert \leq \\
      &\leq \left\lVert \partial_{\theta} \sum_{k=K_1+1}^{\infty} \frac{(-1)^k}{k} \sigma_v^k \right\rVert + \left\lVert \partial_{\theta} \sum_{k=1}^{K_1} \frac{(-1)^k}{k} \sum_{l=L}^{\infty} b_l^{(k)} \sin^l(\sigma_v \pi/2) \right\rVert \\
      & +\left\lVert \partial_{\theta} \sum_{k=1}^{K_1} \frac{(-1)^k}{k} \sum_{l=L}^{\infty} b_l^{(k)}  \left(\frac{i}{2} \right)^l \sum_{m \in [0, \lceil l/2\rceil -M_1] \cup [\lfloor l/2 \rfloor +M_1, l]} (-1)^m e^{i (2m-l)\sigma_v \pi/2} \right\rVert
  \end{align*}
  The first term can be bound in the following way:
  \begin{align}
    \leq \sum_{k=K_1+1}^{\infty} \lVert \sigma_v \rVert^{k-1} = \frac{\lVert \sigma_v\rVert^{K_1}}{1-\lVert \sigma_v \rVert},
  \end{align}
  and, assuming $\norm{\sigma_v}<1$, we hence can set
\begin{equation}
	K_1\geq \log((1-\norm{\sigma_v})\epsilon/9)/\log(\norm{\sigma_v})
\end{equation}
 appropriately in order to achieve an $\epsilon/9$ error. The second term can be bound by assuming that $\norm{\sigma_v \pi}<1$, and chosing $$L\geq\log \frac{\left(\frac{\epsilon}{9\pi K \norm{\frac{\partial \sigma_v}{\partial\theta}}}\right)}{\log(\norm{\sigma_v} \pi)},$$ which we derive by observing that
  \begin{align}
    &\leq \sum_{k=1}^K \frac{1}{k} \sum_{l=L}^{\infty} b_l^{(k)} l \left\lVert \sin^{l-1} (\sigma_v \pi/2)\right\rVert \cdot \left\lVert \frac{\pi}{2} \frac{\partial \sigma_v}{\partial \theta} \right\rVert \\
    & < \sum_{k=1}^K \frac{1}{k} \sum_{l=L+1}^{\infty} b_l^{(k)} \pi \left\lVert \sigma_v \pi\right\rVert^{l-1} \cdot \left\lVert \frac{\partial \sigma_v}{\partial \theta} \right\rVert \\
    &\leq \sum_{k=1}^K \frac{1}{k}  \pi \left\lVert \sigma_v \pi\right\rVert^{L} \cdot \left\lVert \frac{\partial \sigma_v}{\partial \theta} \right\rVert,
  \end{align}
where we used in the second step that $l < 2^l$.
  Finally, the last term can be bound similarly, which yields
  \begin{align}
   &\leq \sum_{k=1}^K \frac{1}{k} \sum_{l=1}^{L} b_l^{(k)} e^{-2(M_1)^2/l} \cdot l \cdot \frac{\pi}{2} \norm{\frac{\partial \sigma_v}{\partial \theta}}\\
   &\leq \sum_{k=1}^K \frac{L}{k} \sum_{l=1}^{L} b_l^{(k)} e^{-2(M_1)^2/L} \frac{\pi}{2} \norm{\frac{\partial \sigma_v}{\partial \theta}} \\
   &\leq \sum_{k=1}^K \frac{L}{k} e^{-2(M_1)^2/L} \frac{\pi}{2} \norm{\frac{\partial \sigma_v}{\partial \theta}} \leq \frac{KL\pi}{2}e^{-2(M_1)^2/L} \norm{\frac{\partial \sigma_v}{\partial \theta}},
  \end{align}
  and we can hence chose $$M_1\geq \sqrt{L \log\left(\frac{9 \norm{\frac{\partial \sigma_v}{\partial \theta}} K_1 L \pi}{2\epsilon}\right)} $$ in order to decrease the error to $\epsilon/3$ for the first term in \eqref{eq:bounds_approximation_error}.\\
  For the second term, first note that with the notation we chose, $\left\lVert \partial_{\theta} [\log_{K_1,M_1} \sigma_v - \log_{K_1,M_1} \tilde{\sigma}_v] \right\rVert$ is the difference between the log-approximation where the gradient of $\sigma_v$ is still exact, i.e., \eqref{eq:exact_gradient_log_approx}, and the version where we approximate the gradient via divided differences and the linear combination of unitaries, given in \eqref{eq:full_approximation}.
  Recall that the first level approximation was given by
  \begin{equation*}
    \sum_{m=-M_1}^{M_1} \frac{i c_m m \pi}{2} \int_0^1 ds\ \tr{ \rho e^{\frac{i s \pi m}{2}\sigma_v} \frac{\partial \sigma_v}{\partial \theta} e^{\frac{i (1-s) \pi m}{2}\sigma_v}},
  \end{equation*}
  where we went from the expectation value formulation back to the integral formulation to avoid consideration of potential errors due to sampling.

  Bounding the difference hence yields one term from the divided difference approximation of the gradient and an error from the Fourier series, which we can both bound separately. Denoting $\partial \tilde p(\theta_k)/\partial \theta$ as the divided difference and the LCU approximation of the Fourier series\footnote{which effectively means that we approximate the coefficients of the interpolation polynomial}, and with $\partial p(\theta_k)/\partial \theta$ the divided difference without approximation via the Fourier series,  we hence have
  \begin{align}
    &\left\lVert \partial_{\theta} [\log_{K_1,M_1} \sigma_v - \log_{K_1,M_1} \tilde{\sigma}_v] \right\rVert \leq \\
    &\leq \left\lvert \sum_{m=-M_1}^{M_1} \frac{i c_m m \pi}{2} \int_0^1 ds\ \tr{ \rho e^{\frac{i s \pi m}{2}\sigma_v} \left( \frac{\partial \sigma_v}{\partial \theta} - \frac{\partial \tilde p(\theta_k)}{\partial \theta} \right) e^{\frac{i (1-s) \pi m}{2}\sigma_v}} \right\rvert \\
    &\leq \frac{M_1 \pi \norm{a}_1}{2}  \int_0^1 ds\ \sum_i \sigma_i(\rho) \left\lVert \frac{\partial \sigma_v}{\partial \theta} -  \frac{\partial \tilde p(\theta_k)}{\partial \theta}\right\rVert \\
&\leq \frac{M_1 \pi \norm{a}_1}{2}  \int_0^1 ds\ \sum_i \sigma_i(\rho) \left( \left\lVert \frac{\partial \sigma_v}{\partial \theta} - \frac{\partial p(\theta_k)}{\partial \theta}\right\rVert  +  \left\lVert \frac{\partial p(\theta_k)}{\partial \theta} -  \frac{\partial \tilde p(\theta_k)}{\partial \theta}\right\rVert \right) \\
&\leq \frac{M_1 \norm{a}_1 \pi}{2} \int_0^1 ds\ \sum_i \sigma_i(\rho) \left(\norm{\frac{\partial^{\mu+1} \sigma_v}{\partial \theta^{\mu+1}}} \left( \frac{\Delta}{\mu-1}\right)^{\mu}\frac{ \max_k (\mu-k)!}{(\mu+1)!} + \sum_{j=0}^{\mu} \lvert \mathcal{L}'_{\mu,j}(\theta_j) \rvert \norm{\sigma_v - \tilde{\sigma}_v} \right) \\
    &\leq \frac{M_1 \norm{a}_1 \pi}{2} \left( \norm{\frac{\partial^{\mu+1} \sigma_v}{\partial \theta^{\mu+1}}} \left( \frac{\Delta}{\mu-1}\right)^{\mu} \frac{\mu!}{(\mu+1)!} +\mu \norm{\mathcal{L}'_{\mu,j}(\theta_j)}_{\infty} \epsilon_2 \right) , \label{eq:bounds_ddfs}
  \end{align}
  where $\norm{a}_1 = \sum_{k=1}^{K_1} 1/k$, and we used in the last step the results of Lemma~\ref{lem:bounds_z_approx}. Under appropriate assumptions on the grid-spacing for the divided difference scheme $\Delta$ and the number of evaluated points $\mu$ as well as a bound on the $\mu+1$-st derivative of $\sigma_v$ w.r.t. $\theta$, we can hence also bound this error.
In order to do so, we need to analyze the $\mu+1$-st derivative of $\sigma_v = \trh{e^{-H}}/Z$ with $Z=\tr{e^{-H}}$.
For this we have
  \begin{align}
\label{eq:bound_mu_derivative}
  \norm{\frac{\partial^{\mu+1} \sigma_v}{\partial \theta^{\mu+1}}} &\leq \sum_{p=1}^{\mu+1} {\mu+1 \choose p}  \norm{\frac{\partial^{p} \trh{e^{-H}}}{\partial \theta^{p}}}  \norm{\frac{\partial^{\mu+1-p} Z^{-1}}{\partial \theta^{\mu+1-p}}} \nonumber\\
&\leq  2^{\mu+1} \max_p \norm{\frac{\partial^{p} \trh{e^{-H}}}{\partial \theta^{p}}}  \norm{\frac{\partial^{\mu+1-p} Z^{-1}}{\partial \theta^{\mu+1-p}}}
  \end{align}
We have that
\begin{align}
\norm{\frac{\partial^{p} \trh{e^{-H}}}{\partial \theta^{p}}} &\leq  \mathrm{dim}(H_h) \norm{\frac{\partial^q e^{-H}}{\partial \theta^q}}
\end{align}
where $\mathrm{dim}(H_h) = 2^{n_h}$. In order to bound this, we take advantage of the infinitesimal expansion of the exponent, i.e.,
\begin{align}
	\norm{\frac{\partial^q e^{-H}}{\partial \theta^q}} &= \norm{\frac{\partial^q }{\partial \theta^q} \lim_{r\rightarrow \infty} \prod_{j=1}^{r} e^{-H/r}} \nonumber \\
	&= \norm{\lim_{r \rightarrow \infty} \left(  \frac{\partial^q e^{-H/r}}{\partial \theta^q} \prod_{j=2}^{r} e^{-H/r} +  \frac{\partial^{q-1} e^{-H/r}}{\partial \theta^{q-1}} \frac{\partial e^{-H/r}}{\partial \theta} \prod_{j=3}^{r} e^{-H/r} + \ldots \right)} \nonumber \\
	&\leq \lim_{r \rightarrow \infty} \left(\norm{\frac{\partial H/r}{\partial \theta}}^q \cdot r^q + O\left(\frac{1}{r}\right) \right) \norm{e^{-H}} = \norm{\frac{\partial H}{\partial \theta}}^q\norm{e^{-H}}  ,
\end{align}
where the last step follows from the fact that we have $r^q$ terms and that we used that the error introduced by the commutations above will be of $O(1/r)$.
Observing that $\partial_{\theta_i}H = \partial_{\theta_i}\sum_j \theta_j H_j = H_i$ and assuming that $\lambda_{max}$ is the largest singular eigenvalue of $H$, we can hence bound this by $\lambda_{max}^q\norm{e^{-H}}$.

\begin{align}
\norm{\frac{\partial^{p} \trh{e^{-H}}}{\partial \theta^{p}}} &\leq  \mathrm{dim}(H_h) \norm{\frac{\partial^q e^{-H}}{\partial \theta^q}} \nonumber \\
	&\leq  \lambda_{max}^{p}  \mathrm{dim}(H_h) \norm{ \trh{e^{-H}}},
\end{align}

\begin{align}
\norm{\frac{\partial^{\mu+1-p} Z^{-1}}{\partial \theta^{\mu+1-p}}} &\leq  \frac{(\mu+1-p)! \lvert \lambda_{max} \rvert^{\mu+1-p}}{Z^{\mu+2-p}} \tr{e^{-H}}\nonumber\\
	&\leq \left(\frac{\mu+1-p}{eZ}\right)^{\mu+1-p} \frac{e}{Z} \lvert \lambda_{max} \rvert^{\mu+1-p}\tr{e^{-H}} \nonumber\\
	&= \left(\frac{\mu+1-p}{eZ}\right)^{\mu+1-p} e \lvert \lambda_{max} \rvert^{\mu+1-p}
\end{align}
We can therefore find a bound for \eqref{eq:bound_mu_derivative} as
\begin{align}
      \norm{\frac{\partial^{\mu+1} \sigma_v}{\partial \theta^{\mu+1}}} &\leq  e 2^{\mu+1+n_h} \lambda_{max}^{\mu+1} \norm{\trh{e^{-H}}}  \max_p \left(\frac{\mu+1-p}{eZ}\right)^{\mu+1-p}.
\end{align}
Plugging this result into the bound from above yields
  \begin{align}
    &\left\lVert \partial_{\theta} [\log_{K_1,M_1} \sigma_v - \log_{K_1,M_1} \tilde{\sigma}_v] \right\rVert  \nonumber \\
    &\leq \frac{M_1 \norm{a}_1 \pi}{2} \left(e2^{\mu+1+n_h} \lambda_{max}^{\mu+1} \norm{\trh{e^{-H}}}  \max_p \left(\frac{\mu+1-p}{eZ}\right)^{\mu+1-p} \left( \frac{\Delta}{\mu-1}\right)^{\mu} \frac{1}{\mu+1}  \right) \nonumber \\
 	&+  \frac{M_1 \norm{a}_1 \pi}{2} \left( \mu \norm{\mathcal{L}'_{\mu,j}(\theta_j)}_{\infty}  \epsilon_2 \right) ,
  \end{align}
Note that under the reasonable assumption that $2 \leq \mu \ll Z$, the maximum is achieved for $p=\mu+1$, and we hence obtain the upper bound
 \begin{align}
	\label{eq:error_ddfs}
  &\frac{M_1 \norm{a}_1 \pi}{2} \left( 2^{n_h} e ( 2 \lvert \lambda_{max} \rvert)^{\mu+1}  \norm{\trh{e^{-H}}} \left( \frac{\Delta}{\mu-1}\right)^{\mu} \frac{1}{\mu+1}  + \mu  \norm{\mathcal{L}'_{\mu,j}(\theta_j)}_{\infty}   \epsilon_2 \right) \nonumber \\
 &\leq  \frac{M_1 \norm{a}_1 \pi}{2} \left( 2^{n_h}e ( 2 \lvert \lambda_{max} \rvert)^{\mu+1}  \norm{\trh{e^{-H}}} \left( \frac{\Delta}{\mu-1}\right)^{\mu} + \mu  \norm{\mathcal{L}'_{\mu,j}(\theta_j)}_{\infty}  \epsilon_2 \right) ,
  \end{align}
and we can hence obtain a bound on $\mu$, the grid point number, in order to achieve an error of $\epsilon/6 > 0$ for the former term, which is given by
\begin{equation}
	\mu \geq (\lvert \lambda_{max} \rvert \Delta)  \exp \left( W \left(
			 \frac{\log \left( 2^{n_h}\frac{
			6  M_1 \norm{a}_1 e^2 \lvert \lambda_{max}\rvert \pi  \norm{\trh{e^{-H}}}}{\epsilon}
			\right)}{2 \lambda_{max} \Delta}
	\right)\right),
\end{equation}
where $W$ is the Lambert function, also known as product-log function, which generally grows slower than the logarithm in the asymptotic limit.
Note that $\mu$ can hence be lower bounded by

\begin{equation}
\label{eq:bound_mu}
	\mu \geq n_h + \log \left( \frac{6 M_1 \norm{a}_1 e^2 \lvert \lambda_{max}\rvert \pi  \norm{\trh{e^{-H}}}}{\epsilon} \right):= n_h + \log \left(\frac{M_1\Lambda}{\epsilon} \right).
\end{equation}
For convenience, let us choose $\epsilon$ such that $n_h + \log(M_1 \Lambda/\epsilon)$ is an integer.  We do this simply to avoid having to keep track of ceiling or floor functions in the following discussion where we will choose $\mu = n_h + \log(M_1\Lambda /\epsilon)$.

For the second part, we will bound the derivative of the Lagrangian interpolation polynomial.
First, note that $\mathcal{L}'_{\mu,j}(\theta)  = \sum_{l=0; l \neq j}^{\mu} \left( \prod_{k=0; k \neq j,l} \frac{\theta - \theta_k}{\theta_j - \theta_k} \right)\frac{1}{\theta_j-\theta_l}$ for a chosen discretization of the space such that $\theta_k -\theta_j= (k-j)\Delta/\mu$ can be bound by using a central difference formula, such that we use an uneven number of points (i.e. we take $\mu = 2\kappa +1$ for positive integer $\kappa$) and chose the point $m$ at which we evaluate the gradient as the central point of the mesh.
Note that in this case he have that for $\mu \ge 5$ and $\theta_m$ being the parameters at the midpoint of the stencil
\begin{align}
  \norm{\mathcal{L}_{\mu,j}'}_{\infty}
  &\leq \sum_{\substack{l\neq j}} \prod_{\substack{k \neq j,l}} \frac{|\theta_m - \theta_k|}{|\theta_j - \theta_k|} \frac{1}{\lvert \theta_l - \theta_j \rvert}
  \leq \frac{(\kappa!)^2}{(\kappa!)^2} \frac{\mu}{\Delta} \sum_{l \ne j} \frac{1}{|l -j|}\nonumber\\
  &\leq  \frac{2\mu}{\Delta} \sum_{l=1}^\kappa \frac{1}{l}\nonumber\\
  &\leq \frac{2\mu}{\Delta} \left( 1+ \int_{1}^{\kappa-1} \frac{1}{\ell} \mathrm{d}\ell\right) = \frac{2\mu}{\Delta}\left(1+\log((\mu-3)/2)\right)\le \frac{5\mu}{\Delta} \log(\mu/2),
\end{align}
where the last inequality follows from the fact that $\mu \ge 5$ and $1+\ln(5/2) < (5/2)\ln(5/2)$.
Now, plugging in the $\mu$ from \eqref{eq:bound_mu}, we find that this error is bound by
\begin{equation}
	\norm{\mathcal{L}_{\mu,j}'}_{\infty} \leq \frac{5 n_h + 5\log \left(\frac{M_1\Lambda}{\epsilon} \right)}{\Delta} \log(n_h/2 + \log \left(\frac{M_1\Lambda}{\epsilon} \right)/2) = \tilde{O}\left(\frac{n_h+\log\left(\frac{M_1\Lambda}{\epsilon} \right)}{\Delta}\right),
\end{equation}
If we want an upper bound of $\epsilon/6$ for the second term of the error in \eqref{eq:error_ddfs}, we hence require

\begin{align}
\epsilon_2 &\leq \frac{\epsilon}{15  M_1 \norm{a}_1 \pi \mu   \norm{\mathcal{L}'_{\mu,j}(\theta_j)}_{\infty} }\nonumber\\
	&\leq \frac{\epsilon \Delta}{15 M_1 \|a\|_1 \pi \left(n_h+\log(M_1 \Lambda/\epsilon)\right)^2 \log((n_h/2) + \log(M_1\Lambda/\epsilon)/2)}\nonumber \\
 	&\leq \frac{\epsilon \Delta}{15 M_1 \|a\|_1 \pi \mu^2 \log((\mu-1)/2)}
\end{align}
We hence obtain that the approximation error due to the divided differences and Fourier series approximation of $\sigma_v$ is bounded by $\epsilon/3$ for the above choice of $\epsilon_2$ and $\mu$. This bounds the second term in  \eqref{eq:bounds_ddfs} by $\epsilon/3$.\\

  Finally, we need to take into account the error $ \left\lVert \partial_{\theta} [\log_{K_1,M_1} \tilde{\sigma_v} - \log^s_{K_1,M_1} \tilde{\sigma}_v] \right\rVert$ which we introduce through the sampling process, i.e., through the finite sample estimate of $\mathbb{E}_s[\cdot]$ here indicated with the superscript $s$ over the logarithm.
Note that this error can be bound straight forward by \eqref{eq:full_approximation}.
We only need to bound the error introduced via the finite amount of samples we take, which is a well-known procedure.
The concrete bounds for the sample error when estimating the expectation value are stated in the following lemma.
\begin{lemma}
	Let $\sigma_{m}$ be the sample standard deviation of the random variable
	\begin{equation}
	\label{eq:random_variable_for_trace}
	\tilde{\mathbb{E}}_{s \in[0,1]} \left[\tr{ \rho e^{\frac{i s \pi m}{2}\sigma_v} e^{\frac{i \pi m'}{2} \sigma_v(\theta_j)} e^{\frac{i (1-s) \pi m}{2}\sigma_v} } \right],
	\end{equation}
	 such that the sample standard deviation is given by $\sigma_k = \frac{\sigma_m}{\sqrt{k}}$.
	Then with probability at least $1-\delta_s$, we can obtain an estimate which is within $\epsilon_s \sigma_m$ of the mean by taking $k=\frac{4}{\epsilon_s^2}$ samples for each sample estimate and taking the median of $O(\log(1/\delta_s))$ such samples.
\end{lemma}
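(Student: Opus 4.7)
The plan is to prove this via a standard median-of-means amplification argument: first show that a single batch of $k$ samples gets within $\epsilon_s\sigma_m$ of the true mean with constant probability (using Chebyshev), and then boost the success probability to $1-\delta_s$ by taking the median of $O(\log(1/\delta_s))$ independent such batches and invoking a Chernoff bound.

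For the first step, I would note that the random variable in \eqref{eq:random_variable_for_trace} is a trace of $\rho$ against a product of unitaries, so it is bounded in magnitude by $1$ and thus has finite sample standard deviation $\sigma_m \leq 1$. Given $k$ i.i.d. copies, the empirical mean $\overline{X}_k$ has standard deviation $\sigma_k = \sigma_m/\sqrt{k}$. Chebyshev's inequality then yields
\begin{equation}
\Pr\!\left[\, \bigl|\overline{X}_k - \mathbb{E}_s[\cdot]\bigr| \geq \epsilon_s \sigma_m \,\right] \leq \frac{\sigma_k^2}{\epsilon_s^2 \sigma_m^2} = \frac{1}{k\, \epsilon_s^2}.
\end{equation}
Choosing $k = 4/\epsilon_s^2$ drives this failure probability down to at most $1/4$, so each batch mean lands inside the target window with probability at least $p = 3/4 > 1/2$.

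For the second step, I would take $N = C\log(1/\delta_s)$ independent batches, each of size $k$, and output the median of the $N$ batch means. Let $Y_i$ be the indicator that batch $i$ is a ``good'' estimate (within $\epsilon_s \sigma_m$ of the mean); the $Y_i$ are i.i.d. Bernoulli with mean $p \geq 3/4$. The median fails only if fewer than $N/2$ of the $Y_i$ equal $1$. By the multiplicative Chernoff bound,
\begin{equation}
\Pr\!\left[\, \sum_{i=1}^N Y_i \leq N/2 \,\right] \leq \exp\!\left(-\frac{N(p-1/2)^2}{2p}\right) \leq \exp\!\left(-\frac{N}{24}\right),
\end{equation}
using $p \geq 3/4$. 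Setting $C = 24$ (or any sufficiently large constant) makes this bound at most $\delta_s$, giving the claimed sample complexity of $k \cdot N = O(\log(1/\delta_s)/\epsilon_s^2)$ in total.

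There is no real obstacle here, this is just the textbook powering lemma applied to the specific random variable in \eqref{eq:random_variable_for_trace}. The only point that needs even cursory justification is that the random variable is bounded (so that $\sigma_m$ is finite and the Chebyshev estimate is meaningful), which follows immediately from unitarity of the exponentials and the fact that $\rho$ is a density operator. The rest is bookkeeping of constants.
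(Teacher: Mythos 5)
Your proposal is correct and follows essentially the same route as the paper: a Chebyshev bound showing each batch of $k=4/\epsilon_s^2$ samples lands within $2\sigma_k=\epsilon_s\sigma_m$ of the mean with probability at least $3/4$, followed by a median-of-means amplification over $O(\log(1/\delta_s))$ batches. Your version merely spells out the Chernoff step and the boundedness of the random variable, which the paper leaves as ``standard techniques.''
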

\begin{proof}
From Chebyshev's inequality taking $k=\frac{4}{\epsilon_s^2}$ samples implies that with probability of at least $p=3/4$ each of the mean estimates is within $2\sigma_k = \epsilon_s \sigma_m$ from the true mean.
Therefore, using standard techniques, we take the median of $O( \log(1/\delta_s))$ such estimates which gives us with probability $1-\delta_s$ an estimate of the mean with error at most $\epsilon_s \sigma_m$, which implies that we need to repeat the procedure $O \left(\frac{1}{\epsilon_s^2}\log \left(\frac{1}{\delta_s}\right) \right)$ times.
\end{proof}
We can then bound the error of the sampling step in the final estimate, denoting with $\epsilon_s$ the sample error, as
\begin{align}
	\sum_{m=-M_1}^{M_1} \sum_{m'=-M_2}^{M_2}  \left\lvert \frac{i c_m c_{m'} m \pi}{2}  \right\rvert \sum_{j=0}^{\mu} \lvert \mathcal{L}'_{\mu,j}(\theta) \rvert \epsilon_s \sigma_m \nonumber\\
	\leq \frac{5 \norm{a}_1 M_1  \epsilon_s \sigma_m \pi \mu^2\log\left(\frac{\mu}{2}\right) }{\Delta} \leq \frac{\epsilon}{3},
\end{align}

We hence find that for
\begin{align}
	\epsilon_s &\leq \frac{\epsilon \Delta }{15 \norm{a}_1 M_1  \sigma_m \pi \mu^2\log\left(\frac{\mu}{2}\right) } \nonumber \\
	&\leq \frac{\epsilon \Delta}{15 M_1 \|a\|_1  \sigma_m \pi \left( n_h + \log(M_1 \Lambda/\epsilon)\right)^2 \log((n_n/2) + \log(M_1\Lambda/\epsilon)/2)}\nonumber \\
 	&\leq \frac{\epsilon \Delta}{15 M_1 \|a\|_1  \sigma_m \pi \mu^2 \log(\mu/2)}
\end{align}
also the last term in \eqref{eq:bounds_approximation_error} can be bounded by $\epsilon/3$, which together results in an overall error of $\epsilon$ for the various approximation steps, which concludes the proof.
\end{proof}
Notably all quantities which occure in our bounds are only polynomial in the number of the qubits.
The lower bounds for the choice of parameters are summarized in the following.
\begin{align}
M_1 &\geq \sqrt{L \log\left(\frac{9 \norm{\frac{\partial \sigma_v}{\partial \theta}} K_1 L \pi}{2\epsilon}\right)} \label{eq:bound_M1} \\
M_2 &\geq \left\lceil \log \left(\frac{4}{\epsilon_2} \right) \sqrt{(2\log{\delta_u^{-1}})^{-1}} \right\rceil \label{eq:bound_M2} \\
K_1 &\geq \log((1-\norm{\sigma_v})\epsilon/9)/\log(\norm{\sigma_v}) \label{eq:bound_K1} \\
K_2 &\geq \frac{\log(4/\epsilon_2)}{\log(\delta_u^{-1})} \label{eq:bound_K2} \\
L &\geq \frac{\log \left(\frac{\epsilon}{9\pi K_1 \norm{\frac{\partial \sigma_v}{\partial\theta}}}\right)}{\log(\norm{\sigma_v} \pi)} \label{eq:bound_L} \\
\mu &\geq n_h + \log \left( \frac{6 M_1 \norm{a}_1 e^2 \lvert \lambda_{max}\rvert \pi  \norm{\trh{e^{-H}}}}{\epsilon} \right):= n_h + \log(M_1 \Lambda/\epsilon) \label{eq:bound_mu} \\
\epsilon_2 &\le \frac{\epsilon \Delta}{15 M_1 \|a\|_1 \pi \left(n_h+\log(M_1 \Lambda/\epsilon)\right)^2 \log((n_h/2) + \log(M_1\Lambda/\epsilon)/2)}  \nonumber \\
	 &\leq \frac{\epsilon \Delta}{15 M_1 \|a\|_1 \pi \mu^2 \log((\mu-1)/2)} \label{eq:bounds_eps2}\\
\epsilon_s &\leq  \frac{\epsilon \Delta}{15 M_1 \|a\|_1  \sigma_m \pi \left( n_h + \log(M_1 \Lambda/\epsilon)\right)^2 \log((n_n/2) + \log(M_1\Lambda/\epsilon)/2)} \nonumber \\
		&\leq \frac{\epsilon \Delta}{15 M_1 \|a\|_1  \sigma_m \pi \mu^2 \log(\mu/2)} \label{eq:bounds_epss}
\end{align}

\subsection{Operationalising}
\label{app:proof_general_algo}
In the following we will make use of two established subroutines, namely sample based Hamiltonian simulation (aka the LMR protocol)~\cite{lloyd2014quantum}, as well as the Hadamard test, in order to evaluate the gradient approximation as defined in \eqref{eq:full_approximation}.
In order to hence derive the query complexity for this algorithm, we only need to multiply the cost of the number of factors we need to evaluate with the query complexity of these routines.
For this we will rely on the following result.
\begin{theorem}[Sample based Hamiltonian simulation~\cite{kimmel2017hamiltonian}]
  \label{thm:sample_based_ham_sim}
	Let $0 \leq \epsilon_h \leq 1/6$ be an error parameter and let $\rho$ be a density for which we can obtain multiple copies through queries to a oracle $O_{\rho}$.
	We can then simulate the time evolution $e^{-i\rho t}$ up to error $\epsilon_h$ in trace norm as long as $\epsilon_h/t \leq 1/(6 \pi)$ with $\Theta(t^2/\epsilon_h)$ copies of $\rho$ and hence queries to $O_{\rho}$.
\end{theorem}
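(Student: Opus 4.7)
My plan is to prove the sample-based Hamiltonian simulation bound by analyzing the SWAP-based subroutine (the LMR protocol) and then quantifying how its error accumulates when iterated. The basic primitive is that if $S$ denotes the swap operator on two copies of the relevant Hilbert space, then for any state $\sigma$ and any small time step $\Delta t$,
\begin{equation}
\partr{1}{e^{-i S \Delta t}(\rho \otimes \sigma) e^{i S \Delta t}} = \sigma - i \Delta t [\rho, \sigma] + O(\Delta t^2) = e^{-i \rho \Delta t}\, \sigma\, e^{i \rho \Delta t} + O(\Delta t^2),
\end{equation}
where the first equality follows from expanding the conjugation to first order in $\Delta t$ and using the identity $\partr{1}{(S)(\rho\otimes\sigma)} = \rho\sigma$ together with its adjoint. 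The second equality matches the ideal channel $\sigma \mapsto e^{-i\rho\Delta t}\sigma e^{i\rho\Delta t}$ up to the same $O(\Delta t^2)$ remainder. I would verify carefully, via the integral (Duhamel) form of the exponential or an explicit second-order Taylor expansion, that the constant in the $O(\Delta t^2)$ term is bounded by a universal constant (the standard estimate gives something like $\tfrac{1}{2}\|[S,[S,\rho\otimes\sigma]]\|_1\Delta t^2 \le c\,\Delta t^2$ for a fixed $c$).

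Next, I would iterate this primitive $n$ times using a fresh copy of $\rho$ at each step, with step size $\Delta t = t/n$. Define the approximate channel $\mathcal{E}_{\Delta t}(\sigma) := \partr{1}{e^{-i S \Delta t}(\rho \otimes \sigma) e^{i S \Delta t}}$ and the ideal unitary channel $\mathcal{U}_{\Delta t}(\sigma) := e^{-i\rho\Delta t}\sigma e^{i\rho\Delta t}$. Because both are CPTP, the diamond norm (and hence the induced trace-norm error on any input) is subadditive under composition:
\begin{equation}
\Norm{\mathcal{E}_{\Delta t}^{\circ n} - \mathcal{U}_{\Delta t}^{\circ n}}_{\diamond} \le n \Norm{\mathcal{E}_{\Delta t} - \mathcal{U}_{\Delta t}}_{\diamond} \le n\, c\, \Delta t^2 = \frac{c\, t^2}{n}.
\end{equation}
Setting $n = \lceil c\, t^2/\epsilon_h \rceil$ then yields trace-norm error at most $\epsilon_h$ and uses exactly $n = \Theta(t^2/\epsilon_h)$ copies of $\rho$, which are the only queries made to the oracle $O_\rho$. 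The restriction $\epsilon_h/t \le 1/(6\pi)$ is what ensures $\Delta t$ is small enough for the Taylor remainder bound used in the single-step argument to be valid with the stated constant.

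The main obstacle, and the reason the theorem is attributed to Kimmel et al.\ rather than to the original LMR paper, is showing that the $\Theta(t^2/\epsilon_h)$ rate is tight, i.e.\ that no better sample scaling is possible. The upper bound above is the straightforward half; the matching lower bound would require a separate information-theoretic argument — typically one reduces a known hard discrimination task (for example, distinguishing $\rho$ from a slightly perturbed state, or a symmetry-based bound using the structure of the swap operator) to Hamiltonian simulation with $t$ and $\epsilon_h$ chosen so that a better-than-$t^2/\epsilon_h$ protocol would violate known sample-complexity lower bounds for state discrimination. I would handle the upper bound in full as sketched above, and then invoke (or sketch) the lower-bound construction from Kimmel et al.\ to close the gap and justify the $\Theta$ (rather than $O$) in the statement. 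For the downstream application in our gradient algorithm only the $O(t^2/\epsilon_h)$ direction is needed, so the upper-bound half of the argument is what actually feeds into Theorem~\ref{thm:complexity_general_algo}.
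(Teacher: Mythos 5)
The paper does not prove this statement at all: Theorem~\ref{thm:sample_based_ham_sim} is imported verbatim as a black-box result from Kimmel et al.~\cite{kimmel2017hamiltonian}, and the surrounding text only uses it as a subroutine. So there is no in-paper proof to compare against; what you have written is a reconstruction of the standard argument for the cited result, and as a reconstruction it is correct in outline. Your single-step analysis is right (indeed, since $S^2=I$ one gets the exact closed form $\mathcal{E}_{\Delta t}(\sigma)=\cos^2(\Delta t)\,\sigma+\sin^2(\Delta t)\,\rho-i\cos(\Delta t)\sin(\Delta t)[\rho,\sigma]$, from which the $O(\Delta t^2)$ trace-norm deviation from $e^{-i\rho\Delta t}\sigma e^{i\rho\Delta t}$ follows with an explicit constant), and the composition step via subadditivity of the diamond norm for CPTP maps is the standard way to accumulate the error to $ct^2/n$. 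You are also right to flag that the $\Theta(\cdot)$ (as opposed to $O(\cdot)$) requires the separate information-theoretic lower bound of Kimmel et al., and that only the upper-bound direction is consumed downstream in Theorem~\ref{thm:complexity_general_algo}; for the purposes of this paper it would have been equally legitimate to simply cite the result, as the authors do. One small caution: the condition $\epsilon_h/t\le 1/(6\pi)$ in the statement is a normalization ensuring the number of steps is at least one and the per-step Taylor remainder constant is valid, so if you carry out the single-step bound explicitly you should check that your universal constant $c$ is consistent with that threshold rather than treating it as automatic.
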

We in particularly need to evaluate terms of the form
\begin{equation}
  \label{eq:error_formula_sample_based_ham_sim}
	\tr{ \rho e^{\frac{i s \pi m}{2}\sigma_v} e^{\frac{i \pi m'}{2} \sigma_v(\theta_j)} e^{\frac{i (1-s) \pi m}{2}\sigma_v}}
\end{equation}
Note that we can simulate every term in the trace (except $\rho$) via the sample based Hamiltonian simulation approach to error $\epsilon_h$ in trace norm.
This will introduce a additional error which we need to take into account for the analysis.
Let $\tilde{U}_{i}, i \in \{1,2,3\}$ be the unitaries such that $\norm{U_i - \tilde{U}_i}_* \leq \epsilon_h$ where the $U_i$ are corresponding to the factors in \eqref{eq:error_formula_sample_based_ham_sim}, i.e.,
$U_1 := e^{\frac{i s \pi m}{2}\sigma_v}$, $U_2 := e^{\frac{i \pi m'}{2} \sigma_v(\theta_j)}$, and $U_3 := e^{\frac{i (1-s) \pi m}{2}\sigma_v}$.
We can then bound the error as follows. First note that $\norm{\tilde U_i} \leq \norm{\tilde U_i - U_i} + \norm{U_i} \leq 1+\epsilon_h$, using Theorem~\ref{thm:sample_based_ham_sim} and the fact that the spectral norm is upper bounded by the trace norm.
\begin{align}
\label{eq:error_propagation_ham_sim}
  \tr{ \rho U_1 U_2 U_3 } &-	\tr{ \rho \tilde U_1 \tilde U_2 \tilde U_3 }  \leq \nonumber \\
  &= \tr{\rho U_1 U_2 U_3 -	\rho \tilde U_1 \tilde U_2 \tilde U_3} \nonumber \\
  &\leq \norm{U_1 U_2 U_3 -	\tilde U_1 \tilde U_2 \tilde U_3} \nonumber \\
  &\leq \norm{U_1 -	\tilde U_1} \norm{\tilde U_2} \norm{\tilde U_3}+ \norm{U_2 -	\tilde U_2} \tilde{U_3} + \norm{U_3 -	\tilde U_3} \nonumber \\
  &\leq \norm{U_1 -	\tilde U_1}_* (1+\epsilon_h)^2 + \norm{U_2 -	\tilde U_2}_* (1+\epsilon_h)+ \norm{U_3 -	\tilde U_3}_* \nonumber \\
  &\leq  \epsilon_h (1+\epsilon_h)^2 + \epsilon_h (1+\epsilon_h) + \epsilon_h = O(\epsilon_h),
\end{align}
neglecting higher orders of $\epsilon_h$, and where in the first step we applied the Von-Neumann trace inequality and the fact that $\rho$ is Hermitian, and in the last step we used the results of Theorem~\ref{thm:sample_based_ham_sim}.
We hence require $O((\max \{ M_1, M_2\} \pi)^2/\epsilon_h)$ queries to the oracles for $\sigma_v$ for the evaluation of each term in the multi sum in \eqref{eq:full_approximation}.
Note that the Hadamard test has a query cost of $O(1)$.
In order to hence achieve an overall error of $\epsilon$ in the gradient estimation we require the error introduced by the sample based Hamiltonian simulation also to be of $O(\epsilon)$.
In order to do so we require $\epsilon_h \leq O(\frac{\epsilon \Delta }{5 \norm{a}_1 M_1 \pi \mu^2 \log(\mu/2)})$, similar to the sample based error which yield the query complexity of
\begin{equation}
  \label{eq:final_comp_complexity_grad_entry_est}
  O\left( \frac{\max \{ M_1, M_2\}^2 \norm{a}_1 M_1  \pi^3 \mu^2 \log(\mu/2)}{\epsilon \Delta}\right)
\end{equation}
Adjusting the constants gives then the required bound of $\epsilon$ of the total error and the query complexity for the algorithm to the Gibbs state preparation procedure is consequentially given by the number of terms in \eqref{eq:full_approximation} times the query complexity for the individual term, yielding
\begin{align}
    O\left(\frac{M_1^2 M_2 \max \{ M_1, M_2\}^2 \norm{a}_1 \sigma_m \pi^3 \mu^3 \log \left(\frac{\mu}{2}\right)}{\epsilon\ \epsilon_s^2 \Delta} \right),
\end{align}
and classical precomputation polynomial in $M_1,M_2,K_1,L,s,\Delta,\mu$, where the different quantities are defines in eq.~(\ref{eq:bound_M1}-\ref{eq:bounds_epss}).

  Taking into account the query complexity of the individual steps then results in Theorem~\ref{thm:complexity_general_algo}.
  We proceed by proving this theorem next.
  \begin{proof}[Proof of Theorem~\ref{thm:complexity_general_algo}]
    The runtime follows straight forward by using the bounds derived in \eqref{eq:final_comp_complexity_grad_entry_est} and Lemma~\ref{lem:proof_error_bound_grads}, and by using the bounds for the parameters $M_1,M_2, K_1,L,\mu,\Delta,s$ given in eq.~(\ref{eq:bound_M1}-\ref{eq:bounds_epss}).
For the success probability for estimating the whole gradient with dimensionality $d$, we can now again make use of the boosting scheme used in \eqref{eq:probability_boost} to be
\begin{align}
\label{eq:final_query_complexity_Gibbs_exact}
    \tilde{O} \left(
      \frac{
        d \norm{a}_1^3 \sigma_m^3 \mu^5 \log^3(\mu/2)
        \mathrm{polylog} \left( \frac{\norm{\frac{\partial \sigma_v}{\partial \theta}}}{\epsilon}, \,
         \frac{n_h^2 \norm{a}_1 \sigma_m}{\epsilon \Delta}\right)
      }
      {\epsilon^3 \Delta^3}
      \log \left( d \right)
    \right),
\end{align}
where $\mu=n_h + \log(M_1 \Lambda/\epsilon)$.\\
Next we need to take into account the errors from the Gibbs state preparation given in Lemma~\ref{thm:sample_based_ham_sim}.
For this note that the error between the perfect Hamiltonian simulation of $\sigma_v$ and the sample based Hamiltonian simulation with an erroneous density matrix denoted by $\tilde{U}$, i.e., including the error from the Gibbs state preparation procedure, is given by
\begin{align}
	\norm{\tilde U - e^{-i\sigma_v t}} &\leq \norm{\tilde U - e^{-i \tilde \sigma_v t}} +  \norm{e^{-i \tilde \sigma_v t} - e^{-i\sigma_v t}} \nonumber \\
	&\leq \epsilon_h  + \epsilon_G t
\end{align}
where $\epsilon_h$ is the error of the sample based Hamiltonian simulation, which holds since the trace norm is an upper bound for the spectral norm, and $\norm{\sigma_v - \tilde{\sigma}_v} \leq \epsilon_G$ is the error for the Gibbs state preparation from Theorem~\ref{thm:Gibbs_state_prep} for a $d$-sparse Hamiltonian, for a cost  $$\tilde{\mathcal{O}} \left(\sqrt{\frac{N}{z}} \norm{H}d \log \left( \frac{\norm{H}}{\epsilon_G}\right) \log \left( \frac{1}{\epsilon_G} \right) \right).$$
From \eqref{eq:error_propagation_ham_sim} we know that the error $\epsilon_h$ propagates nearly linear, and hence it suffices for us to take $\epsilon_G \leq \epsilon_h/t$ where $t = O(\max\{M_1,M_2\})$ and adjust the constants $\epsilon_h \leftarrow \epsilon_h/2$ in order to achieve the same precision $\epsilon$ in the final result.
We hence require
\begin{equation}
	\tilde{\mathcal{O}} \left( \sqrt{\frac{N}{z}} \norm{H(\theta)} \log \left( \frac{\norm{H(\theta)} \max \lbrace M_1,M_2\rbrace }{\epsilon_h}\right) \log \left( \frac{\max \lbrace M_1,M_2\rbrace }{\epsilon_h} \right) \right)
\end{equation}
and using the $\epsilon_h$ from before we hence find that this s bound by
\begin{equation}
	\tilde{\mathcal{O}} \left( \sqrt{\frac{N}{z}} \norm{H(\theta)} \log \left( \frac{\norm{H(\theta)} n_h^2}{\epsilon \Delta}\right) \log \left( \frac{n_h^2}{\epsilon \Delta} \right) \right)
\end{equation}
query complexity to the oracle of $H$ for the Gibbs state preparation. \\
The procedure succeeds with probability at least $1-\delta_s$ for a single repetition for each entry of the gradient.
In order to have a failure probability of the final algorithm of less than $1/3$, we need to repeat the procedure for all $D$ dimensions of the gradient and take for each the median over a number of samples.
Let $n_f$ be as previously the number of instances of the one component of the gradient such that the error is larger than $\epsilon_s \sigma_m$ and $n_s$ be the number of instances with an error $\leq \epsilon_s \sigma_m$ , and the result that we take is the median of the estimates, where we take $n=n_s+n_f$ samples.
The algorithm gives a wrong answer for each dimension if $n_s \leq \left\lfloor \frac{n}{2} \right\rfloor$, since then the median is a sample such that the error is larger than $\epsilon_s \sigma_m$.
Let $p=1-\delta_s$ be the success probability to draw a positive sample, as is the case of our algorithm.
Since each instance of (recall that each sample here consists of a number of samples itself) from the algorithm will independently return an estimate for the entry of the gradient, the total failure probability is bounded by the union bound, i.e.,
\begin{equation}
\pr_{fail} \leq D \cdot \pr \left[n_s \leq  \left\lfloor \frac{n}{2} \right\rfloor \right] \leq D \cdot e^{- \frac{n}{2(1-\delta_s)} \left((1-\delta_s) - \frac{1}{2} \right)^2} \leq \frac{1}{3},
\end{equation}
which follows from the Chernoff inequality for a binomial variable with $1-\delta_s>1/2$, which is given in our case for a proper choice of $\delta_s <1/2$.
Therefore, by taking $n \geq \frac{2-2\delta_s}{(1/2-\delta_s)^2} \log(3D) = O(\log(3D))$, we achieve a total failure probability of at least $1/3$ for a constant, fixed $\delta_s$.
Note that this hence results in an multiplicative factor of $O(\log(D))$ in the query complexity of \eqref{eq:final_query_complexity}.\\
The total query complexity to the oracle $O_{\rho}$ for a purified density matrix of the data $\rho$ and the Hamiltonian oracle $O_H$ is then given by
 \begin{equation}
     \tilde{O} \left(
       \sqrt{\frac{N}{z}}
	\frac{
        d  \log \left( d \right) \norm{H(\theta)}  \norm{a}_1^3 \sigma_m^3 \mu^5 \log^3(\mu/2)
        \mathrm{polylog} \left( \frac{\norm{\frac{\partial \sigma_v}{\partial \theta}}}{\epsilon}, \,
         \frac{n_h^2 \norm{a}_1 \sigma_m}{\epsilon \Delta}, \, \norm{H(\theta)} \right)
      }
      {\epsilon^3 \Delta^3}
    \right),
\end{equation}
which reduces to
  \begin{align}
    \tilde{O} \left(   \sqrt{\frac{N}{z}}
      \frac{D \norm{H(\theta)}
        d   \mu^5
      \alpha
      }
      {\epsilon^3}
    \right),
  \end{align}
hiding the logarithmic factors in the $\tilde{O}$ notation.
\end{proof}

\end{document}